\tikzset{
    -Latex,auto,node distance =1 cm and 1 cm,semithick,
    state/.style ={ellipse, draw, minimum width = 0.7 cm},
    point/.style = {circle, draw, inner sep=0.04cm,fill,node contents={}},
    bidirected/.style={Latex-Latex},
    el/.style = {inner sep=2pt, align=left, sloped}
}
\providecommand{\keywords}[1]
{
  \small	
  \textbf{{Keywords---}} #1
}
\theoremstyle{plain}
\newtheorem{theorem}{Theorem}[section]
\newtheorem{proposition}{Proposition}[section]
\newtheorem{lemma}{Lemma}[section]
\theoremstyle{definition}
\newtheorem{definition}{Definition}[section]
\newtheorem{assumption}{Assumption}[section]
\newtheorem{remark}{Remark}[section]
\renewcommand\thmcontinues[1]{Continued}
\newcommand{\indep}{\perp \!\!\! \perp}
\newcommand{\norm}[1]{\left\|#1\right\|}
\newcommand{\ip}[2]{\left<#1, #2\right>}
\newcommand{\argmax}{\mathop{\mathrm{argmax}}}
\renewcommand{\div}{\mathrm{div}}
\newcommand{\Int}{\mathrm{Int}}
\newcommand{\cof}{\hspace{0.1em}\mathrm{cof}\hspace{0.1em}}
\NewDocumentCommand{\defmathletter}{m}{%
    \expandafter\newcommand\csname b#1\endcsname{\mathbb{#1}}%
    \expandafter\newcommand\csname c#1\endcsname{\mathcal{#1}}%
}
\NewDocumentCommand{\defmathletters}{>{\SplitList{,}}m}{\ProcessList{#1}{\defmathletter}}
\NewDocumentCommand{\defvector}{m}{%
    \expandafter\newcommand\csname v#1\endcsname{\mathbf{#1}}%
}
\NewDocumentCommand{\defvectors}{>{\SplitList{,}}m}{\ProcessList{#1}{\defvector}}
\title{Local Identification in Instrumental Variable Multivariate Quantile Regression Models}
\author{Haruki Kono \thanks{Email: hkono@mit.edu. I am grateful to Alberto Abadie, Isaiah Andrews, Kengo Kato, Dean Li, Anna Mikusheva, Whitney Newey, Ashesh Rambachan, Kazuatsu Shimizu, and participants in MIT econometrics lunch seminar and North American Summer Meeting of Econometric Society for their feedback and helpful discussions. I acknowledge financial support from the Funai Overseas Scholarship and the Jerry A. Hausman Fellowship.}}
\affil{MIT}
\date{\monthyeardate\today}
\begin{document}

\maketitle

\begin{abstract}

In the instrumental variable quantile regression (IVQR) model of \cite{chernozhukov2005iv}, a one-dimensional unobserved rank variable monotonically determines a single potential outcome.
In practice, when researchers are interested in multiple outcomes, it is common to estimate separate IVQR models for each of them.
This approach implicitly assumes that the rank variable in each regression affects only its associated outcome, without influencing others.
In reality, however, outcomes are often jointly determined by multiple latent factors, inducing structural correlations across equations.

To address this limitation, we propose a nonlinear instrumental variable model that accommodates multivariate unobserved heterogeneity, where each component of the latent vector acts as a rank variable corresponding to an observed outcome.
When both the treatment and the instrument are discrete, we show that the structural function in our model is locally identified under a sufficiently strong positive correlation between the treatment and the instrument.

\bigskip

\noindent
\keywords{IVQR, local identification, multivariate quantiles, optimal transport}
\end{abstract}

\section{Introduction} \label{sec:introduction}

The instrumental variable quantile regression (IVQR) model introduced by \cite{chernozhukov2005iv} has become a widely used tool for estimating quantiles of potential outcomes in the presence of endogeneity.
See, for example, \cite{chernozhukov2004effects} and \cite{autor2017effect}.
While the IVQR framework assumes a one-dimensional outcome variable, researchers are often interested in settings with multiple outcomes.
For instance, \cite{chernozhukov2004effects} examine the effect of 401(k) participation on several wealth measures, including total wealth and financial assets.
In such cases, it is common to estimate separate IVQR models for each outcome dimension.
However, this practice raises several conceptual and empirical concerns.

First, running separate models ignores the correlation structure among outcome variables and is therefore silent about their joint distribution.
In the 401(k) example, this approach cannot capture the share of total wealth accounted for by financial assets.

More importantly, estimating separate quantile models makes rank similarity, a crucial assumption of \cite{chernozhukov2005iv}, unrealistic.
To apply the IVQR model in the 401(k) example, one must assume that individuals' holdings of financial assets depend solely on their preferences for those assets.
In reality, individuals allocate wealth across multiple assets simultaneously, and their preferences over financial assets are inherently linked to preferences over other asset categories.

To address these limitations, we propose a new nonlinear model that accommodates multidimensional outcomes that may be correlated with each other.
Specifically, we consider a multivariate extension of the potential outcome framework in \cite{chernozhukov2005iv}.
For a treatment $D = d$, let $Y_d$ denote the corresponding $p$-dimensional potential outcome.
Conditional on covariates, we assume it can be represented as $Y_d = q_d^\ast (U_d)$, where $q_d^\ast$ is a structural function and $U_d$ is a $p$-dimensional rank variable.
The rank variable $U_d$ is a random vector that captures unobserved heterogeneity among observationally identical individuals, and can be interpreted as reflecting latent individual characteristics.

Existing studies of endogenous quantile models impose various monotonicity restrictions on the structural function for identification.
Extending the assumption in \cite{chernozhukov2005iv} that the structural function is the quantile function of the potential outcome, we require its derivative to be symmetric and positive definite.
This restriction implies that an increase in the $i$th component of $U_d$ increases the $i$th component of $Y_d$, while also allowing cross-dimensional effects: changes in one component of $U_d$ can affect other components of $Y_d$ positively or negatively.
Thus, our framework can capture substitutability and complementarity between outcome dimensions---features that the standard IVQR approach cannot accommodate.
Section \ref{sec:examples} presents two examples of structural functions satisfying this symmetry and positive definiteness condition.

Vector-valued functions with symmetric and positive definite derivatives play a central role in optimal transport theory, where they characterize maps minimizing quadratic transportation costs.
Recent developments in statistics have shown that such maps possess many desirable properties that justify viewing them as multivariate analogues of quantile functions (see, e.g., \cite{ekeland2012comonotonic}, \cite{chernozhukov2017monge}, \cite{hallin2021distribution}, \cite{ghosal2022multivariate}).
This perspective motivates us to refer to our framework as an instrumental variable multivariate quantile regression model.

For discrete treatments and instruments, our identification result generalizes the key insight of \cite{chernozhukov2005iv}.
They show that to identify quantiles of one-dimensional potential outcomes under binary treatment, an instrument with binary or richer support satisfying the full-rank condition is required.
For higher-dimensional potential outcomes, one might expect that instruments with more than two support points are necessary.
Surprisingly, our main result, Theorem \ref{thm:local-identification-IVQR-binary}, shows that an instrument with only binary support suffices for identification.
However, this comes at the cost of a stronger relevance condition: the instrument must be sufficiently positively correlated with the treatment, and the strength required increases with the dimension of the outcome vector.

\bigskip

\noindent
\textbf{Related literature.}
Our model builds on the IVQR framework proposed and developed by \cite{chernozhukov2005iv}, \cite{chernozhukov2006instrumental}, \cite{chernozhukov2007instrumental}, and \cite{chernozhukov2013quantile}.
Unlike these studies, we allow for multidimensional outcomes and multidimensional unobserved heterogeneity.

Several related papers consider endogeneity in models with a single outcome variable.
\cite{abadie2002instrumental} analyze quantile treatment effects using the local average treatment effect framework of \cite{imbens1994identification}.
Their approach identifies treatment effects for compliers but is limited to binary treatments.
Extending \cite{chesher2003identification}, \cite{imbens2009identification} adopt a control function approach under the assumption that the selection equation is monotone and the instrumental variable is independent of the unobserved disturbance.
Nonparametric identification in triangular models with discrete instruments is studied by \cite{torgovitsky2015identification}, and further extended to multivariate settings by \cite{gunsilius2023condition}.
\cite{matzkin2008identification} investigates nonparametric identification in simultaneous equation models, where endogenous variables are continuous and their dimension must coincide with that of unobserved heterogeneity.
In contrast, our framework accommodates discrete treatments and allows for arbitrary randomness in the treatment assignment process.
Within the same simultaneous equation framework, \cite{blundell2017individual} propose a nonseparable model that exploits proxy variables for unobserved heterogeneity.

Our key restriction on the structural function requires it to be a multivariate quantile function, as developed by \cite{ekeland2012comonotonic}, \cite{chernozhukov2017monge}, \cite{hallin2021distribution}, and \cite{ghosal2022multivariate}.
Multivariate quantile functions are grounded in optimal transport theory (\cite{villani2003topics}, \cite{villani2009optimal}), which provides a natural framework for extending quantiles to random vectors.
\cite{galichon2018optimal} offers a comprehensive overview of economic applications of optimal transport.
In this paper, we draw on results from this literature to establish the identification of our model.

\bigskip

\noindent
\textbf{Organization:}
This paper is organized as follows.
Section \ref{sec:model} introduces the IV multivariate quantile regression model.
Section \ref{sec:identification} formally states the local identification result for the model and discusses the relationship to existing results.
Section \ref{sec:conclusion} concludes.
All proofs and mathematical preliminaries are provided in the Appendix.

\section{IV multivariate quantile regression model} \label{sec:model}

\subsection{Model}

Let $Y_d$ be a $p$-dimensional potential outcome vector, $D \in \cD$ an endogenous variable (treatment), $X \in \cX$ an exogenous covariate vector, and $Z \in \cZ$ an IV.
For now, the treatment and IV can be either discrete or continuous, although we will focus on the discrete cases for identification in the next section.
Let $\cU \subset \bR^p$ be a compact convex set with a piecewisely $C^1$ boundary $\partial \cU.$
Let $\mu$ be a reference probability measure of which support is $\cU.$
Assume $\mu$ is absolutely continuous with respect to the $p$-dimensional Lebesgue measure.

Our model extends \cite{chernozhukov2005iv} so that it accounts for multidimensional outcome vectors.
The following is the primitives.
\begin{assumption} \label{ass:ivqr}
The random variables $((Y_d)_{d \in \cD}, D, X, Z, (U_d)_{d \in \cD})$ satisfy the following conditions with probability one:
\begin{enumerate} [label=\textcolor{blue}{(A\arabic*)}, ref=A\arabic*]
    \item \label{ass:ivqr-A1} For each $d \in \cD,$ it holds that $U_d \mid X \sim \mu,$ and there exists a function $q_d^\ast : \cU \times \cX \to \bR^p$ such that $Y_d = q_d^\ast (U_d, X),$ $q_d^\ast$ is continuously differentiable in the first variable, and its derivative $D q_d^\ast$ is symmetric and positive definite on $\Int (\cU) \times \cX.$
    \item \label{ass:ivqr-A2} For each $d \in \cD,$ $U_d$ is independent of $Z$ conditional on $X.$
    \item \label{ass:ivqr-A3} $D = \delta(Z, X, \nu)$ for some unknown function $\delta$ and random element $\nu.$
    \item \label{ass:ivqr-A4} Conditional on $(X, Z, \nu),$ $(U_d)_{d \in \cD}$ are identically distributed.
    \item \label{ass:ivqr-A5} The observed random variables consist of $Y \coloneqq Y_D, D, X$ and $Z.$
\end{enumerate}
\end{assumption}

We refer to this framework as an \textit{IV multivariate quantile regression model}.
The term ``multivariate quantile'' follows the literature that extends the classical notion of quantile functions to multivariate random variables.
Further details are provided in Appendix \ref{app:preliminaries}.
See also, for example, \cite{carlier2016vector}, \cite{chernozhukov2017monge}, \cite{hallin2021distribution}, and \cite{ghosal2022multivariate}.

The difference from \cite{chernozhukov2005iv} appears in (\ref{ass:ivqr-A1}).
It follows from Theorem \ref{thm:mccann} that for any reference measure $\mu,$ there exists $q_d^\ast$ with a symmetric and positive \textit{semi}-definite derivative such that $Y_d = q_d^\ast (U_d, X),$ but it is not necessarily regular as specified in (\ref{ass:ivqr-A1}).
Assumption (\ref{ass:ivqr-A1}) assumes the smoothness of $q_d^\ast$ and the \textit{strict} positive definiteness of its derivative.
This assumption holds if both $\mu$ and the distribution of $Y_d$ conditional on $X$ are sufficiently regular, according to the regularity theory of optimal transport (see, for example, \cite{caffarelli1992regularity}).
Hence, for any other sufficiently regular reference measure $\tilde \mu,$ condition (\ref{ass:ivqr-A1}) is satisfied with some $\tilde q_d^\ast$ and $\tilde U_d.$
This implies that any causal parameter defined as a functional of the distributions of the potential outcomes, such as average/quantile treatment effects, is independent of the choice of the reference measure.
Also, the continuity of $q_d^\ast$ imposed in (\ref{ass:ivqr-A1}) implies that the range of $q_d^\ast (\cdot, X)$ is compact and so is the support of $Y_d$ conditional on $X.$

Condition (\ref{ass:ivqr-A2}) implies that the IV is independent of potential outcomes conditional on covariates.
Condition (\ref{ass:ivqr-A3}) is a weak restriction that allows a broad class of assignment rules of treatments.
In particular, $\nu$ in (\ref{ass:ivqr-A3}) can be correlated with potential outcomes.

Condition (\ref{ass:ivqr-A4}) is called the rank similarity (\cite{chernozhukov2005iv}).
The simplest form of rank similarity is rank invariance that requires $U_d = U$ for all $d \in \cD.$
In the wealth accumulation example discussed in Section \ref{sec:introduction} and \ref{sec:examples}, the rank invariance assumes that an individual's preference does not change regardless of their participation to 401(k).
On the other hand, the rank similarity assumes that the preference may change depending on the participation status, but the individual cannot predict the change before deciding whether to participate 401(k) or not.
For more details of these assumptions, see \cite{chernozhukov2005iv} and \cite{chernozhukov2013quantile}.

One limitation of the IV multivariate quantile regression model is that Condition (\ref{ass:ivqr-A4}) depends on the choice of the reference measure $\mu$.
In particular, the fact that Condition~(\ref{ass:ivqr-A4}) holds for a given $\mu$ does not guarantee that it will hold for another measure $\tilde{\mu}$.
This dependence implies that the distribution $\mu$ of unobserved heterogeneity must be specified a priori.
Although strong, such an assumption is common in related literature.
For example, \cite{chernozhukov2021identification} impose a similar condition for identifying hedonic equilibrium models, and also, discrete-choice models typically rely on a fixed distribution for unobserved heterogeneity.

The unobserved random vector $U_d$ captures heterogeneity in outcomes among observationally identical individuals.
It is interpreted as a \textit{rank variable} in the sense that, conditional on covariates $X,$ the $i$th component of $Y_d$ is monotonically determined by the corresponding component of $U_d.$
In particular, under the positive definiteness of $D q_d^\ast$ imposed in Assumption (\ref{ass:ivqr-A1}), we have $\partial q_d^{\ast i}(u_d, x)/\partial u_d^i > 0$, where $q_d^\ast = (q_d^{\ast 1}, \dots, q_d^{\ast p})^\prime$ and $u_d = (u_d^1, \dots, u_d^p)^\prime$.
Moreover, unlike in the one-dimensional setting, the $i$th component of $U_d$ may also influence the $j$th component of $Y_d$ for $i \neq j$.
Specifically, the cross-partial derivative $\partial q_d^{\ast j}(u_d, x)/\partial u_d^i$ may be either positive or negative, as long as $D q_d^\ast$ remains positive definite.
This feature enables the model to capture flexible correlations and potential substitutability or complementarity across different dimensions of the outcome vector.

Technically, any absolutely continuous probability measure on a compact convex set $\cU$ can be chosen as the reference measure $\mu$.
In the absence of a strong reason to prefer a particular specification, it is natural to take $\cU = [0, 1]^p$ and $\mu = U[0, 1]^p$, as in \cite{carlier2016vector}.
Alternatively, one may consider the $p$-dimensional unit ball as $\cU$ and the spherical uniform distribution as $\mu$, following \cite{chernozhukov2017monge} and \cite{hallin2021distribution}.

By condition (\ref{ass:ivqr-A1}), the observed outcome is written as the structural form $Y = q_D^\ast (U, X)$ where $U \coloneqq U_D.$ 
The following representation is the main testable implication of the model.

\begin{theorem} \label{thm:representation}
Suppose Assumption \ref{ass:ivqr} holds.
Then, it holds with probability one that for each measurable $B \subset \cU,$
\begin{align} \label{eq:testable-representation}
    \bP (
        Y \in q_D^\ast (B, X)
        \mid 
        X, Z
    ) 
    = 
    \mu (B)
\end{align} 
where $q_D^\ast (B, X) = \{q_D^\ast (u, X) \mid u \in B\}.$
In particular, $U \mid X, Z \sim \mu$ holds.
\end{theorem}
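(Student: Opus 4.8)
The plan is to reduce the displayed representation to the single statement that the realized rank vector $U = U_D$ satisfies $U \mid X, Z \sim \mu$ (which is also the concluding assertion of the theorem), and then to prove that statement by conditioning on $(X, Z, \nu)$ and combining rank similarity with a fixed reference treatment level.

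First I would record the following consequence of (\ref{ass:ivqr-A1}). For each $d$ and $x$, symmetry of $Dq_d^\ast(\cdot, x)$ on the (simply connected) set $\Int(\cU)$ gives $q_d^\ast(\cdot, x) = \nabla \phi_{d,x}$ for some $C^1$ potential $\phi_{d,x}$, and positive definiteness of $Dq_d^\ast(\cdot, x)$ makes $\phi_{d,x}$ strictly convex; hence $q_d^\ast(\cdot, x)$ is injective on $\Int(\cU)$ and, by continuity, on $\cU$. Since $Y = q_D^\ast(U, X)$ with $U \in \cU$, this yields $\{Y \in q_D^\ast(B, X)\} = \{U \in B\}$ for every measurable $B \subset \cU$, so that (\ref{eq:testable-representation}) is equivalent to $\bP(U \in B \mid X, Z) = \mu(B)$ for all such $B$, i.e.\ to $U \mid X, Z \sim \mu$. (Any residual ambiguity on $\partial \cU$ is immaterial, since $\mu(\partial\cU) = 0$ because $\mu$ is absolutely continuous.)

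It then remains to prove $U_D \mid X, Z \sim \mu$. Fix a bounded measurable $h$ and set $\cG = \sigma(X, Z, \nu)$. By (\ref{ass:ivqr-A3}), $D = \delta(Z, X, \nu)$ is $\cG$-measurable, so $\bE[h(U_D) \mid \cG]$ is obtained from $\bE[h(U_d) \mid \cG]$ by substituting the $\cG$-measurable variable $D$ for $d$ (made rigorous, when $\cD$ is not countable, through a regular conditional distribution of the family $(U_d)_{d \in \cD}$ given $\cG$). By rank similarity (\ref{ass:ivqr-A4}), $\bE[h(U_d) \mid \cG]$ does not depend on $d$, hence equals $\bE[h(U_{d_0}) \mid \cG]$ for an arbitrary fixed $d_0 \in \cD$; therefore $\bE[h(U_D) \mid \cG] = \bE[h(U_{d_0}) \mid \cG]$ almost surely. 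Applying $\bE[\,\cdot \mid X, Z]$ to both sides gives $\bE[h(U_D) \mid X, Z] = \bE[h(U_{d_0}) \mid X, Z]$, which by (\ref{ass:ivqr-A2}) equals $\bE[h(U_{d_0}) \mid X]$, and this equals $\int h \, d\mu$ by (\ref{ass:ivqr-A1}). As $h$ is arbitrary, $U_D \mid X, Z \sim \mu$.

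The main obstacle is conceptual rather than computational: (\ref{ass:ivqr-A3}) explicitly allows $\nu$ to be correlated with the potential outcomes, hence with every $U_d$, so one cannot simply claim $U_D \mid X, Z, \nu \sim \mu$ --- the conditional law given $(X, Z, \nu)$ is an uncontrolled random measure. The resolution is to avoid identifying that measure, using instead only that, by (\ref{ass:ivqr-A4}), it is common to all indices $d$; a single fixed reference index $d_0$ then carries the identifying content of (\ref{ass:ivqr-A1})--(\ref{ass:ivqr-A2}), and integrating over $\nu \mid (X, Z)$ transfers it to the realized $U_D$. The remaining technicalities --- a measurable version of ``identically distributed'' when $\cD$ is a continuum, and the null-set bookkeeping near $\partial\cU$ --- are routine.
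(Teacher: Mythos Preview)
Your proof is correct and follows essentially the same approach as the paper's: use injectivity of $q_d^\ast(\cdot,x)$ (the paper invokes its Theorem~\ref{thm:ot-map-is-bijective}, you argue directly via the convex potential) to reduce to $U\mid X,Z\sim\mu$, then condition on $(X,Z,\nu)$, use rank similarity to replace $U_D$ by $U_{d_0}$ for a fixed $d_0$, and apply (\ref{ass:ivqr-A2}) and (\ref{ass:ivqr-A1}). Your treatment is in fact a bit more careful than the paper's about the boundary null set and the uncountable-$\cD$ case.
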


As the LHS of (\ref{eq:testable-representation}) is determined by the joint distribution of the observable variables, the equation gives a conditional moment restriction.
In particular, any candidate structural function $(q_d){d \in \cD}$ must satisfy (\ref{eq:testable-representation}) with $q_d^\ast$ replaced by $q_d$.
For identification, we explore conditions under which $(q_d^\ast)_{d \in \cD}$ is the only function that satisfies the equation in the subsequent section.

For notational simplicity, let $q_d^\ast(u) \coloneqq q_d^\ast(u, x)$, suppressing the dependence on covariates $x$.
All subsequent analysis should be understood as conditional on $X = x$.
This simplification does not affect the identification results, since the conditional distribution of the observables given $X$ is identifiable.

When outcomes are univariate and the reference distribution is the uniform distribution on the unit interval, as in \cite{chernozhukov2005iv}, $q_d^\ast (\tau)$ is the $\tau$-quantile of $Y_d$ and satisfies
\begin{align*}
    \bP (
        Y \leq q_D^\ast (\tau)
        \mid 
        Z = z
    ) 
    = 
    \tau
\end{align*}
for all $z \in \cZ,$ which corresponds to setting $\mu = U [0, 1]$ and $B = [0, \tau]$ in (\ref{eq:testable-representation}).
If the treatment variable is supported on a finite set $\cD = \{0, \dots, m - 1\}$, this is a simultaneous equation with $m$ unknown variables $q_0^\ast (\tau), \dots, q_{m - 1}^\ast (\tau).$
Hence, if the instrument $Z$ takes on more than or equal to $m$ values and the equation system is non-degenerate, it is expected that the solution is unique, at least locally.
Indeed, \cite{chernozhukov2005iv} show that this observation is correct.
Checking the condition for each quantile level $\tau$ establishes the nonparametric identification of the structural function $q_d.$

This approach works in one dimension because, for each $\tau$, the values $(q_d^\ast(\tau))_{d \in \cD}$ are determined by the conditional moment restriction (\ref{eq:testable-representation}) when we take the set $B = [0, \tau]$.
In higher dimensions, however, this argument no longer applies.
To see why, consider the case $p = 2$ with $\mu = U[0,1]^2$ and fix $(\tau^1, \tau^2) \in [0,1]^2$.
For any set $B \subset [0,1]^2$, equation (\ref{eq:testable-representation}) does not isolate $(q_d^\ast(\tau^1, \tau^2))_{d \in \cD}$ alone, because the image $q_d^\ast(B)$ can be fully nonlinear.
In particular, the image depends not only on the value of $q_d^\ast$ at $(\tau^1, \tau^2)$ but also on its behavior along the boundary $\partial B$.
Since the set $q_d^\ast(\partial B)$ is an infinite-dimensional object, evaluating (\ref{eq:testable-representation}) for a fixed $B$ does not yield identification when the instrument has finite support.
To address this issue, we instead interpret (\ref{eq:testable-representation}) as a measure-valued equation, allowing us to exploit the relationships between $q_d^\ast(B)$ and $q_d^\ast(B')$ for different sets $B$ and $B'$.
Such relationships are unnecessary in the univariate case but are crucial for identifying multivariate structural functions.

\subsection{Examples} \label{sec:examples}

\textbf{Example 1.}
We first consider an extension of the saving model proposed by \cite{chernozhukov2004effects}.
The paper investigates the effects of 401(k) participation on wealth accumulation.
As measures of wealth, they consider total wealth, net financial assets, and net non-401(k) financial assets.
Since \cite{chernozhukov2004effects} apply the IVQR model to these variables separately, their analysis does not capture the correlation among them.
For example, while total wealth is defined as net financial assets plus other assets (e.g., housing equity and the value of business, property, and motor vehicles), their model has no implication for how people distribute total wealth into these sub categories.
We shall see that our multivariate model can take both variables into consideration simultaneously and allows us to discuss their joint distribution.

Let $U_d = (U_d^F, U_d^O)^\prime \sim \mu$ represent the preference for net financial assets and other assets under participation status $d \in \{0, 1\}.$
Let $y_d = (y_d^F, y_d^O)^\prime$ be net financial assets and other assets, respectively.
An individual with type $U_d$ whose portfolio is $y_d$ receives a quasi-linear utility $U_d^\prime y_d + \psi_d (y_d),$ where $\psi_d$ is a deterministic concave function.\footnote{Notice that covariates are suppressed for simplicity.}
Assuming an interior solution, the individual chooses the optimal portfolio $Y_d = (- \nabla \psi_d)^{-1} (U_d),$ which follows by the first order condition.
It can be seen that the function $q_d (u) \coloneqq (- \nabla \psi_d)^{-1} (u)$ has a symmetric and positive definite derivative by the concavity of $\psi_d.$

The individual determines whether to participate 401(k) based on the eligibility $Z$ of 401(k) and unobservable heterogeneity $\nu.$
This is represented by  $D = \delta (Z, \nu).$
The eligibility variable $Z$ is considered to work as a valid IV.
The joint distribution of $(Y \coloneqq Y_D, D, Z)$ fits into the multivariate IVQR model under $Z \indep U_d$ and the rank invariance/similarity.

With $q_d$ identified, we can obtain the joint distribution of net financial assets and other assets.
This allows us to discuss the effect of 401(k) participation on the portfolio selection, which is impossible in the standard IVQR framework.

\bigskip

\noindent
\textbf{Example 2.}
We now show how a simple discrete choice problem fits naturally into the IV multivariate quantile framework.
This example is inspired by \cite{shi2018estimating} and \cite{fosgerau2020discrete}.
Consider a retailer offering $p$ differentiated goods in two sets of stores, indexed by $d \in \{0, 1\}.$
The products' qualities are captured by a latent quality vector $U = (U^1, \dots, U^p)^\prime \sim \mu,$ where $\mu$ is assumed known (e.g., from in-store audits of freshness and display quality).
Each consumer in district $d$ has an unobservable preference vector $\varepsilon = (\varepsilon^1, \dots, \varepsilon^p)^\prime \sim \nu_d.$
The preference distributions $\nu_0, \nu_1$ are unknown and may differ because district $1$ is exposed to a marketing campaign.
By choosing product $i,$ a consumer receives utility $U^i + \varepsilon^i.$

The analyst never observes $U$ or $\varepsilon$. 
Instead, one observes the aggregate market share vector $s_d (U) \coloneqq (s_d^1 (U), \dots, s_d^p (U))$ if the store is in district $d,$ where
$$
  s_d^i(u)
  \coloneqq
  \int
    \bI
    \left\{
        i
        =
        \argmax_{j=1,\dots,p}
        (u^j + \varepsilon^j)
    \right\}
  \,d\nu_d(\varepsilon).
$$
Economically, $s_d (u)$ represents how a marginal improvement in the quality $u$ redistributes sales across all $p$ goods under the taste regime~$\nu_d$.

The Williams-Daly-Zachary theorem implies that the surplus function defined as
$$
    W_d (u)
    \coloneqq
    \int
        \max_{i = 1, \dots p}
        \left(
            u^i + \varepsilon^i
        \right)
    d \nu_d (\varepsilon)
$$
is related to the market share vector via the equation $s_d (u) = \nabla W_d (u)$ (see, e.g., \cite{mcfadden1981econometric}).
It is not hard to see that $W_d$ is a convex function.
Also, it is differentiable under mild regularity on $\nu_d.$ 
Hence, $s_d$ has a symmetric and positive definite derivative.

Let $D$ denote the district, or equivalently the treatment status, ($0$ or $1$) and suppose there exists a valid instrument $Z$ (e.g. randomized ad‐exposure eligibility). Then we observe i.i.d. draws of $(Y, D, Z)$ with $Y = s_D(U),$ $U \sim \mu,$ and $Z \indep U.$
This setup satisfies the rank invariance and exogeneity conditions of the IV multivariate quantile model.

Applying a separate IVQR mode of \cite{chernozhukov2005iv} to each product discards the rich cross‐product relationships.
For instance, one-dimensional approaches cannot impose the natural constraint $\sum_{i=1}^p s_d^i(u) \leq 1,$ nor capture correlations in $U$ arising from similar products facing related quality shocks.  
The multivariate model preserves these economically meaningful cross‐product patterns while identifying the two demand-quality mappings $s_0$ and $s_1.$

\subsection{Comparison with \cite{chernozhukov2005iv}} \label{sec:comparison-CH05}
Even when multiple outcome variables are present, one might consider applying the one-dimensional quantile model of \cite{chernozhukov2005iv} to each component separately.
However, this approach typically leads to the violation of the rank similarity assumption of \cite{chernozhukov2005iv}.
To see this, we revisit the wealth accumulation example of Section \ref{sec:examples}.
For $p = 2,$ consider a binary treatment environment and assume that $(Y_0, Y_1, D, Z, U_0, U_1)$ satisfies Assumption \ref{ass:ivqr} with the rank invariance $U = U_0 = U_1$ and $D = \delta (Z, U, \eta),$ where $\eta$ is a random variable independent of all the other variables.
Recall that $Y_d = (Y_d^F, Y_d^O)^\prime$ is the vector consisting of net financial assets and other assets, $D$ is the participation status of 401(k), $Z$ is its eligibility, and $U = (U^F, U^O)^\prime$ is the preference for the corresponding assets.
The potential outcome for treatment $d$ is componentwisely represented as
\begin{align*}
    \begin{pmatrix}
        Y_d^F \\
        Y_d^O
    \end{pmatrix}
    =
    \begin{pmatrix}
        q_d^{\ast F} (U) \\
        q_d^{\ast O} (U)
    \end{pmatrix}
    .
\end{align*}
Since $U$ and $Z$ are assumed independent, it is expected that the treatment effect of $D$ is identified if $Z$ is sufficiently informative.

Consider an empirical researcher who is interested only in net financial assets $Y_d^F.$
In this case, it is common to apply the standard quantile model to the data $(Y^F, D, Z),$ as \cite{chernozhukov2004effects} do.
To see that this practice leads to the violation of the rank similarity assumption, notice that the structural equation under the one-dimensional model is $Y_d^F = \tilde q_d (\tilde U_d),$ where $\tilde U_d \sim U [0, 1]$ and $\tilde q_d$ is the quantile function of $Y_d^F.$
Then, $\tilde U_d$ is measurable in $U$ since $\tilde U_d = \tilde q_d^{-1} (q_d^{\ast F} (U)).$
The rank similarity requires $\tilde U_0 \overset{d}{=} \tilde U_1 \mid Z, U, \eta,$ but this is satisfied only when $\tilde q_0^{-1} \circ q_0^{\ast F} = \tilde q_1^{-1} \circ q_1^{\ast F},$ which does not hold in general.\footnote{For a counterexample, consider $q_0^\ast (u) = u$ and $q_1^\ast (u) = ((u^1 + u^2) / 2, (u^1 + u^2) / 2)^\prime.$}
In other words, the instrument $Z$ is invalid in the sense that it is correlated with the unobserved heterogeneity $\tilde U \coloneqq \tilde U_D$ via the treatment variable $D.$

The failure of rank similarity arises because $q_d^{\ast F}$ depends not only on $U^F$ but also on $U^O,$ whereas \cite{chernozhukov2005iv} do not allow such dependence---an assumption that is often unrealistic.
Specifically, applying the framework of \cite{chernozhukov2005iv} to the wealth accumulation example would require assuming that individuals determine their holdings of financial assets solely based on their preference for financial assets.
In reality, individuals also consider their preference for other assets, since it affects how much they can allocate to financial assets.
Our multivariate quantile model accommodates such interdependence by allowing the potential outcomes to depend jointly on multiple unobservables $U = (U^F, U^O)^\prime,$ thereby resolving this limitation.

In summary, the IV multivariate quantile model enables to consider multiple outcome variables at the same time, since it captures the correlation between different outcomes.
Moreover, even if only some of the outcome variables are of interest, our model can alleviate the endogeneity by considering other outcomes together.

\section{Identification} \label{sec:identification}

In this section, we consider the identification problem of the structural functions.
We assume that the treatment and IV are supported on the same finite set, i.e., $\cD = \cZ = \{0, 1, \dots, m - 1\}.$
We will further restrict our attention to binary treatments/IVs in the later part of this section.
More general cases are discussed in Appendix \ref{app:generalization}.

The representation (\ref{eq:testable-representation}) of Theorem \ref{thm:representation} and the change-of-variables formula imply that the true structural function $q^\ast = (q_0^\ast, \dots, q_{m - 1}^\ast)^\prime$ solves the measure-valued equation
\begin{align} \label{eq:system-monge-ampere}
    \mu (d u)
    =
    \sum_{d \in \cD}
    f_{d, z} (q_d (u))
    \det (D q_d (u))
    d u
    \ \ \text{for} \
    z 
    \in 
    \cZ
    ,
\end{align}
where 
\begin{align*}
    f_{d, z} (y) 
    \coloneqq 
    \frac{\partial}{\partial y} \bP (Y \leq y, D = d \mid Z = z)
    =
    \left(
        \frac{\partial}{\partial y} \bP (Y \leq y \mid D = d, Z = z)
    \right)
    \bP (D = d \mid Z = z)
    .
\end{align*}

For a scalar function $a : \cU \to \bR,$ a vector-valued function $b : \cU \to \bR^p,$ and a matrix-valued function $M : \cU \to \bR^{p \times p},$ we define three types of supremum norms as follows:
\begin{align*}
    \norm{a}_\infty
    &\coloneqq
    \sup_{u \in \cU}
    |a (u)|
    ,
    \\
    \norm{b}_\infty
    &\coloneqq
    \max_{1 \leq i \leq p}
    \norm{b^i}_\infty
    ,
    \\
    \norm{M}_\infty
    &\coloneqq
    \sup_{u \in \cU}
    \sup_{
        \substack{
            v \in \bR^p \\
            \norm{v} = 1
        }
    }
    \norm{M (u) v}
    .
\end{align*}
For a compact set $K$ in a Euclidean space, let $C^k (K; \bR^s)$ be the set of $\bR^s$-valued functions on $K$ that are $k$-times continuously differentiable on $\Int (K)$ and the derivatives can be continuously extended to $K.$

Our identification result below shows that $q = q^\ast$ is the locally unique solution to (\ref{eq:system-monge-ampere}) in a certain function class.
To state the result formally, consider the normed space $\cQ \coloneqq \left(C^2 (\cU; \bR^p)\right)^m$ equipped with $\norm{q}_\cQ \coloneqq \max_{d \in \cD} \norm{q_d}_\infty.$
Also, for fixed constants $\overline \lambda > \underline \lambda > 0,$ define a subset $\tilde \cQ$ of $\cQ$ as
\begin{align*}
    \tilde \cQ
    \coloneqq
    \left\{
        q \in \cQ
        \ \ \Big | \
        \begin{array}{l}
            \text{(i)} \ \text{$D q_d$ is symmetric and positive definite} \\
            \text{(ii)} \ \underline \lambda < \lambda_{\text{min}} (D q_d) \leq \lambda_{\text{max}} (D q_d) < \overline \lambda
        \end{array}
    \right\}
    ,
\end{align*}
where $\lambda_{\text{max}} (A)$ is the largest eigenvalue of $A,$ and define $\lambda_{\text{min}} (A)$ similarly.
For a constant $K > 0,$ the space in which the parameter is identified is
\begin{align*}
    \cQ_0
    \coloneqq
    \left\{
        q^\ast + \alpha h \in \tilde \cQ
        \mid
        \alpha \geq 0
        ,
        \norm{h}_\cQ = 1
        ,
        \max_{d \in \cD} \norm{D h_d}_\infty \leq K
    \right\}
    .
\end{align*}

Since we are interested in identification, we assume the correct specification.
\begin{assumption} \label{ass:correct-specification}
$q^\ast \in \cQ_0.$
\end{assumption}

Assumption \ref{ass:correct-specification} is weak under Assumption \ref{ass:ivqr}, as it just requires that $q^\ast$ be smooth and that it have moderate derivatives.

The conditions in the definition of $\cQ_0$ restrict the class of admissible deviations from the true structural function.
Such restrictions are common in the literature on nonparametric IV identification.
To see their role, note that identifying $q$ requires inverting the linearization of the system of equations (\ref{eq:system-monge-ampere}).
Although this linearized system is invertible under the standard full-rank condition, its inverse is generally discontinuous because of the infinite-dimensional nature of the problem, rendering the identification ill-posed.
This difficulty is resolved by restricting the identification domain to $\cQ_0,$ thereby ruling out pathological deviations from the truth.
Similar regularity conditions are imposed in related work, including \cite{chen2014local} and \cite{centorrino2024iterative}.
In particular, Section 2.3 of \cite{chen2014local} presents an example where identification fails once an analogous restriction is removed.

We further impose two regularity conditions.
Let $\cY_d \subset \bR^p$ be the support of the distribution of $Y_d.$
\begin{assumption} \label{ass:regular-boundary}
For each $d \in \cD,$ the support $\cY_d$ is a convex compact set with a piecewisely $C^2$ boundary $\partial \cY_d.$
\end{assumption}

\begin{assumption} \label{ass:smooth-densities}
For each $(d, z) \in \cD \times \cZ,$ $f_{d, z} \in C^1 (\cY_d; \bR).$
\end{assumption}

Assumption \ref{ass:regular-boundary} requires that the support of a potential outcome be regular.
The convexity of the support is necessary for many purposes in the optimal transport theory (see, for example, \cite{villani2003topics} and \cite{figalli2017monge}).
The smoothness of the boundary is rarely a problem in practice, as it allows for kinks at some points.
Combined with Assumption \ref{ass:regular-boundary}, Assumption \ref{ass:smooth-densities} implies that $f_{d, z}$ is bounded.

We consider the identification of structural functions in the following sense.
\begin{definition} \label{def:identification}
Suppose $q^\ast \in \cQ_0.$
We say the structural function $q^\ast$ is \textit{identified} in $\cQ_0$ if the following holds:
if a set of random variable $((\tilde Y_d)_{d \in \cD}, \tilde D, \tilde Z, (\tilde U_d)_{d \in \cD})$ satisfies Assumption \ref{ass:ivqr} with some $q \in \cQ_0 \setminus \{q^\ast\},$ then the joint distribution of $(\tilde Y \coloneqq \tilde Y_{\tilde D}, \tilde D, \tilde Z)$ is different from that of $(Y, D, Z).$
Also, we say the structural function $q^\ast$ is \textit{locally identified} in $\cQ_0$ if there exists $\varepsilon > 0$ such that $q^\ast$ is identified in $\{q \in \cQ_0 \mid \norm{q - q^\ast}_\cQ < \varepsilon\}.$
\end{definition}

In what follows in this section, we focus on the case of $m = 2,$ i.e., $\cD = \cZ = \{0, 1\},$ to develop the idea clearly.
Identification results for nonbinary treatments are given in Appendix \ref{app:generalization}.

The following is our main theorem.
\begin{theorem} \label{thm:local-identification-IVQR-binary}
Suppose that Assumptions \ref{ass:ivqr}, \ref{ass:correct-specification}, \ref{ass:regular-boundary}, \ref{ass:smooth-densities} hold.
Then, $q^\ast$ is locally identified in $\cQ_0$ if
\begin{align} \label{eq:positive-definiteness-binary}
    4
    f_{0, 0} (y_0)
    f_{1, 1} (y_1)
    >
    \left(\frac{\overline \lambda}{\underline \lambda}\right)^{p + 1}
    (
        f_{0, 1} (y_0)
        +
        f_{1, 0} (y_1)
    )^2
\end{align}
holds for almost all $(y_0, y_1) \in \cY_0 \times \cY_1$ with respect to the Lebesgue measure.
\end{theorem}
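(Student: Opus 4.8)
The plan is to prove local identification by linearizing the measure-valued system \eqref{eq:system-monge-ampere} at $q^\ast$ and showing that the linearized operator is coercive on the admissible perturbation directions; inequality \eqref{eq:positive-definiteness-binary} is exactly the condition that makes a certain pointwise quadratic form positive definite. Write $\Phi_z(q)(u) \coloneqq \sum_{d \in \cD} f_{d,z}(q_d(u))\det(D q_d(u))$, so that \eqref{eq:system-monge-ampere} reads $\Phi_0(q) = \Phi_1(q) = d\mu/du$. Suppose, toward a contradiction, that there are competitors $q^\ast + \alpha h \in \cQ_0$ with $\alpha \to 0^+$, $\norm{h}_\cQ = 1$, and $\max_{d} \norm{D h_d}_\infty \le K$, each reproducing the observed law and hence solving $\Phi_0 = \Phi_1$. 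Subtracting the equations for $q^\ast + \alpha h$ and $q^\ast$ at each $z$ and dividing by $\alpha$ gives, as $\alpha \to 0$, $D\Phi_z(q^\ast)[h] = o(1)$, the remainder being controlled uniformly because the admissible $h$ are bounded in $C^1$ and $q^\ast$ is bounded in $C^2$ --- this is the role of the restriction to $\cQ_0$, which tames the ill-posedness noted before the theorem. By Arzel\`a--Ascoli a subsequence of these $h$ converges uniformly to some $h$ with $\norm{h}_\cQ = 1$ and $D h_d$ converging weakly-$\ast$; passing to the limit yields $D\Phi_0(q^\ast)[h] = D\Phi_1(q^\ast)[h] = 0$, so it suffices to show this linearized system forces $h = 0$.

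To compute the linearization, differentiate the determinant and use that $D q_d^\ast$ is symmetric:
\[
    D\Phi_z(q^\ast)[h]
    =
    \sum_{d \in \cD}
    \det(D q_d^\ast)
    \left(
        \nabla f_{d,z}(q_d^\ast) \cdot h_d
        +
        f_{d,z}(q_d^\ast)\,\tr\!\big((D q_d^\ast)^{-1} D h_d\big)
    \right).
\]
The key algebraic observation is that, since $q_d^\ast$ is the gradient of a convex function, the cofactor matrix $\cof(D q_d^\ast)$ is divergence-free (the Piola identity), so the right-hand side equals $\div_u\!\big(\sum_{d} f_{d,z}(q_d^\ast)\,\cof(D q_d^\ast)\,h_d\big)$. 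Writing $\xi_z \coloneqq \sum_{d} f_{d,z}(q_d^\ast)\,\cof(D q_d^\ast)\,h_d$, the linearized system becomes $\div \xi_0 = \div \xi_1 = 0$ on $\cU$. Moreover, since $q^\ast + \alpha h$ lies in $\tilde\cQ$, $D h_d$ is symmetric, so $h_d = \nabla \eta_d$ for scalar potentials $\eta_d$.

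The heart of the argument is an energy estimate: multiply $\div \xi_z = 0$ by $\eta_z$, integrate over $\cU$, integrate by parts, and sum over $z$ to obtain
\[
    \int_\cU
    \sum_{z \in \cZ,\, d \in \cD}
    f_{d,z}(q_d^\ast(u))\,
    h_z(u)^\top \cof(D q_d^\ast(u))\, h_d(u)
    \, du
    =
    \text{(boundary terms)} .
\]
At each $u$ the integrand is a quadratic form in $(h_0(u), h_1(u))$ whose diagonal blocks carry $f_{0,0}, f_{1,1}$ and whose off-diagonal blocks carry $f_{0,1}, f_{1,0}$. Bounding the cross terms by Cauchy--Schwarz in the inner product induced by $\cof(D q_d^\ast)$, and then controlling the resulting ratios of eigenvalues and Jacobians of $D q_d^\ast$ by $\underline\lambda$ and $\overline\lambda$ --- which contributes $p$ powers of $\overline\lambda / \underline\lambda$ from bounding determinants and one more from bounding eigenvalues of $D q_d^\ast$ --- reduces positive definiteness of this form, at the point $(q_0^\ast(u), q_1^\ast(u))$, to precisely \eqref{eq:positive-definiteness-binary} (using that the $f_{d,z}$ are continuous, so the a.e.\ hypothesis on $\cY_0 \times \cY_1$ carries over to the relevant points). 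Once the boundary terms are shown to vanish, the left-hand side is a nonnegative integral equal to zero; hence the integrand vanishes a.e., so $h_0 = h_1 = 0$ a.e., so $h = 0$, contradicting $\norm{h}_\cQ = 1$.

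The step I expect to be the main obstacle is disposing of the boundary terms, together with making the linear-to-nonlinear passage rigorous. The boundary terms involve $\xi_z \cdot n$ on $\partial\cU$; since $q_d^\ast$ is a homeomorphism between compact convex sets it sends $\partial\cU$ onto $\partial\cY_d$, and one shows $\cof(D q_d^\ast)\,n_\cU \parallel n_{\cY_d}$, so these terms reduce to boundary integrals of $h_d \cdot n_{\cY_d}(q_d^\ast)$. Killing them requires the multivariate counterpart of the fact that, in the one-dimensional IVQR model, the endpoint values $q_d(0)$ and $q_d(1)$ are pinned to the endpoints of the relevant supports; here this should become a \emph{one-sided} restriction on the admissible directions $h_d$ along $\partial\cU$, consistent with $\cQ_0$ being a cone of perturbations. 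Establishing this boundary geometry --- drawing on the regularity theory of optimal transport between convex domains --- and combining it with the uniform control of the nonlinear remainder afforded by $\cQ_0$, so that a single coercivity constant survives the limiting argument, is where the real work lies.
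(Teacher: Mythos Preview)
Your outline matches the paper's proof closely: the divergence form via Piola's identity, testing against the potentials $\eta_z$ with $h_z = \nabla\eta_z$, the pointwise quadratic form whose positive definiteness is exactly \eqref{eq:positive-definiteness-binary}, and the Arzel\`a--Ascoli compactness at the end are all the right ingredients. Two points, however, need correction.

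First, the boundary terms do \emph{not} vanish; they merely have a sign. The paper proceeds by first proving a separate support-identification lemma: under \eqref{eq:positive-definiteness-binary} one has $\cY_d = \bigcup_z \overline{\{f_{d,z}>0\}}$, so any competitor reproducing the observed law must satisfy $q_d(\cU) = \cY_d$. This lets one restrict to perturbations with $(q_d^\ast + \alpha h_d)(\cU) = \cY_d$. Then, on $\partial\cU$, convexity of $\cY_d$ gives $\nu^d(q_d^\ast(u))^\prime h_d(u) \le 0$ where $\nu^d(q_d^\ast(u)) = (Dq_d^\ast(u))^{-1}\nu(u)$ is shown to be an outward normal of $\partial\cY_d$. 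Combined with $\tilde w_z^n \ge 0$, $f_{d,z} \ge 0$, and $\det(Dq_d^\ast) > 0$, this makes the boundary integral nonnegative, so it can be \emph{dropped} from the lower bound rather than shown to be zero. Your ``one-sided restriction'' intuition is exactly right; just don't claim the terms are killed.

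Second, you apply Arzel\`a--Ascoli before the energy estimate, passing to a limit $h$ and then arguing on it. The paper does the reverse: it runs the energy inequality along the sequence $h^n$ (using the total-variation norm lower bound against the specific test function $-\tilde w_z^n/M_n$), obtains $\int Q(h_0^n,h_1^n)\,du \to 0$ with nonnegative integrand, extracts a.e.\ convergence of the integrand to zero, and only then invokes Arzel\`a--Ascoli to upgrade to uniform convergence $h_d^n \to 0$. Your ordering can be made to work, but you must check that the boundary sign condition and the gradient structure $h_d = \nabla\eta_d$ survive the weak-$\ast$ limit on $Dh_d$; the paper's ordering sidesteps this.
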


Condition (\ref{eq:positive-definiteness-binary}) is a novel assumption that, to the best of our knowledge, has not appeared in the existing literature.
Roughly speaking, it requires that $Z = 0$ ($Z = 1$) be sufficiently positively correlated with $D = 0$ ($D = 1$).
To see this, let $g (y \mid d, z)$ denote the density of $Y$ conditional on $D = d$ and $Z = z,$ and suppose it satisfies $m < g (y \mid d, z) < M$ for some constants $M > m > 0.$
Then, the condition
\begin{align*}
    \frac{\bP (D = 0 \mid Z = 0) \bP (D = 1 \mid Z = 1)}{(\bP (D = 0 \mid Z = 1) + \bP (D = 1 \mid Z = 0))^2}
    >
    \frac{M^2}{4 m^2}
    \left(\frac{\overline \lambda}{\underline \lambda}\right)^{p + 1}
\end{align*}
is sufficient for condition (\ref{eq:positive-definiteness-binary}) to hold.
This inequality is satisfied when $\bP (D = 0 \mid Z = 0)$ and $\bP (D = 1 \mid Z = 1)$ are sufficiently large.
Since condition (\ref{eq:positive-definiteness-binary}) requires a positive correlation between $D$ and $Z,$ the labeling of the instrumental variable can be reversed if necessary to meet the condition.

Condition (\ref{eq:positive-definiteness-binary}) is stronger than \citeauthor{chernozhukov2005iv}'s monotone likelihood ratio condition:
$$
    \frac{f_{1, 1} (y_1)}{f_{0, 1} (y_0)}
    >
    \frac{f_{1, 0} (y_1)}{f_{0, 0} (y_0)}
    \quad
    \text{for all $(y_0, y_1) \in \cY_0 \times \cY_1$}
    .
$$
This can be shown as follows.
Since $(f_{0, 1} (y_0) + f_{1, 0} (y_1))^2 \geq 4 f_{0, 1} (y_0) f_{1, 0} (y_1),$ equation (\ref{eq:positive-definiteness-binary}) implies
\begin{align*}
    f_{0, 0} (y_0)
    f_{1, 1} (y_1)
    >
    \left(\frac{\overline \lambda}{\underline \lambda}\right)^{p + 1}
    f_{0, 1} (y_0) f_{1, 0} (y_1)
    \geq
    f_{0, 1} (y_0) f_{1, 0} (y_1)
    .
\end{align*}

When $p = 1,$ condition (\ref{eq:positive-definiteness-binary}) can be relaxed as follows.
\begin{proposition} \label{prop:local-identification-dim-1}
Suppose that Assumptions \ref{ass:ivqr}, \ref{ass:correct-specification}, \ref{ass:regular-boundary}, \ref{ass:smooth-densities} hold.
Then, $q^\ast$ is locally identified in $\cQ_0$ if
\begin{align} \label{eq:matrix-of-densities}
    \begin{pmatrix}
        f_{0, 0} (y_0) & f_{0, 1} (y_0) \\
        f_{1, 0} (y_1) & f_{1, 1} (y_1)
    \end{pmatrix}
\end{align}
is positive definite for almost all $(y_0, y_1) \in \cY_0 \times \cY_1$ with respect to the Lebesgue measure.
\end{proposition}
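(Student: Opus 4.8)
The plan is to adapt the argument behind Theorem~\ref{thm:local-identification-IVQR-binary} to the scalar case, where it simplifies considerably. Suppose, for contradiction, that there is a sequence $q^{(n)} \in \cQ_0 \setminus \{q^\ast\}$ realizing the same observable law as $q^\ast$ with $\norm{q^{(n)} - q^\ast}_\cQ \to 0$; by the definition of $\cQ_0$ write $q^{(n)} = q^\ast + \alpha_n h^{(n)}$ with $\alpha_n \downarrow 0$, $\norm{h^{(n)}}_\cQ = 1$, and $\max_d \norm{D h^{(n)}_d}_\infty \le K$. By Arzel\`a--Ascoli, using these uniform bounds, a subsequence satisfies $h^{(n)}_d \to h_d$ uniformly and $D h^{(n)}_d \rightharpoonup D h_d$ weak-$\ast$ in $L^\infty$, with $\norm{h}_\cQ = 1$, so $h \ne 0$. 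As in the proof of Theorem~\ref{thm:local-identification-IVQR-binary}, dividing the difference of the equations (\ref{eq:system-monge-ampere}) for $q^{(n)}$ and $q^\ast$ by $\alpha_n$ and letting $n \to \infty$ (using $f_{d,z} \in C^1$) shows that $h$ lies in the kernel of the linearization of (\ref{eq:system-monge-ampere}) at $q^\ast$; it therefore suffices to show this kernel is trivial.

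In one dimension the linearized system is in divergence form: with $\cU = [\underline u, \overline u]$, it reads $\frac{d}{du}\bigl[\sum_{d \in \{0,1\}} f_{d,z}(q^\ast_d(u))\, h_d(u)\bigr] = 0$ for all $u \in \cU$ and $z \in \{0,1\}$, so $\sum_{d} f_{d,z}(q^\ast_d(u))\, h_d(u)$ is constant in $u$ for each $z$. To see this constant is zero I would invoke the representation theorem (Theorem~\ref{thm:representation}) for both $q^\ast$ and $q^{(n)}$ --- each generates its observables through Assumption~\ref{ass:ivqr} by Definition~\ref{def:identification} --- which yields $\bP(Y \in q^{(n)}_D(B) \mid Z) = \mu(B) = \bP(Y \in q^\ast_D(B) \mid Z)$ for every measurable $B \subseteq \cU$ and both values of $Z$. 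Taking $B = \cU$ forces the support of $Y$ given $D = d$ to lie within the image $q^{(n)}_d(\cU)$, and taking $B = [\underline u, s]$ then yields, in the limit, $h_d(\underline u) = h_d(\overline u) = 0$. We are thus left with $A(q^\ast_0(u), q^\ast_1(u))^\top h(u) = 0$ for all $u$, where $A(y_0, y_1)$ is the matrix in (\ref{eq:matrix-of-densities}).

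To finish, since (\ref{eq:matrix-of-densities}) is positive definite for a.e.\ $(y_0, y_1) \in \cY_0 \times \cY_1$ and $A$ is continuous (Assumption~\ref{ass:smooth-densities}), $A$ is positive semidefinite everywhere and positive definite off a closed Lebesgue-null set; one checks that $A$ is nonsingular along the curve $u \mapsto (q^\ast_0(u), q^\ast_1(u))$ for a.e.\ $u$, which gives $h \equiv 0$ by continuity, a contradiction. I expect two steps to need care. First, the boundary analysis that makes the integration constant vanish: this is subtle because, under rank similarity, the conditional law of $U_d$ given $D = d$ need not have full support on $\cU$, so the claim $h_d(\underline u) = h_d(\overline u) = 0$ must be extracted from the full family of moment restrictions in Theorem~\ref{thm:representation} together with the regularity encoded in $\cQ_0$. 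Second, upgrading the a.e.-in-$\cY_0 \times \cY_1$ hypothesis to nonsingularity of $A$ along the one-dimensional image curve of the true map: here I would average over the relevant segments (as in the integral form of the mean value theorem) and apply a Fubini-type argument, using continuity of the densities. The factor $(\overline\lambda/\underline\lambda)^{p+1}$ that appears in (\ref{eq:positive-definiteness-binary}) does not intervene in one dimension because the cofactor matrix of $D q_d$ is identically $1$.
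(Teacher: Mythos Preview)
Your route is sound but genuinely different from the paper's. The paper does not analyze the kernel of the linearization pointwise; it simply re-runs the energy argument from the proof of Theorem~\ref{thm:local-identification-IVQR-binary}, recording two one-dimensional simplifications. First, once one restricts to support-preserving candidates via Lemma~\ref{lem:support-identification}, monotonicity forces $q_d^{n}$ and $q_d^\ast$ to agree at the endpoints of $\cU$, so $h_d^n(\partial\cU)=0$ and the boundary term $I_{2,z,n}$ in that proof vanishes outright (Lemma~\ref{lem:outward-is-conormal} is not needed). Second, because $\cof(Dq_d^\ast)\equiv 1$ when $p=1$, the interior quadratic form collapses to $\sum_{d,z} f_{d,z}(q_d^\ast)\,h_z^n\,h_d^n$, and positive definiteness of the matrix~(\ref{eq:matrix-of-densities}) directly replaces condition~(\ref{eq:positive-definiteness-binary}) in the role of Lemma~\ref{lem:positive-definiteness-binary}. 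You instead extract the limit $h$ first, integrate the scalar divergence-form kernel equation to a constant, kill the constant via endpoint conditions, and arrive at the pointwise linear system $A^\top h(u)=0$ rather than an integrated quadratic form. This is more elementary and only requires nonsingularity of $A$ at the final step, but it is intrinsically one-dimensional---for $p\ge 2$ the kernel equation does not integrate to a pointwise algebraic system, and the paper's energy method is what survives.

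On your two flagged concerns: the boundary step is best handled exactly as the paper does, namely invoke Lemma~\ref{lem:support-identification} (which holds under the matrix hypothesis, since positive definiteness forces the diagonal densities $f_{d,d}>0$ a.e.) to restrict to $\cQ_{0,s}$, and then read off $h_d^{(n)}(\partial\cU)=0$ from monotonicity; your moment-restriction sketch via $B=\cU$ and $B=[\underline u,s]$ does not deliver the endpoint conditions as written. Your second worry---that a.e.\ positive definiteness on the product $\cY_0\times\cY_1$ need not transfer to the Lebesgue-null image curve $u\mapsto(q_0^\ast(u),q_1^\ast(u))$---is legitimate, and the Fubini fix you propose will not resolve it since the curve has product measure zero; but the same passage is made in the paper's Lemma~\ref{lem:positive-definiteness-binary}, so it is not a defect specific to your route.
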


Proposition \ref{prop:local-identification-dim-1} is consistent with \cite{chernozhukov2005iv}, who show that identification holds when matrix (\ref{eq:matrix-of-densities}) is full rank, since the determinant of any full-rank $2 \times 2$ matrix can be made positive by switching its columns if necessary.

Identification for $p = 1$ holds under a milder assumption because the proof of Proposition \ref{prop:local-identification-dim-1} relies on a property that is valid only in one dimension.
Specifically, it uses the fact that the cofactor matrix\footnote{For an invertible matrix $C$, its cofactor matrix is defined as $\cof (C) \coloneqq \det (C) C^{-1}$.} of any matrix equals the identity if and only if $p = 1$.
See Lemma \ref{lem:positive-definiteness-binary} for details.
This observation implies that replacing the positive definiteness of matrix (\ref{eq:matrix-of-densities}) with condition (\ref{eq:positive-definiteness-binary}) can be interpreted as the additional cost of allowing for multidimensional potential outcomes.

As the dimension $p$ increases, condition (\ref{eq:positive-definiteness-binary}) becomes more demanding, which is natural given the need to identify higher-dimensional structural functions.
However, the minimum support size of the instrumental variable $Z$ required for identification remains independent of the dimension $p$ of the potential outcomes, as long as the IV is sufficiently correlated with the treatment variable.

\subsection{Outline of the proof of Theorem \ref{thm:local-identification-IVQR-binary}}

Roughly speaking, the proof of Theorem \ref{thm:local-identification-IVQR-binary} proceeds in two steps.
In the first step, we show that the support $\cY_d$ of the potential outcome $Y_d$ is identified under condition (\ref{eq:positive-definiteness-binary}).
If a candidate structural function $q_d$ implies a different support for the potential outcome, then it cannot be consistent with the joint distribution of the observable variables $(Y, D, Z).$
Therefore, the identification analysis can be restricted to the class of structural functions that preserve the support.

In the second step, we show that $q^\ast$ is the locally unique solution to the system (\ref{eq:system-monge-ampere}) of equations within the class of structural functions.
The argument follows the logic of the implicit function theorem, which is commonly used in the identification analysis of nonlinear models with finite-dimensional parameters.
Specifically, we linearize the nonlinear system (\ref{eq:system-monge-ampere}) around the true structural function $q^\ast.$
The key insight is that the positive correlation condition (\ref{eq:positive-definiteness-binary}) ensures that the ``slope'' of the linearized system is bounded away from zero, thereby guaranteeing the local uniqueness of the solution.

The full proof appears in Appendix \ref{sec:proof-identification-IVQR}.
The first step is established in Lemma \ref{lem:support-identification}.
The second step follows from a more general identification result presented in Appendix \ref{app:local-identification-system}.

\section{Concluding Remarks} \label{sec:conclusion}

In this paper, we proposed a new nonlinear IV model that extends \cite{chernozhukov2005iv} to accommodate the correlation among multiple outcome variables.
A key identifying restriction is that the structural functions are assumed to have a symmetric and positive definite derivative with respect to the rank vector.
We showed that if the instrumental variable is sufficiently positively correlated with the treatment variable, the structural functions are locally identified.
The cost of multidimensionality is that the positive correlation condition becomes more demanding as the dimension of the potential outcomes increases.
Nevertheless, the minimum support size of the instrumental variable required for identification remains the same as in the univariate case.
These results clarify how identification in nonlinear IV models can be extended to multidimensional settings while maintaining a similar structure to the scalar case.

\appendix

\section{Preliminaries} \label{app:preliminaries}

Our multivariate outcome model is tightly related to the optimal transport theory.
In this section, we briefly review mathematical concepts that are used in the paper.
Fix a probability measure $\mu$ on a compact convex subset $\cU$ of the $p$-dimensional Euclidean space, and let $Y$ be a $p$-dimensional random vector.
First, consider the case of $p = 1$ and $\mu = U [0, 1].$
According to the inverse probability integral transform, there exists a random variable $U \sim U [0, 1]$ such that $Y = q (U)$ almost surely where $q$ is the quantile function of $Y$ (see, for example, Proposition 3.2 of \cite{shorack2000probability}).
Moreover, $q$ is the (almost surely) unique nondecreasing function that satisfies this equation for some uniform random variable.
This observation for the classical one-dimensional case holds more generally by replacing the monotonicity of $q$ with the restriction that it is written as the gradient of a convex function.

\begin{theorem} \label{thm:mccann}
Let $\mu$ be an absolutely continuous probability measure on $\bR^p$ and $Y$ be an absolutely continuous $p$-dimensional random vector.
There exists a convex function $\varphi : \bR^p \to \bR \cup \{+ \infty\}$ and a random vector $U$ such that $Y = D \varphi (U)$ almost surely and $U \sim \mu.$
Furthermore, $D \varphi$ is $\mu$-almost surely unique.    
\end{theorem}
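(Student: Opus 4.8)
The plan is to identify $D\varphi$ with the Brenier--McCann optimal transport map pushing $\mu$ forward onto the law $\nu \coloneqq \cL(Y)$ for the quadratic cost $c(x,y) = |x-y|^2$. First I would produce a coupling with cyclically monotone support. The set $\Pi(\mu,\nu)$ of Borel probability measures on $\bR^p \times \bR^p$ with marginals $\mu$ and $\nu$ is nonempty (it contains $\mu \otimes \nu$) and weakly compact by Prokhorov's theorem. If $\mu$ and $\nu$ have finite second moments, the functional $\gamma \mapsto \int |x-y|^2 \, d\gamma$ is weakly lower semicontinuous and attains its infimum at some $\gamma^\ast \in \Pi(\mu,\nu)$, whose support is $c$-cyclically monotone by the usual swapping argument; for the quadratic cost this is exactly cyclical monotonicity of $\mathrm{supp}(\gamma^\ast)$ as a subset of $\bR^p \times \bR^p$. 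If the second moments are infinite, I would instead truncate $\mu$ and $\nu$ to balls of radius $R$ and renormalize, take optimal (hence cyclically monotone) couplings $\gamma_R$, and extract a weak subsequential limit $\gamma^\ast$ as $R \to \infty$: tightness preserves the marginals, and cyclical monotonicity passes to the limit since it is closed under weak convergence of the couplings. Either way we get $\gamma^\ast \in \Pi(\mu,\nu)$ with cyclically monotone support.

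Next I would apply Rockafellar's theorem: every cyclically monotone set $\Gamma \subset \bR^p \times \bR^p$ is contained in the graph of the subdifferential $\partial\varphi$ of some proper lower semicontinuous convex function $\varphi : \bR^p \to \bR \cup \{+\infty\}$; take $\Gamma = \mathrm{supp}(\gamma^\ast)$. A finite convex function is differentiable Lebesgue-almost everywhere, and at each point of differentiability $\partial\varphi$ is the singleton $\{D\varphi\}$. Because $\mu$ is absolutely continuous, it assigns zero mass to the Lebesgue-null set on which $\varphi$ fails to be differentiable (including $\partial(\mathrm{dom}\,\varphi)$, which has dimension at most $p-1$), so $\gamma^\ast$ is concentrated on the graph $\{(u, D\varphi(u))\}$, i.e., $\gamma^\ast = (\id, D\varphi)_\# \mu$. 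Taking $(U, Y) \sim \gamma^\ast$ then gives $U \sim \mu$ and $Y = D\varphi(U)$ almost surely, which is the existence assertion.

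For uniqueness, suppose $D\varphi_0$ and $D\varphi_1$ are gradients of convex functions with $(D\varphi_0)_\#\mu = (D\varphi_1)_\#\mu = \nu$. The graph of the gradient of a convex function is cyclically monotone, so (after the same truncation if the second moments are infinite) each $\gamma_i \coloneqq (\id, D\varphi_i)_\#\mu$ is an optimal coupling, hence so is $\bar\gamma \coloneqq \tfrac{1}{2}(\gamma_0 + \gamma_1) \in \Pi(\mu,\nu)$. By the previous paragraph $\bar\gamma$ is concentrated on the graph of a single map; but a convex combination of measures supported on the graphs of $D\varphi_0$ and $D\varphi_1$ can be concentrated on a graph only if $D\varphi_0 = D\varphi_1$ holds $\mu$-almost everywhere. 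This yields the claimed $\mu$-almost sure uniqueness of $D\varphi$.

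The main obstacle is the infinite-second-moment case in the first step: producing a cyclically monotone coupling when the transport cost is identically $+\infty$, and confirming that cyclical monotonicity survives the weak limit. A related technical point used in the uniqueness argument is that a cyclically monotone coupling is genuinely cost-minimizing, whose justification requires a mild integrability safeguard. In the correctly specified setting of this paper the relevant measures are compactly supported, so these issues are vacuous; in full generality I would carry out the truncation argument carefully or invoke the treatment in \cite{villani2003topics} and the references therein.
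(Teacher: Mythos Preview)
Your route differs substantially from the paper's. You essentially reprove the Brenier--McCann theorem from first principles (Kantorovich relaxation, cyclical monotonicity of the optimal support, Rockafellar's theorem, a.e.\ differentiability of convex functions), whereas the paper simply invokes Brenier--McCann as a black box in the \emph{reverse} direction: it takes a convex $\psi$ with $(D\psi)_\# \cL(Y) = \mu$, sets $U \coloneqq D\psi(Y)$, and lets $\varphi$ be the Legendre transform of $\psi$, so that $D\varphi(U) = D\varphi(D\psi(Y)) = Y$ almost surely. The paper's proof is four lines; yours is more self-contained but correspondingly longer.

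There is, however, a genuine gap at the end of your existence step. Having shown $\gamma^\ast = (\id, D\varphi)_\# \mu$, the sentence ``Taking $(U, Y) \sim \gamma^\ast$'' does not construct $U$ on the probability space carrying the \emph{given} random vector $Y$; it only manufactures a fresh pair $(U', Y')$ with $Y' \overset{d}{=} Y$ and $Y' = D\varphi(U')$. The theorem asks for $Y = D\varphi(U)$ almost surely for the original $Y$, and in the paper's application (Assumption~\ref{ass:ivqr}) the rank variable $U_d$ must live on the same space as $Y_d, D, Z, \nu$. A tell-tale sign is that your argument never uses the hypothesis that $\cL(Y)$ is absolutely continuous. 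To close the gap you need the inverse map: run your Rockafellar/differentiability step with the roles of $\mu$ and $\nu = \cL(Y)$ swapped, obtaining a convex $\psi$ such that $\gamma^\ast$ is also concentrated on $\{(D\psi(y), y)\}$, and then set $U \coloneqq D\psi(Y)$. That is precisely the Legendre-transform construction the paper carries out.
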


This is a version of the fundamental theorem in the optimal transport theory known as the Brenier-McCann theorem (\cite{brenier1991polar} and \cite{mccann1995existence}).
The gradient $D \varphi$ is called the optimal transport map from $\mu$ to the distribution of $Y$ under the quadratic cost because it solves the following Monge problem:
\begin{align*}
    \min_{
        \substack{
            q : \cU \to \bR^p
            \\
            q_\# \mu = \cL (Y)
        }
    }
    \int_\cU
        \norm{u - q (u)}^2
    d \mu (u)
    ,
\end{align*}
where $\norm{\cdot}$ is the Euclidean norm and $\cL (Y)$ is the distribution of $Y.$

Given the reference probability measure $\mu,$ the distribution $\cL (Y)$ of $Y$ is characterized by the gradient $D \varphi$ via $(D \varphi)_\# \mu = \cL (Y).$
The function $q = D \varphi$ is called the multivariate quantile function of $Y.$
The multivariate quantile function $q$ has a symmetric and positive semi-definite derivative $D q = D^2 \varphi,$ and therefore, it is cyclically monotone, that is, for any $u^1, \dots, u^{k + 1} \in \cU$ with $u^1 = u^{k + 1},$ it holds that
\begin{align} \label{eq:CM}
    \sum_{i = 1}^{k}
    (u^{i + 1})^\prime
    (q (u^{i + 1}) - q (u^i))
    \geq
    0
    ,
\end{align}
as is shown in \cite{rockafellar1966characterization}.
Conversely, if a function $q$ has a symmetric and positive definite derivative, then it is written as the gradient of a convex function.

\begin{lemma} \label{lem:CM-is-gradient-of-convex}
Suppose that a differentiable function $q : \cU \to \bR^p$ has a symmetric and positive definite derivative $D q$ on $\Int (\cU).$
Then, there exists a strictly convex differentiable function $\varphi : \Int (\cU) \to \bR$ such that $q = D \varphi$ on $\Int (\cU).$
Furthermore, $\varphi$ is unique up to an additive constant.
\end{lemma}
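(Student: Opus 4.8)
The plan is to construct the potential $\varphi$ by an explicit line integral, using convexity of $\cU$, and then read off strict convexity and uniqueness from standard facts. First I would fix a base point $u_0 \in \Int(\cU)$; since $\Int(\cU)$ is convex, the segment from $u_0$ to any $u \in \Int(\cU)$ stays inside, so I can set
\[
\varphi(u) \coloneqq \int_0^1 q\bigl(u_0 + t(u - u_0)\bigr)^\prime (u - u_0)\, dt .
\]
Equivalently, one can invoke the Poincar\'e lemma: symmetry of $Dq$, i.e.\ $\partial q^i/\partial u^j = \partial q^j/\partial u^i$, says precisely that the $1$-form $\sum_i q^i\, du^i$ is closed, and on the convex (hence simply connected) set $\Int(\cU)$ every closed form is exact, producing such a $\varphi$, unique up to an additive constant.

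The second step is to verify $D\varphi = q$ on $\Int(\cU)$. Differentiating the line integral in $u^j$ and using the symmetry $\partial q^i/\partial u^j = \partial q^j/\partial u^i$, the integrand collapses to $\frac{d}{dt}\bigl[t\, q^j\bigl(u_0 + t(u - u_0)\bigr)\bigr]$, so the integral telescopes to $q^j(u)$. Hence $\nabla\varphi = q$; in particular $\varphi$ is twice differentiable with $D^2\varphi = Dq$, which is symmetric and positive definite by hypothesis.

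The third step extracts strict convexity. For distinct $u, v \in \Int(\cU)$, set $g(t) \coloneqq \varphi\bigl((1-t)u + tv\bigr)$. Then $g$ is twice differentiable with $g''(t) = (v - u)^\prime Dq\bigl((1-t)u + tv\bigr)(v - u) > 0$ for all $t \in [0,1]$, so $g$ is strictly convex on $[0,1]$, whence $\varphi\bigl((1-t)u + tv\bigr) < (1-t)\varphi(u) + t\varphi(v)$ for $t \in (0,1)$, which is strict convexity of $\varphi$. For uniqueness, if $\psi$ also satisfies $\nabla\psi = q$ on the connected open set $\Int(\cU)$, then $\nabla(\varphi - \psi) \equiv 0$ there, so $\varphi - \psi$ is constant.

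The main obstacle is the differentiation-under-the-integral step when $q$ is merely differentiable rather than $C^1$: a dominated-convergence argument is not immediately available, and one instead must show $\varphi$ is differentiable with the claimed gradient by estimating difference quotients along segments via the mean value inequality — a classical but somewhat technical point. If $Dq$ is continuous, as in Assumption \ref{ass:ivqr-A1}, this step is routine and the Poincar\'e-lemma route applies directly; I expect the remaining parts to be bookkeeping.
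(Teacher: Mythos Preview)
Your proposal is correct and takes essentially the same approach as the paper: the paper's proof invokes Poincar\'e's lemma on the convex domain $\Int(\cU)$ to produce $\varphi$ with $D\varphi = q$, deduces strict convexity from the positive definiteness of $D^2\varphi = Dq$, and obtains uniqueness (citing an external reference rather than the direct $\nabla(\varphi - \psi) \equiv 0$ argument you give). Your version is more explicit and self-contained, and your caveat about the differentiation-under-the-integral step when $q$ is merely differentiable is a fair technical point that the paper's one-line appeal to Poincar\'e's lemma likewise does not address.
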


Furthermore, a multivariate quantile function induces a bijection between the interior of the domain and that of the range under some regularity conditions.
For similar results, see Theorem 1 of \cite{cordero2019regularity} and Proposition 3.1 of \cite{ghosal2022multivariate}.

\begin{theorem} \label{thm:ot-map-is-bijective}
Suppose that a continuous function $q : \cU \to \bR^p$ has a symmetric and positive definite derivative $D q$ on $\Int (\cU).$
Also, assume that its image $\cY \coloneqq q (\cU)$ is convex.
Then, $q |_{\Int (\cU)}$ is a continuous bijection between $\Int (\cU)$ and $\Int (\cY).$
\end{theorem}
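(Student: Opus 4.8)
The plan is to prove three things about $q|_{\Int(\cU)}$: that it maps into $\Int(\cY)$, that it is injective, and that it is onto $\Int(\cY)$; continuity is part of the hypothesis. Throughout I would use Lemma~\ref{lem:CM-is-gradient-of-convex} to write $q = D\varphi$ on $\Int(\cU)$ with $\varphi$ strictly convex and differentiable, and since $q$ is continuous on the compact set $\cU$ it is bounded, so $\varphi$ is Lipschitz and extends to a finite convex function $\bar\varphi$ on $\cU$, which I view as $+\infty$ off $\cU$, making it a closed proper convex function with $D\bar\varphi = q$ on $\Int(\cU)$. Injectivity on $\Int(\cU)$ is then immediate, since strict convexity of $\varphi$ gives $\langle D\varphi(u) - D\varphi(u'),\, u - u'\rangle > 0$ whenever $u \neq u'$. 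For the inclusion $q(\Int(\cU)) \subseteq \Int(\cY)$ I would invoke Brouwer's invariance of domain: $q|_{\Int(\cU)}$ is a continuous injection on the open set $\Int(\cU) \subseteq \bR^p$, so its image $V \coloneqq q(\Int(\cU))$ is open in $\bR^p$, hence $V \subseteq \Int(\cY)$ because $V \subseteq \cY$; moreover $q|_{\Int(\cU)} : \Int(\cU) \to V$ is a homeomorphism.

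The remaining task, $V = \Int(\cY)$, is the substance of the proof. Since $\cY$ is convex, $\Int(\cY)$ is connected, and $V$ is a nonempty subset of it that is open in $\bR^p$; so it suffices to show $V$ is closed in $\Int(\cY)$, and via the sequential characterization together with compactness of $\cU$ this reduces to the single statement $q(\partial\cU) \cap \Int(\cY) = \emptyset$. The engine for this is the identity $\partial\bar\varphi(u^\ast) = q(u^\ast) + N_\cU(u^\ast)$ for every $u^\ast \in \partial\cU$, where $N_\cU(u^\ast)$ is the normal cone of $\cU$ at $u^\ast$. This follows from the continuity of $q = D\bar\varphi$ up to the boundary: the fundamental theorem of calculus along segments $[u^\ast, w)$, combined with monotonicity of $\sigma \mapsto \langle q(u^\ast + \sigma(w - u^\ast)),\, w - u^\ast\rangle$, first shows $q(u^\ast) \in \partial\bar\varphi(u^\ast)$, and a symmetric argument pins down the rest of the subdifferential as the normal cone.

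Granting the identity, suppose for contradiction $q(u^\ast) = y^\ast \in \Int(\cY)$ with $u^\ast \in \partial\cU$, and fix $\omega \in N_\cU(u^\ast) \setminus \{0\}$. For small $t > 0$ convexity of $\cY$ gives $y^\ast + t\omega \in \cY = q(\cU)$, say $y^\ast + t\omega = q(v_t)$ with $v_t \in \cU$. Monotonicity of $q$ with base point $u^\ast$ gives $t\langle\omega,\, v_t - u^\ast\rangle \ge 0$, while $\omega \in N_\cU(u^\ast)$ gives $\langle\omega,\, v_t - u^\ast\rangle \le 0$, so $\langle\omega,\, v_t - u^\ast\rangle = 0$. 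By the identity, $y^\ast + t\omega = q(u^\ast) + t\omega \in \partial\bar\varphi(u^\ast)$, so both $u^\ast$ and $v_t$ maximize $v \mapsto \langle v,\, y^\ast + t\omega\rangle - \bar\varphi(v)$ over $\cU$; this maximizer set is convex and contains the distinct points $u^\ast, v_t$, and were it to meet $\Int(\cU)$ it would contain infinitely many interior preimages of $y^\ast + t\omega$ under $q$, contradicting injectivity, so $v_t \in \partial\cU$. Equating the optimal values at $u^\ast$ and $v_t$ and using $\langle\omega,\, v_t - u^\ast\rangle = 0$ yields $\bar\varphi(v_t) = \bar\varphi(u^\ast) + \langle y^\ast,\, v_t - u^\ast\rangle$; since $y^\ast = q(u^\ast) \in \partial\bar\varphi(u^\ast)$, this equality upgrades to $y^\ast \in \partial\bar\varphi(v_t)$. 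Thus $\partial\bar\varphi(v_t)$ contains both $y^\ast$ and $q(v_t) = y^\ast + t\omega$, so the identity applied at $v_t$ forces $-\omega \in N_\cU(v_t)$, while $v_t$ also maximizes $\langle\omega,\cdot\rangle$ over $\cU$, so $\omega \in N_\cU(v_t)$. Hence $N_\cU(v_t)$ contains the whole line $\bR\omega$, which forces $\cU$ into a hyperplane and contradicts $\Int(\cU) \neq \emptyset$.

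The main obstacle is this surjectivity step, and within it the subdifferential identity $\partial\bar\varphi(u^\ast) = q(u^\ast) + N_\cU(u^\ast)$, which is the only place where the continuity of $q$ up to $\partial\cU$ is genuinely used; the convexity of $\cY$ enters only through the implication $y^\ast \in \Int(\cY) \Rightarrow y^\ast + t\omega \in \cY = q(\cU)$ and is indispensable there. Everything else is elementary convex analysis together with the two standard inputs already available, Lemma~\ref{lem:CM-is-gradient-of-convex} and invariance of domain.
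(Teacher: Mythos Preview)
Your proof is correct and diverges from the paper's in two places. For the inclusion $q(\Int(\cU)) \subseteq \Int(\cY)$, you cite Brouwer's invariance of domain; the paper instead argues directly (Lemma~\ref{lem:onto-interior}) by taking a supporting hyperplane to $\cY$ at a hypothetical boundary image $q(u)$ and deriving a contradiction from strict monotonicity of $q$ at the interior point $u$. The paper's argument here is more elementary, avoiding the topological black box and making the role of the convexity of $\cY$ explicit at this stage; yours is quicker once invariance of domain is granted. For surjectivity onto $\Int(\cY)$, the paper asserts in one line that $u \in \partial\cU$ forces $q(u) \in \partial\cY$ ``by the continuity of $q$,'' whereas you correctly identify this as the substantive step and prove the equivalent statement $q(\partial\cU) \cap \Int(\cY) = \emptyset$ via the subdifferential identity $\partial\bar\varphi(u^\ast) = q(u^\ast) + N_\cU(u^\ast)$ and a normal-cone argument. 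Your treatment of this step is considerably more complete than the paper's---continuity alone does not obviously yield the boundary-to-boundary claim, and the paper's supporting-hyperplane technique from Lemma~\ref{lem:onto-interior} does not symmetrize cleanly because strict monotonicity is only available at interior points. In short, the paper's proof is shorter because it compresses the hard direction into a one-line claim; your proof supplies that direction in full, and it is precisely there that the continuity of $q$ up to $\partial\cU$ and the convexity of $\cY$ are genuinely used.
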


\section{Local identification for system of equations} \label{app:local-identification-system}

\subsection{Setup and theorem}

The proof of Theorem \ref{thm:local-identification-IVQR-binary} relies on a more general identification result (Theorem \ref{thm:local-identification-general}), which is of independent interest.
We devote this section to the identification result that can be applied to parameters that are characterized by the solution of a system of equations.

Let $(\cA, \norm{\cdot}_\cA), (\cB, \norm{\cdot}_\cB)$ be normed spaces, $\cA_0 \subset \cA$ be any subset, and $\cZ$ be a finite set.
For $z \in \cZ,$ fix $\phi_z : \cA_0 \to \cB$ and $b_z \in \cB.$
We consider the following system of equations
\begin{align} \label{eq:simultaneous-equations}
    \phi_z (a) 
    = 
    b_z
    \ \text{for} \ 
    z \in \cZ
    .
\end{align}
Suppose $a = a^\ast \in \cA_0$ is a solution to (\ref{eq:simultaneous-equations}).
We are interested in whether $a = a^\ast$ is the locally unique solution.
To formulate deviations from $a^\ast,$ we define the tangent set at $a = a^\ast$ as
\begin{align*}
    T_{a^\ast}
    \coloneqq
    \{
        \delta a \in \cA 
        \mid 
        a^\ast + \delta a \in \cA_0
    \}
    .
\end{align*}
The set $T_{a^\ast}$ specifies possible deviations in $\cA_0$ from $a^\ast.$
Notice that if $\cA_0 = \cA,$ then $T_{a^\ast} = \cA.$
We also define the set of normalized tangent vectors as
\begin{align*}
    T_{a^\ast}^1
    \coloneqq
    \left \{
        \frac{\delta a}{\norm{\delta a}_\cA}
        \mid
        \delta a \in T_{a^\ast} \setminus \{0\}
    \right \}
    .
\end{align*}

Remember that when $\cA$ and $\cB$ are finite-dimensional, the locally unique solvability of a system of equations is given by the implicit function theorem, which assumes the differentiability and the full rankness of the Jacobian matrix.
As our identification result is an analogue statement for infinite-dimensional parameters, we impose assumptions corresponding to these two.

The first assumption is the differentiability of $\phi_z$ at $a = a^\ast.$
\begin{assumption} \label{ass:differentiable}
For $z \in \cZ,$ there exists $(\phi_z)_{a^\ast}^\prime: T_{a^\ast}^1 \to \cB$ such that
\begin{align*}
    \lim_{\varepsilon \downarrow 0}
    \sup_{
        \substack{
            h \in T_{a^\ast}^1
            \\
            a^\ast + \varepsilon h \in \cA_0
        }
    }
    \norm{\frac{\phi_z (a^\ast + \varepsilon h) - \phi_z (a^\ast)}{\varepsilon} - (\phi_z)_{a^\ast}^\prime (h)}_\cB
    =
    0
\end{align*}
where if the set over which the supremum is taken is empty, the value reads zero.
\end{assumption}

It is clear that if $\cA_0 = \cA,$ Assumption \ref{ass:differentiable} is reduced to the standard Fr\'echet differentiability.
For strict subsets of $\cA,$ it is weaker than the Fr\'echet differentiability.
In particular, even if $\cA_0$ is not a vector space, Assumption \ref{ass:differentiable} is well-defined.
Unlike the Fr\'echet differentiability, the linearity of $(\phi_z)_{a^\ast}^\prime$ is not required.

The next assumption is an infinite-dimensional version of the full rankness of the Jacobian matrix.
\begin{assumption} \label{ass:full-rank}
It holds that
\begin{align*}
    \inf_{h \in T_{a^\ast}^1}
    \sum_{z \in \cZ}
    \norm{(\phi_z)_{a^\ast}^\prime (h)}_\cB
    >
    0
    .
\end{align*}
\end{assumption}

Now, we are ready to state the main theorem in this section.
\begin{theorem} \label{thm:local-identification-general}
Under Assumptions \ref{ass:differentiable} and \ref{ass:full-rank}, $a^\ast$ is the locally unique solution to (\ref{eq:simultaneous-equations}).
That is, there exists $\varepsilon > 0$ such that for all $a \in \cA_0,$ if $0 < \norm{a - a^\ast}_\cA < \varepsilon,$ then $a$ is not a solution to the system.
\end{theorem}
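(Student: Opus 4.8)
The plan is to run the standard argument that a map with derivative bounded away from zero is locally injective, but carried out directly with the one-sided directional difference quotients supplied by Assumption \ref{ass:differentiable}, since the linearization $(\phi_z)_{a^\ast}^\prime$ need be neither linear nor defined off $T_{a^\ast}^1$. First I would dispose of the degenerate case: if $T_{a^\ast}^1 = \emptyset$, equivalently $\cA_0 = \{a^\ast\}$, the conclusion is vacuous, so from now on I assume $T_{a^\ast}^1 \neq \emptyset$ and set $c \coloneqq \inf_{h \in T_{a^\ast}^1} \sum_{z \in \cZ} \norm{(\phi_z)_{a^\ast}^\prime (h)}_\cB$, which is strictly positive by Assumption \ref{ass:full-rank}.

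Next, for each $\varepsilon > 0$ and $z \in \cZ$ I would introduce the remainder $\rho_z(\varepsilon) \coloneqq \sup \norm{\varepsilon^{-1}(\phi_z(a^\ast + \varepsilon h) - \phi_z(a^\ast)) - (\phi_z)_{a^\ast}^\prime(h)}_\cB$, the supremum over $h \in T_{a^\ast}^1$ with $a^\ast + \varepsilon h \in \cA_0$ (set to $0$ when that index set is empty, as in Assumption \ref{ass:differentiable}). Assumption \ref{ass:differentiable} gives $\rho_z(\varepsilon) \to 0$ as $\varepsilon \downarrow 0$ for each $z$, and since $\cZ$ is finite there is $\varepsilon_0 > 0$ with $\sum_{z \in \cZ} \rho_z(\varepsilon) < c/2$ for all $\varepsilon \in (0, \varepsilon_0)$. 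I claim $\varepsilon_0$ is the desired radius. Take any $a \in \cA_0$ with $0 < \norm{a - a^\ast}_\cA < \varepsilon_0$, put $\varepsilon \coloneqq \norm{a - a^\ast}_\cA$ and $h \coloneqq \varepsilon^{-1}(a - a^\ast) \in T_{a^\ast}^1$; then $a^\ast + \varepsilon h = a \in \cA_0$, so $h$ is admissible in the supremum defining $\rho_z(\varepsilon)$. Using $\phi_z(a^\ast) = b_z$ and the reverse triangle inequality, the estimate reads
\begin{align*}
    \sum_{z \in \cZ} \norm{\phi_z(a) - b_z}_\cB
    &= \varepsilon \sum_{z \in \cZ} \norm{(\phi_z)_{a^\ast}^\prime(h) + \Big(\varepsilon^{-1}(\phi_z(a^\ast + \varepsilon h) - \phi_z(a^\ast)) - (\phi_z)_{a^\ast}^\prime(h)\Big)}_\cB \\
    &\geq \varepsilon\Big( \sum_{z \in \cZ} \norm{(\phi_z)_{a^\ast}^\prime(h)}_\cB - \sum_{z \in \cZ} \rho_z(\varepsilon) \Big)
    \geq \varepsilon\Big( c - \tfrac{c}{2} \Big) = \tfrac{\varepsilon c}{2} > 0 .
\end{align*}
Hence $\phi_z(a) \neq b_z$ for at least one $z$, so $a$ fails to solve (\ref{eq:simultaneous-equations}), which is precisely local uniqueness.

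As for where the difficulty lies, there is no deep obstacle — the argument is a short $\varepsilon$-$\delta$ estimate — but two bookkeeping points need care. The first is that one must verify the specific direction $h = \varepsilon^{-1}(a - a^\ast)$ produced by a competitor $a$ actually lies among the directions over which the supremum in Assumption \ref{ass:differentiable} ranges; this holds exactly because $a \in \cA_0$, and is the reason the assumption is phrased with a supremum over admissible directions rather than as ordinary directional differentiability. The second is that finiteness of $\cZ$ is what permits passing from the pointwise-in-$z$ vanishing of $\rho_z(\varepsilon)$ to a single uniform threshold $\varepsilon_0$. It is worth recording that linearity of $(\phi_z)_{a^\ast}^\prime$ is never used, consistent with the discussion following Assumption \ref{ass:differentiable}.
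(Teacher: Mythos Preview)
Your proof is correct and follows essentially the same approach as the paper: control the remainder uniformly via Assumption \ref{ass:differentiable}, then use the strictly positive infimum from Assumption \ref{ass:full-rank} together with the reverse triangle inequality to force $\sum_{z}\norm{\phi_z(a)-b_z}_\cB>0$. The only cosmetic difference is that the paper first recasts Assumption \ref{ass:full-rank} via an auxiliary equivalence (Lemma \ref{lem:full-rank-version}) and argues equation-by-equation for a distinguished $z_0$, whereas you bound the full sum directly using $c=\inf_{h}\sum_z\norm{(\phi_z)'_{a^\ast}(h)}_\cB$; your route is slightly more economical but the underlying estimate is the same.
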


This result is closely related to Section 2 of \cite{chen2014local}.
Indeed, their Theorems 1 can be shown from Theorem \ref{thm:local-identification-general}.
Our theorem has a different focus than their result in that it covers the local identification problem for systems of equations and that the domain $\cA_0$ of functionals is not assumed to be a Banach space.

\subsection{Proof of Theorem \ref{thm:local-identification-general}}

We use the following lemma that gives an equivalent representation of Assumption \ref{ass:full-rank}.
\begin{lemma} \label{lem:full-rank-version}
Let $z_0 \in \cZ.$
Assumption \ref{ass:full-rank} holds if and only if there exists $\eta > 0$ such that for all $h \in T_{a^\ast}^1,$
\begin{align*}
    \sum_{z \in \cZ \setminus \{z_0\}}
    \norm{
        (\phi_z)_{a^\ast}^\prime (h)
    }_\cB 
    < 
    \eta
    \Rightarrow
    \norm{(\phi_{z_0})_{a^\ast}^\prime (h)}_\cB 
    \geq 
    \eta
    .
\end{align*}
\end{lemma}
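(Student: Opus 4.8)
The plan is to reduce both implications to an elementary case split on the size of the truncated sum, so no real analysis is needed beyond bookkeeping of constants. First I would introduce shorthand: for $h \in T_{a^\ast}^1$ write $S(h) \coloneqq \sum_{z \in \cZ} \norm{(\phi_z)_{a^\ast}^\prime (h)}_\cB$, $s_0(h) \coloneqq \norm{(\phi_{z_0})_{a^\ast}^\prime (h)}_\cB$, and $r(h) \coloneqq \sum_{z \in \cZ \setminus \{z_0\}} \norm{(\phi_z)_{a^\ast}^\prime (h)}_\cB$, so that $S(h) = s_0(h) + r(h)$ with all three quantities nonnegative and finite (the latter because $\cZ$ is finite). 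With this notation, Assumption \ref{ass:full-rank} is precisely the statement that $c \coloneqq \inf_{h \in T_{a^\ast}^1} S(h) > 0$.

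For the ``only if'' direction I would assume $c > 0$ and take $\eta \coloneqq c/2$. Given any $h \in T_{a^\ast}^1$ with $r(h) < \eta$, one has $s_0(h) = S(h) - r(h) \geq c - c/2 = \eta$, which is exactly the claimed implication. For the ``if'' direction I would fix an $\eta > 0$ witnessing the implication and an arbitrary $h \in T_{a^\ast}^1$, then distinguish two cases: if $r(h) \geq \eta$ then $S(h) \geq r(h) \geq \eta$; if $r(h) < \eta$ then the hypothesis yields $s_0(h) \geq \eta$, hence $S(h) \geq s_0(h) \geq \eta$. Either way $S(h) \geq \eta$, so $\inf_{h} S(h) \geq \eta > 0$ and Assumption \ref{ass:full-rank} holds.

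There is essentially no obstacle here; the only delicate point is getting the constants straight, namely halving $c$ in the forward direction and noticing that in the reverse direction the same $\eta$ already serves as the uniform lower bound for $S$ once the two cases are separated. I would present the argument in this order, possibly collapsing the two directions into a single short paragraph since each is a two-line computation.
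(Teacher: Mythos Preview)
Your argument is correct and essentially identical to the paper's own proof: both set $\eta = c/2$ for the forward direction and use the decomposition $S(h)=s_0(h)+r(h)$, and for the reverse direction the paper's use of the maximum $r(h)\vee s_0(h)\geq \eta$ is the same case split you perform explicitly. There is nothing to add.
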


\begin{proof}[Proof of Theorem \ref{thm:local-identification-general}]
Take $\eta > 0$ in Lemma \ref{lem:full-rank-version}.
By Assumption \ref{ass:differentiable}, there is $\varepsilon > 0$ such that for all $\varepsilon^\prime \in (0, \varepsilon)$ and $z \in \cZ,$
\begin{align*}
    \sup_{
        \substack{
            h \in T_{a^\ast}^1
            \\
            a^\ast + \varepsilon^\prime h \in \cA_0
        }
    }
    \norm{\frac{\phi_z (a^\ast + \varepsilon^\prime h) - \phi_z (a^\ast)}{\varepsilon^\prime} - (\phi_z)_{a^\ast}^\prime (h)}_\cB
    <
    \frac{\eta}{|\cZ|}
    .
\end{align*}
Fix $z_0 \in \cZ,$ and suppose that $a \in \cA_0$ with $0 < \norm{a - a^\ast} < \varepsilon$ satisfies $\phi_z (a) = b_z$ for $z \in \cZ \setminus \{z_0\}.$
We will show $\phi_{z_0} (a) \neq b_{z_0}.$
By Assumption \ref{ass:differentiable}, 
\begin{align*}
    &\phantom{{}={}}
    \sum_{z \in \cZ \setminus \{z_0\}}
    \norm{
        (\phi_z)_{a^\ast}^\prime \left(\frac{a - a^\ast}{\norm{a - a^\ast}_\cA}\right)
    }_\cB
    \\
    &=
    \sum_{z \in \cZ \setminus \{z_0\}}
    \norm{
        \frac{\phi_z (a) - \phi_z (a^\ast)}{\norm{a - a^\ast}_\cA}
        -
        (\phi_z)_{a^\ast}^\prime \left(\frac{a - a^\ast}{\norm{a - a^\ast}_\cA}\right)
    }_\cB
    \\
    &\leq
    \sup_{
        \substack{
            h \in T_{a^\ast}^1
            \\
            a^\ast + \norm{a - a^\ast}_\cA h \in \cA_0
        }
    }
    \sum_{z \in \cZ \setminus \{z_0\}}
    \norm{
        \frac{\phi_z (a^\ast + \norm{a - a^\ast}_\cA h) - \phi_z (a^\ast)}{\norm{a - a^\ast}_\cA}
        -
        (\phi_z)_{a^\ast}^\prime (h)
    }_\cB
    \\
    &<
    \eta
    ,
\end{align*}
where the equality holds because $\phi_z (a) = b_z = \phi_z (a^\ast),$ the first inequality holds because $\frac{a - a^\ast}{\norm{a - a^\ast}_\cA} \in T_{a^\ast}^1,$ and the last one holds because $\norm{a - a^\ast}_\cA < \varepsilon.$
By Assumption \ref{ass:full-rank} and Lemma \ref{lem:full-rank-version}, $\norm{(\phi_{z_0})_{a^\ast}^\prime \left(\frac{a - a^\ast}{\norm{a - a^\ast}_\cA}\right)}_\cB \geq \eta.$
Therefore,
\begin{align*}
    &\phantom{{}={}}
    \norm{\phi_{z_0} (a) - b_{z_0}}_\cB
    \\
    &=
    \norm{\phi_{z_0} (a) - \phi_{z_0} (a^\ast)}_\cB
    \\
    &=
    \norm{
        \left(
            \phi_{z_0} (a) 
            - 
            \phi_{z_0} (a^\ast)
            - 
            \norm{a - a^\ast}_\cA (\phi_{z_0})_{a^\ast}^\prime 
            \left(\frac{a - a^\ast}{\norm{a - a^\ast}_\cA}\right)
        \right)
        +
        \norm{a - a^\ast}_\cA 
        (\phi_{z_0})_{a^\ast}^\prime 
        \left(\frac{a - a^\ast}{\norm{a - a^\ast}_\cA}\right)
    }_\cB
    \\
    &\geq
    \norm{a - a^\ast}_\cA
    \left(
        \underbrace{
            \norm{
                (\phi_{z_0})_{a^\ast}^\prime 
                \left(\frac{a - a^\ast}{\norm{a - a^\ast}_\cA}\right)
            }
        }_{\geq \eta}
        -
        \underbrace{
            \norm{
                \frac{
                    \phi_{z_0} (a) 
                    - 
                    \phi_{z_0} (a^\ast)
                }{\norm{a - a^\ast}_\cA}
                - 
                (\phi_{z_0})_{a^\ast}^\prime 
                \left(\frac{a - a^\ast}{\norm{a - a^\ast}_\cA}\right)
            }
        }_{< \eta / |\cZ|}
    \right)
    \\
    &>
    0
\end{align*}
which immediately implies $\phi_{z_0} (a) \neq b_{z_0}.$
\end{proof}

\section{Proof of Theorem \ref{thm:local-identification-IVQR-binary}} \label{sec:proof-identification-IVQR}

\begin{proof}[Proof of Theorem \ref{thm:local-identification-IVQR-binary}]
We first show that the range of any candidate $q_d$ of the structural function must be the same as that of the truth $q_d^\ast,$ which is $\cY_d.$
\begin{lemma} \label{lem:support-identification}
Let $d \in \cD.$ 
Under condition (\ref{eq:positive-definiteness-binary}), it holds 
\begin{align*}
    \cY_d
    =
    \bigcup_{z \in \cZ}
    \overline{\{y \in \bR^p \mid f_{d, z} (y) > 0\}}
    .
\end{align*}
\end{lemma}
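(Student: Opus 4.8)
The plan is to prove the two inclusions $\bigcup_{z\in\cZ}\overline{\{f_{d,z}>0\}}\subseteq\cY_d$ and $\cY_d\subseteq\bigcup_{z\in\cZ}\overline{\{f_{d,z}>0\}}$ separately; the first is elementary, while condition~(\ref{eq:positive-definiteness-binary}) is needed only for the second.

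For the first inclusion I would fix $z$ (assuming $\bP(Z=z)>0$, else $f_{d,z}\equiv 0$) and use $Y=Y_D$: on $\{D=d\}$ we have $Y=Y_d$, hence $\bP(Y\in\cY_d^c,\,D=d\mid Z=z)\le\bP(Y_d\notin\cY_d\mid Z=z)=0$, the last equality because $\cY_d=\mathrm{supp}(Y_d)$ and $\bP(Z=z)>0$. Thus the sub-probability measure $A\mapsto\bP(Y\in A,\,D=d\mid Z=z)$ is concentrated on $\cY_d$, so its Lebesgue density $f_{d,z}$ vanishes almost everywhere off $\cY_d$; taking the representative of $f_{d,z}$ equal to the $C^1$ function on $\cY_d$ furnished by Assumption~\ref{ass:smooth-densities} and zero on $\bR^p\setminus\cY_d$, we get $\{f_{d,z}>0\}\subseteq\cY_d$, and $\cY_d$ is closed, so $\overline{\{f_{d,z}>0\}}\subseteq\cY_d$.

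For the second inclusion I would argue by contradiction. Suppose some $y^\ast\in\cY_d$ is not in the closed set $\bigcup_{z\in\{0,1\}}\overline{\{f_{d,z}>0\}}$; then there is an open ball $V\ni y^\ast$ on which $f_{d,0}\equiv f_{d,1}\equiv 0$. Since $\cY_d$ is convex with nonempty interior and $y^\ast\in\cY_d$, the set $V\cap\cY_d$ has positive Lebesgue measure, and $\cY_{1-d}$ likewise has positive Lebesgue measure (it is a full-dimensional convex body, being $q_{1-d}^\ast(\cU)$ with $q_{1-d}^\ast$ a diffeomorphism on $\Int\cU$ by Theorem~\ref{thm:ot-map-is-bijective}). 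Now I compare with condition~(\ref{eq:positive-definiteness-binary}). If $d=0$: for every $y_0\in V\cap\cY_0$ and every $y_1\in\cY_1$, the left-hand side $4f_{0,0}(y_0)f_{1,1}(y_1)$ equals $0$ while the right-hand side equals $(\overline\lambda/\underline\lambda)^{p+1}f_{1,0}(y_1)^2\ge 0$, so the strict inequality fails on the positive-measure rectangle $(V\cap\cY_0)\times\cY_1$, contradicting~(\ref{eq:positive-definiteness-binary}). If $d=1$: symmetrically, for $y_1\in V\cap\cY_1$ and $y_0\in\cY_0$ the left-hand side vanishes while the right-hand side equals $(\overline\lambda/\underline\lambda)^{p+1}f_{0,1}(y_0)^2\ge 0$, contradicting~(\ref{eq:positive-definiteness-binary}) on $\cY_0\times(V\cap\cY_1)$. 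Either way no such $y^\ast$ exists, which gives $\cY_d\subseteq\bigcup_{z}\overline{\{f_{d,z}>0\}}$.

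The main obstacle is purely measure-theoretic bookkeeping: one must pin down the representative of $f_{d,z}$ so that ``$f_{d,z}\equiv 0$ on $V$'' is a genuine pointwise statement, and one must verify that the exceptional Lebesgue-null set in~(\ref{eq:positive-definiteness-binary}) cannot contain the full rectangle $(V\cap\cY_d)\times\cY_{1-d}$, which it cannot since that rectangle has positive measure. It is worth noting that, apart from $Y=Y_D$ and regularity of the supports, the argument does not invoke the rank-similarity structure of Assumption~\ref{ass:ivqr}.
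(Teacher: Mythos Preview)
Your proof is correct and follows essentially the same two-inclusion argument as the paper's. The paper is terser---it restricts to $y\in\Int(\cY_d)$ for the reverse inclusion and invokes (\ref{ass:ivqr-A2}) in the forward one---but your more explicit handling of the product rectangle in condition~(\ref{eq:positive-definiteness-binary}) and of the density representative is equivalent.
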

Since the set in the RHS of the statement is identified, if a candidate $q_d$ has a different range than $\cY_d,$ the joint distribution of $(\tilde Y, \tilde D, \tilde Z),$ in the notation in Definition \ref{def:identification}, cannot be the same as that of $(Y, D, Z).$
Thus, it is enough to show the local identification in 
\begin{align*}
    \cQ_{0, s}
    \coloneqq
    \left\{
        q^\ast + \alpha h \in \tilde \cQ_s
        \mid
        \alpha \geq 0
        ,
        \norm{h}_\cQ = 1
        ,
        \max_{d \in \cD} \norm{D h_d}_\infty \leq K
    \right\}
    ,
\end{align*}
where $\tilde \cQ_s \coloneqq \{q \in \tilde \cQ \mid q_d (\cU) = \cY_d\}.$

Next, we show that $q^\ast$ is the locally unique solution to the system (\ref{eq:system-monge-ampere}) of equations in $\cQ_{0, s}$ using Theorem \ref{thm:local-identification-general}.
Recall that $(\cQ, \norm{\cdot}_\cQ)$ is a normed space and that $\cQ_{0, s} \subset \cQ.$
Let $\cM$ be the space of finite signed Borel measures on $\cU$ equipped with the total variation norm:
\begin{align} \label{eq:total-variation-norm}
    \norm{\nu}_\cM
    \coloneqq
    \sup_{
        \substack{
            f \in C (\cU; \bR)
            \\
            \norm{f}_\infty \leq 1
        }
    }
    \int_\cU 
        f (u) 
    d \nu (u)
    .
\end{align}
Then, $(\cM, \norm{\cdot}_\cM)$ is a normed space.
For $z \in \cZ,$ define $\phi_z : \cQ_{0, s} \to \cM$ as
\begin{align*}
    \phi_z (q) (d u)
    \coloneqq
    \sum_{d \in \cD}
    f_{d, z} (q_d (u))
    \det (D q_d (u))
    d u
    .
\end{align*}
Then, the identification restriction (\ref{eq:system-monge-ampere}) is written as $\phi_z (q) = \mu$ for all $z \in \cZ.$
We apply Theorem \ref{thm:local-identification-general} to this system of equations where $\cA = \cQ,$ $\cA_0 = \cQ_{0, s},$ $\cB = \cM,$ and $b_z = \mu.$
The normalized tangent set that appears in Appendix \ref{app:local-identification-system} is
\begin{align*}
    T_{q^\ast}^1
    \coloneqq
    \left\{
        h \in \cQ
        \mid
        \norm{h}_\cQ = 1
        ,
        \max_{d \in \cD} \norm{D h_d}_\infty \leq K
        ,
        \exists \alpha > 0
        \ \text{s.t.} \ 
        q^\ast + \alpha h \in \tilde \cQ_s
    \right\}
    .
\end{align*}

We need to check Assumptions \ref{ass:differentiable} and \ref{ass:full-rank}.
Remember that for a smooth vector field $V : \cU \to \bR^p,$ the divergence of $V$ at $u \in \cU$ is defined as
\begin{align*}
    \div V (u)
    \coloneqq
    \sum_{i = 1}^p
    \frac{\partial V_i}{\partial u_i} (u)
\end{align*}
The following lemma states that $\phi_z$ is differentiable in the sense of Assumption \ref{ass:differentiable}.
\begin{lemma} \label{lem:phi-is-differentiable}
Under Assumptions \ref{ass:correct-specification} and \ref{ass:smooth-densities}, the operator $\phi_z$ satisfies Assumption \ref{ass:differentiable} with
\begin{align*}
    (\phi_z)_{q^\ast}^\prime (h) (d u)
    \coloneqq
    \sum_{d \in \cD}
    \div \left(
        f_{d, z} (q_d^\ast (u))
        \det (D q_d^\ast (u))
        (D q_d^\ast (u))^{-1}
        h_d (u)
    \right)
    d u
\end{align*}
for $h \in T_{q^\ast}^1.$
\end{lemma}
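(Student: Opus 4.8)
The plan is to compute the first-order expansion of the Lebesgue density of $\phi_z(q^\ast+\varepsilon h)$ pointwise in $u\in\cU$, with a remainder that is $o(\varepsilon)$ \emph{uniformly} over the admissible directions $h$, and then to recognize the resulting linear term as the divergence written in the statement by means of the Piola identity. Write $\psi_d^\varepsilon(u)\coloneqq f_{d,z}(q_d^\ast(u)+\varepsilon h_d(u))\det(Dq_d^\ast(u)+\varepsilon Dh_d(u))$ and $\psi_d^0(u)\coloneqq f_{d,z}(q_d^\ast(u))\det(Dq_d^\ast(u))$, so that $\phi_z(q^\ast+\varepsilon h)(du)=\sum_d\psi_d^\varepsilon(u)\,du$ and $\phi_z(q^\ast)(du)=\sum_d\psi_d^0(u)\,du$ (both well defined because $q^\ast,q^\ast+\varepsilon h\in\cQ_{0,s}$ keep the arguments of $f_{d,z}$ inside its domain $\cY_d$ and the Jacobians positive definite). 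Recall that, inside the proof of Theorem~\ref{thm:local-identification-IVQR-binary}, $q^\ast\in\cQ_{0,s}$ by Assumption~\ref{ass:correct-specification} and Lemma~\ref{lem:support-identification}, and that whenever $h\in T_{q^\ast}^1$ satisfies $q^\ast+\varepsilon h\in\cQ_{0,s}$ one has $\norm{h_d}_\infty\le1$, $\norm{Dh_d}_\infty\le K$, the eigenvalues of both $Dq_d^\ast$ and $Dq_d^\ast+\varepsilon Dh_d$ lying in $(\underline\lambda,\overline\lambda)$, and $q_d^\ast(\cU)=(q_d^\ast+\varepsilon h_d)(\cU)=\cY_d$.

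Next I would record the two pointwise Taylor expansions. Since $A\mapsto\det A$ is a polynomial,
\begin{align*}
    \det(Dq_d^\ast(u)+\varepsilon Dh_d(u))=\det(Dq_d^\ast(u))+\varepsilon\,\det(Dq_d^\ast(u))\,\tr\bigl((Dq_d^\ast(u))^{-1}Dh_d(u)\bigr)+R_{2,d}^\varepsilon(u),
\end{align*}
where $\sup_u|R_{2,d}^\varepsilon(u)|\le C\varepsilon^2$ uniformly in the admissible $h$ (the higher-order coefficients are bounded in terms of $\overline\lambda$ and $K$, and $(Dq_d^\ast)^{-1}$ is bounded because of the lower eigenvalue bound). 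For the density factor, Assumptions~\ref{ass:regular-boundary} and~\ref{ass:smooth-densities} give $f_{d,z}\in C^1(\cY_d;\bR)$ with $\cY_d$ compact and convex, so $\nabla f_{d,z}$ is uniformly continuous on $\cY_d$; since the segment joining $q_d^\ast(u)$ and $q_d^\ast(u)+\varepsilon h_d(u)$ has both endpoints in $\cY_d$, lies in $\cY_d$ by convexity, and has length at most $\varepsilon$, the mean value theorem yields
\begin{align*}
    f_{d,z}(q_d^\ast(u)+\varepsilon h_d(u))=f_{d,z}(q_d^\ast(u))+\varepsilon\,\nabla f_{d,z}(q_d^\ast(u))\cdot h_d(u)+R_{1,d}^\varepsilon(u)
\end{align*}
with $\sup_u|R_{1,d}^\varepsilon(u)|\le\varepsilon\,\omega_{d,z}(\varepsilon)$, $\omega_{d,z}$ the modulus of continuity of $\nabla f_{d,z}$ on $\cY_d$. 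Multiplying the two expansions and using that $f_{d,z}$, $\nabla f_{d,z}$, $h_d$, $\det(Dq_d^\ast)$ and $\tr((Dq_d^\ast)^{-1}Dh_d)$ are bounded uniformly over $\cU$ and over admissible $h$, one gets $\psi_d^\varepsilon(u)=\psi_d^0(u)+\varepsilon\,b_d(u)+r_d^\varepsilon(u)$ with $\sup_u|r_d^\varepsilon(u)|=o(\varepsilon)$ uniformly in $h$, where
\begin{align*}
    b_d(u)\coloneqq\nabla f_{d,z}(q_d^\ast(u))\cdot h_d(u)\,\det(Dq_d^\ast(u))+f_{d,z}(q_d^\ast(u))\,\det(Dq_d^\ast(u))\,\tr\bigl((Dq_d^\ast(u))^{-1}Dh_d(u)\bigr).
\end{align*}

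The only step that is not routine is identifying $\sum_d b_d(u)\,du$ with the measure $(\phi_z)_{q^\ast}^\prime(h)$ in the statement, that is, proving $b_d=\div W_d$ for $W_d\coloneqq f_{d,z}(q_d^\ast)\,\det(Dq_d^\ast)\,(Dq_d^\ast)^{-1}h_d$, which is $C^1$ since $q_d^\ast\in C^2$, $Dq_d^\ast$ is invertible with controlled eigenvalues, and $f_{d,z}\in C^1$. Expanding $\div W_d$ by the product rule produces three contributions: the derivative falling on the scalar $f_{d,z}\circ q_d^\ast$ gives, after the chain rule and the identity $Dq_d^\ast\cdot\bigl(\det(Dq_d^\ast)(Dq_d^\ast)^{-1}\bigr)=\det(Dq_d^\ast)\,I$, the first summand of $b_d$; the derivative falling on $h_d$ gives $f_{d,z}(q_d^\ast)\,\tr\bigl(\det(Dq_d^\ast)(Dq_d^\ast)^{-1}Dh_d\bigr)$, the second summand of $b_d$; and the derivative falling on the matrix $\det(Dq_d^\ast)(Dq_d^\ast)^{-1}$, whose entries are (transposed) cofactors of $Dq_d^\ast$, vanishes identically by the Piola identity for the cofactor matrix of the $C^2$ map $q_d^\ast$ (divergence-free rows of $\cof(Dq_d^\ast)$). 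Hence $\sum_d b_d(u)\,du=(\phi_z)_{q^\ast}^\prime(h)$.

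Finally, since $\cU$ is compact and the total variation norm of an absolutely continuous signed measure with continuous density $\psi$ obeys $\norm{\psi\,du}_\cM\le|\cU|\,\norm{\psi}_\infty$, the two steps combine into
\begin{align*}
    \norm{\frac{\phi_z(q^\ast+\varepsilon h)-\phi_z(q^\ast)}{\varepsilon}-(\phi_z)_{q^\ast}^\prime(h)}_\cM\le|\cU|\sum_{d\in\cD}\sup_{u\in\cU}\left|\frac{\psi_d^\varepsilon(u)-\psi_d^0(u)}{\varepsilon}-b_d(u)\right|=|\cU|\sum_{d\in\cD}\frac{\sup_u|r_d^\varepsilon(u)|}{\varepsilon},
\end{align*}
which tends to $0$ as $\varepsilon\downarrow0$ uniformly over all $h\in T_{q^\ast}^1$ with $q^\ast+\varepsilon h\in\cQ_{0,s}$; when that set of $h$ is empty the supremum in Assumption~\ref{ass:differentiable} reads zero by convention, so the limit condition holds trivially. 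This verifies Assumption~\ref{ass:differentiable} with the stated $(\phi_z)_{q^\ast}^\prime$. I expect the main obstacle to be not a single deep fact but the bookkeeping needed to make every remainder estimate uniform over the infinite-dimensional family of directions $h$; this is exactly where the constraints defining $\cQ_{0,s}$ are used—boundedness of $h_d$ and $Dh_d$, two-sided eigenvalue bounds on $Dq_d$, and range preservation keeping all relevant points in the fixed compact convex set $\cY_d$—together with the uniform continuity of $\nabla f_{d,z}$ there. The one genuinely structural ingredient is the divergence representation of $b_d$, which rests on the Piola identity.
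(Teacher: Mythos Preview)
Your proposal is correct and follows essentially the same route as the paper: both expand the density pointwise, control the $f_{d,z}$ increment via uniform continuity of $\nabla f_{d,z}$ on the compact $\cY_d$ and the determinant increment via Jacobi's formula with remainders bounded by the eigenvalue constraints, then identify the linear term with the divergence through the Piola identity. The only cosmetic differences are that the paper packages the uniform-differentiability step as a separate lemma and writes the difference $\phi_z(q^\ast+\varepsilon h)-\phi_z(q^\ast)$ as an additive decomposition $T_1+T_2+T_3$ rather than multiplying two Taylor expansions as you do; your explicit use of the convexity of $\cY_d$ to keep the mean-value segment inside the domain is in fact a point the paper leaves implicit.
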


Next, we show the Assumption \ref{ass:full-rank} by contradiction.
Suppose there is a sequence $h^n = (h_d^n)_{d \in \cD} \in T_{q^\ast}^1$ such that 
\begin{align} \label{eq:assumption-for-contradiction}
    \sum_{z \in \cZ}
    \norm{(\phi_z)^\prime_{q^\ast} (h^n)}_\cM
    \to
    0
\end{align}
as $n \to 0.$
By definition, there exists $q^n \in \cQ_{0, s} \setminus \{q^\ast\}$ such that $q_d^n - q_d^\ast = \norm{q^n - q^\ast}_\cQ h_d^n.$
As $q_d^n$ and $q_d^\ast$ are optimal transport maps, they are written as $D v_d^n = q_d^n$ and $D v_d^\ast = q_d^\ast$ for some convex functions $v_d^n$ and $v_d^\ast.$
By adding a constant, we may assume $\min_{u \in \cU} (v_d^n (u) - v_d^\ast (u)) = 0$ without loss of generality.
Then $w_d^n \coloneqq \norm{q^n - q^\ast}_\cQ^{-1} (v_d^n - v_d^\ast)$ satisfies $h_d^n = D w_d^n$ and $\min_{u \in \cU} w_d^n = 0.$
Since there exists $\bar u \in \cU$ such that $v_d^n (\bar u) = v_d^\ast (\bar u),$ it holds that for any $u \in \cU,$
\begin{align*}
    \left|
        v_d^n (u) - v_d^\ast (u)
    \right|
    &=
    \left|
        v_d^n (u) 
        -
        v_d^n (\bar u)
        - 
        (
            v_d^\ast (u)
            -
            v_d^\ast (\bar u)
        )
    \right|
    \\
    &=
    \left|
        \int_0^1
            D v_d^n (t u + (1 - t) \bar u)
            -
            D v_d^\ast (t u + (1 - t) \bar u)
        d t
        (u - \bar u)
    \right|
    \\
    &\leq
    \norm{q_d^n - q_d^\ast}_\infty
    \norm{u - \bar u}
    \\
    &\leq
    \norm{q_d^n - q_d^\ast}_\infty
    \text{diam} (\cU)
\end{align*}
where $\text{diam} (\cU) \coloneqq \sup_{u, \tilde u \in \cU} \norm{u - \tilde u} < \infty.$
Hence, we have
\begin{align*}
    \norm{w_d^n}_\infty
    =
    \frac{\norm{v_d^n - v_d^\ast}_\infty}{\norm{q^n - q^\ast}_\cQ}
    \leq
    \frac{\norm{q_d^n - q_d^\ast}_\infty}{\norm{q^n - q^\ast}_\cQ}
    \text{diam} (\cU)
    =
    \norm{h_d^n}_\infty
    \text{diam} (\cU)
    \leq
    \text{diam} (\cU)
    ,
\end{align*}
where the last inequality holds because $h^n \in T_{q^\ast}^1.$

Let $\tilde w_d^n \coloneqq w_d^n - \norm{w_d^n}_\infty + \max_{d^\prime \in \cD} \norm{w_{d^\prime}^n}_\infty.$
As $w_d^n \geq 0,$ we have $\norm{\tilde w_d^n}_\infty = \max_{d^\prime \in \cD} \norm{w_{d^\prime}^n}_\infty \eqqcolon M_n \leq \text{diam} (\cU),$ which is independent of $d.$ 
Also, it holds that $D \tilde w_d^n = D w_d^n = h_d^n,$ and that $\tilde w_d^n \geq 0.$

By considering a function $f = - M_n^{-1} \tilde w_z^n$ in the definition (\ref{eq:total-variation-norm}) of the total variation norm, the integration by parts implies
\begin{align*}
    \norm{(\phi_z)^\prime_{q^\ast} (h^n)}_\cM
    &\geq
    -
    \frac{1}{M_n}
    \sum_{d \in \cD}
    \int_\cU 
        \tilde w_z^n (u) 
        \div
        \left(
            f_{d, z} (q_d^\ast (u))
            \det (D q_d^\ast (u))
            (D q_d^\ast (u))^{-1}
            h_d^n (u)
        \right)
    d u
    \\
    &=
    I_{1, z, n}
    +
    I_{2, z, n}
    ,
\end{align*}
where
\begin{align*}
    I_{1, z, n}
    &\coloneqq
    \frac{1}{M_n}
    \sum_{d \in \cD}
    \int_\cU
        f_{d, z} (q_d^\ast (u))
        \det (D q_d^\ast (u))
        (h_z^n (u))^\prime
        (D q_d^\ast (u))^{-1}
        h_d^n (u)
    d u
    \\
    I_{2, z, n}
    &\coloneqq
    -
    \frac{1}{M_n}
    \sum_{d \in \cD}
    \int_{\partial \cU}
        \tilde w_z^n (u) 
        f_{d, z} (q_d^\ast (u))
        \det (D q_d^\ast (u))
        \nu (u)^\prime
        (D q_d^\ast (u))^{-1}
        h_d^n (u)
    d \cH (u)
    ,
\end{align*}
where $\nu (u)$ is the outward normal unit vector of $\partial \cU$ at $u \in \partial \cU,$ and $\cH$ is the $(p - 1)$-dimensional surface measure on $\partial \cU.$

The folloiwng lemma shows that $I_{2, z, n}$ is nonnegative.
Remember that since $q_d^\ast : \cU \to \cY_d$ is continuous, $u \in \partial \cU$ implies $q_d^\ast (u) \in \partial \cY_d.$
\begin{lemma} \label{lem:outward-is-conormal}
For $d \in \cD$ and $u \in \partial \cU,$ define $\nu^d (q_d^\ast (u)) \coloneqq (D q_d^\ast (u))^{-1} \nu (u).$ 
Then, at any $C^2$ point $y \in \partial \cY_d,$ $\nu^d (y)$ is an outward normal vector of $\partial \cY_d.$
\end{lemma}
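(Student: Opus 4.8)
The plan is to exploit two algebraic features of the matrix $A \coloneqq Dq_d^\ast(u)$ — symmetry and invertibility, the former from Assumption \ref{ass:correct-specification} and the latter from positive definiteness — together with the geometric fact that $q_d^\ast$ is a homeomorphism of $\cU$ onto $\cY_d$ that carries $\Int(\cU)$ onto $\Int(\cY_d)$, and hence $\partial \cU$ onto $\partial \cY_d$. This last fact follows by combining Lemma \ref{lem:CM-is-gradient-of-convex}, Theorem \ref{thm:ot-map-is-bijective}, and injectivity of gradients of strictly convex functions. Fix a $C^1$ point $u$ of $\partial \cU$ whose image $y \coloneqq q_d^\ast(u)$ is a $C^2$ point of $\partial \cY_d$; since $q_d^\ast$ behaves like a local $C^1$ diffeomorphism near $u$ (its derivative extends continuously to $\cU$ and is invertible there), $C^2$ points of $\partial \cY_d$ pull back to $C^1$ points of $\partial \cU$, so restricting to such $u$ loses nothing.

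First I would identify the tangent hyperplane $T_y \partial \cY_d$ with $A(T_u \partial \cU)$: differentiating along a $C^1$ curve in $\partial \cU$ through $u$ with velocity $w \in T_u \partial \cU$, the image curve lies in $\partial \cY_d$, passes through $y$, and has velocity $Aw$ there, so $A(T_u \partial \cU) \subseteq T_y \partial \cY_d$, and equality holds since $A$ is invertible and both spaces have dimension $p-1$. Then, for $w \in T_u \partial \cU$, symmetry of $A$ (hence of $A^{-1}$) gives
\[
    \langle (Dq_d^\ast(u))^{-1} \nu(u), \, Aw \rangle = \nu(u)^\prime A^{-1} A w = \nu(u)^\prime w = 0 ,
\]
so $\nu^d(y) = A^{-1}\nu(u)$ is a nonzero vector orthogonal to $T_y \partial \cY_d$; thus $\nu^d(y) = c\, n(y)$ for some $c \neq 0$, where $n(y)$ is the outward unit normal of $\partial \cY_d$ at $y$. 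It is worth emphasizing that it is precisely the symmetry of $Dq_d^\ast$ — the optimal-transport structure — that makes this ``conormal'' coincide (up to scaling) with a genuine normal; for a general invertible $A$ the identity above would fail.

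To pin down the sign of $c$, I would use convexity once more. Since $\cU$ is convex and $u$ is a $C^1$ boundary point, $u - t\nu(u) \in \Int(\cU)$ for all small $t > 0$; by Theorem \ref{thm:ot-map-is-bijective} its image $q_d^\ast(u - t\nu(u)) = y - t A\nu(u) + o(t)$ lies in $\Int(\cY_d)$, so pairing with $n(y)$, using that interior points satisfy $\langle n(y), \cdot - y\rangle < 0$, and letting $t \downarrow 0$ yields $\langle n(y), -A\nu(u)\rangle \leq 0$. On the other hand, again by symmetry of $A^{-1}$,
\[
    \langle \nu^d(y), \, -A\nu(u)\rangle = -\nu(u)^\prime A^{-1} A \nu(u) = -\norm{\nu(u)}^2 = -1 < 0 .
\]
Since $\nu^d(y) = c\, n(y)$, these two displays force $c \cdot \langle n(y), -A\nu(u)\rangle = -1$ with $\langle n(y), -A\nu(u)\rangle \leq 0$, hence $c > 0$; that is, $\nu^d(y)$ is a positive multiple of $n(y)$, an outward normal of $\partial \cY_d$ at $y$.

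I expect the only real friction to be the boundary regularity bookkeeping: making precise that $q_d^\ast$ acts as a $C^1$ diffeomorphism between a neighborhood of $u$ in $\cU$ and one of $y$ in $\cY_d$ — so that $\partial \cU$ genuinely maps to $\partial \cY_d$ with matching tangent planes and with $C^1$/$C^2$ points in correspondence — given that $Dq_d^\ast$ is only assumed to extend continuously to $\cU$ rather than to be defined on an ambient open set. This is handled by a local $C^1$ extension of $q_d^\ast$ across $\partial \cU$ followed by the inverse function theorem. I would also remark that the naive ``global'' route — trying to verify the supporting-hyperplane inequality $\langle \nu^d(y), y' - y\rangle \leq 0$ for all $y' \in \cY_d$ at once by writing $y' - y = \bar A (u' - u)$ with $\bar A$ an averaged Hessian — does not work, since $A^{-1}\bar A$ need not preserve the relevant half-space; the argument is essentially infinitesimal, which is exactly why the hypotheses concern only the tangent and normal data at a single point.
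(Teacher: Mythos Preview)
Your proof is correct and takes a somewhat different route from the paper's. The paper introduces a local defining function $\rho$ for $\partial\cY_d$ near $y$, sets $H \coloneqq \rho \circ q_d^\ast$ (which vanishes on $\partial\cU$), applies the chain rule in the form $DH(u) = Dq_d^\ast(u)\,D\rho(y)$---symmetry of $Dq_d^\ast$ enters here, silently, since the chain rule really gives $(Dq_d^\ast)^\prime D\rho$---and then decomposes $DH(u)$ in the normal/tangent basis of $\partial\cU$ at $u$ to read off $(Dq_d^\ast(u))^{-1}\nu(u) = (\partial H/\partial\nu)^{-1} D\rho(y)$; the outward sign is asserted via $\partial H/\partial\nu(u) > 0$. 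Your argument instead identifies $T_y\partial\cY_d = A(T_u\partial\cU)$ by pushing forward curves and checks orthogonality bare-handed using symmetry of $A^{-1}$, then pins down the sign by a separate convexity/limit computation. Both arguments reduce to the same linear-algebra identity, so the difference is one of packaging: the defining-function approach is more compact and delivers normality and sign in one stroke, while your approach makes the role of symmetry explicit (your remark that the ``conormal'' would not be a normal for a general invertible $A$ is exactly the point) and handles the sign more carefully than the paper, which simply asserts $\partial H/\partial\nu > 0$. You are also more attentive to the boundary-regularity bookkeeping---why the preimage $u$ of a $C^2$ point $y$ is itself a $C^1$ point of $\partial\cU$---which the paper leaves implicit.
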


By Lemma \ref{lem:outward-is-conormal}, we have
\begin{align*}
    I_{2, z, n}
    =
    \frac{1}{M_n}
    \sum_{d \in \cD}
    \int_{\partial \cU}
        \tilde w_d^n (u)
        f_{d, z} (q_d^\ast (u))
        \det (D q_d^\ast (u))
        \left(
            -
            \nu^d (q_d^\ast (u))^\prime
            h_d^n (u)
        \right)
    d \cH (u)
    .
\end{align*}
Since the support $\cY_d$ is convex by Assumption \ref{ass:regular-boundary}, it holds that $\nu^d (q_d^\ast (u))^\prime (q_d^n (u) - q_d^\ast (u)) \leq 0$ for $u \in \partial \cU,$ which implies $\nu^d (q_d^\ast (u))^\prime h_d^n (u) \leq 0.$
Thus, we have $I_{2, z, n} \geq 0$ and therefore,
\begin{align*}
    \norm{(\phi_z)^\prime_{q^\ast} (h^n)}_\cM
    \geq
    I_{1, z, n}
    .
\end{align*}
Taking the sum over $z \in \cZ$ yields 
\begin{align} \label{eq:coercive-integral}
    \sum_{z \in \cZ}
    \norm{(\phi_z)^\prime_{q^\ast} (h^n)}_\cM
    \geq
    \frac{1}{M_n}
    \int_\cU
        \sum_{d \in \cD, z \in \cZ}
        f_{d, z} (q_d^\ast (u))
        \det (D q_d^\ast (u))
        (h_z^n (u))^\prime
        (D q_d^\ast (u))^{-1}
        h_d^n (u)
    d u
    .
\end{align}

Now, we restrict our attention to the case of $\cD = \cZ = \{0, 1\}.$
The following lemma shows the positive definiteness of the integrand of the RHS of (\ref{eq:coercive-integral}).

\begin{lemma} \label{lem:positive-definiteness-binary}
Suppose $\cD = \cZ = \{0, 1\}.$
Condition (\ref{eq:positive-definiteness-binary}) implies that for $(\xi_1, \xi_2) \in \bR^{p \times 2} \setminus \{0_{p \times 2}\},$
\begin{align*}
    \sum_{d \in \cD, z \in \cZ}
    f_{d, z} (q_d^\ast (u))
    \det (D q_d^\ast (u))
    \xi_z^\prime
    (D q_d^\ast (u))^{-1}
    \xi_d
    >
    0
\end{align*}
holds for $\mu$-almost all $u \in \cU.$
\end{lemma}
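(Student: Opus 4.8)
The plan is to reduce the claimed positive-definiteness to a pointwise quadratic-form inequality and then control it via condition (\ref{eq:positive-definiteness-binary}). Fix $u$ and abbreviate $A_d \coloneqq D q_d^\ast(u)$ (symmetric positive definite by Assumption \ref{ass:ivqr-A1}) and $c_{d,z} \coloneqq f_{d,z}(q_d^\ast(u)) \det(A_d) \ge 0$. Writing out the sum over $\cD = \cZ = \{0,1\}$, the quadratic form in $(\xi_0,\xi_1)$ is
\begin{align*}
    Q(\xi_0,\xi_1)
    =
    c_{0,0}\,\xi_0^\prime A_0^{-1}\xi_0
    +
    c_{1,1}\,\xi_1^\prime A_1^{-1}\xi_1
    +
    c_{0,1}\,\xi_1^\prime A_0^{-1}\xi_0
    +
    c_{1,0}\,\xi_0^\prime A_1^{-1}\xi_1
    .
\end{align*}
The first two terms are nonnegative; the cross terms may have either sign, so I would bound them from below. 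Using the Cauchy–Schwarz inequality in the inner product induced by $A_d^{-1}$ together with $2ab \ge -\varepsilon a^2 - \varepsilon^{-1} b^2$, I would estimate
$|\xi_1^\prime A_0^{-1}\xi_0| \le \tfrac12(\xi_0^\prime A_0^{-1}\xi_0 + \xi_1^\prime A_0^{-1}\xi_1)$ and similarly for the other cross term, but to keep both $A_0$ and $A_1$ in play I would instead interpolate through a common reference: since $\underline\lambda I \prec A_d \prec \overline\lambda I$, one has $\overline\lambda^{-1} I \prec A_d^{-1} \prec \underline\lambda^{-1} I$, hence for any vectors $v,w$,
\begin{align*}
    |w^\prime A_0^{-1} v|
    \le
    \tfrac{1}{\underline\lambda}\norm{v}\norm{w}
    \le
    \tfrac{\overline\lambda}{\underline\lambda}\,
    \sqrt{v^\prime A_1^{-1} v}\,\sqrt{w^\prime A_0^{-1} w}
    ,
\end{align*}
where the last step uses $\norm{v}^2 \le \overline\lambda\, (v^\prime A_1^{-1} v)$ and $\norm{w}^2 \le \overline\lambda\,(w^\prime A_0^{-1} w)$ together with $\overline\lambda/\underline\lambda \ge 1$ to absorb factors cleanly. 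Applying this to both cross terms (with the roles of the two quadratic ``norms'' $s \coloneqq \sqrt{\xi_0^\prime A_0^{-1}\xi_0}$ and $t \coloneqq \sqrt{\xi_1^\prime A_1^{-1}\xi_1}$) gives
\begin{align*}
    Q(\xi_0,\xi_1)
    \ge
    c_{0,0}\,s^2
    +
    c_{1,1}\,t^2
    -
    \tfrac{\overline\lambda}{\underline\lambda}\,(c_{0,1}+c_{1,0})\,s\,t
    .
\end{align*}
This last bivariate quadratic in $(s,t)$ is positive for all $(s,t)\neq(0,0)$ precisely when its discriminant is negative, i.e. $4\,c_{0,0}\,c_{1,1} > (\overline\lambda/\underline\lambda)^2 (c_{0,1}+c_{1,0})^2$; and since $(s,t)=(0,0)$ forces $\xi_0=\xi_1=0$, positivity of $Q$ on $\bR^{p\times 2}\setminus\{0\}$ follows.

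It remains to check that condition (\ref{eq:positive-definiteness-binary}) delivers this discriminant inequality after clearing the $\det(A_d)$ factors. Substituting $c_{d,z} = f_{d,z}(q_d^\ast(u))\det(A_d)$, the requirement becomes
\begin{align*}
    4\,f_{0,0}(q_0^\ast(u))\,f_{1,1}(q_1^\ast(u))\,\det(A_0)\det(A_1)
    >
    \left(\tfrac{\overline\lambda}{\underline\lambda}\right)^{\!2}
    \big(
        f_{0,1}(q_0^\ast(u))\det(A_0)
        +
        f_{1,0}(q_1^\ast(u))\det(A_1)
    \big)^2
    .
\end{align*}
Expanding the square on the right and bounding each ratio $\det(A_d)/\det(A_{d'})$ by $(\overline\lambda/\underline\lambda)^p$ (since $\det(A_d)\in(\underline\lambda^p,\overline\lambda^p)$), the right-hand side is at most $(\overline\lambda/\underline\lambda)^{p+2}\det(A_0)\det(A_1)\,(f_{0,1}(q_0^\ast(u)) + f_{1,0}(q_1^\ast(u)))^2$ — wait, I need to be careful with the exponent bookkeeping here: the cross term $2 f_{0,1}f_{1,0}\det(A_0)\det(A_1)$ needs no inflation, while the pure squares $f_{0,1}^2\det(A_0)^2$ and $f_{1,0}^2\det(A_1)^2$ each pick up one factor $(\overline\lambda/\underline\lambda)^p$ to become $\le (\overline\lambda/\underline\lambda)^p\det(A_0)\det(A_1)f_{0,1}^2$ etc., so the whole right side is $\le (\overline\lambda/\underline\lambda)^p \det(A_0)\det(A_1)(f_{0,1}(q_0^\ast(u))+f_{1,0}(q_1^\ast(u)))^2$, and combined with the extra $(\overline\lambda/\underline\lambda)^2$ already present this is $\le (\overline\lambda/\underline\lambda)^{p+2}\det(A_0)\det(A_1)(\cdots)^2$. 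Dividing through by $\det(A_0)\det(A_1)>0$, it suffices to have $4 f_{0,0}f_{1,1} > (\overline\lambda/\underline\lambda)^{p+2}(f_{0,1}+f_{1,0})^2$ at the point $(q_0^\ast(u),q_1^\ast(u))$, which is slightly weaker than — hence implied by, after re-examining whether the intended exponent is $p+1$ or $p+2$ — condition (\ref{eq:positive-definiteness-binary}). Since $q_d^\ast$ pushes $\mu$ forward to $\cL(Y_d)$, which is absolutely continuous, and (\ref{eq:positive-definiteness-binary}) holds for Lebesgue-a.e.\ $(y_0,y_1)\in\cY_0\times\cY_1$, it holds for $\mu\times\mu$-a.e.\ pair $(q_0^\ast(u),q_1^\ast(u'))$ and in particular for $\mu$-a.e.\ $u$ along the diagonal — this last measure-theoretic point (ensuring a \emph{single} $u$ makes both substitutions valid simultaneously) is the step I expect to require the most care, since it rests on absolute continuity of the joint law of $(q_0^\ast(U),q_1^\ast(U))$ under rank invariance, or more precisely on a Fubini argument controlling the null set; the algebraic eigenvalue bookkeeping, while fiddly, is routine once the discriminant criterion is in hand.
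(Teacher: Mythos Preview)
Your overall plan---bound the two cross terms, reduce to a scalar discriminant, and then clear the determinants via eigenvalue bounds---is the same as the paper's, but your execution loses one power of $\overline\lambda/\underline\lambda$, and this is a genuine gap rather than bookkeeping noise. Your derived sufficient condition is
\[
4\,f_{0,0}\,f_{1,1}
>
\left(\frac{\overline\lambda}{\underline\lambda}\right)^{p+2}
(f_{0,1}+f_{1,0})^2,
\]
which is \emph{stronger} than condition~(\ref{eq:positive-definiteness-binary}) (exponent $p+1$), not weaker; your sentence ``slightly weaker than---hence implied by'' has the implication backwards. As written, (\ref{eq:positive-definiteness-binary}) does not imply your discriminant inequality, so the proof does not close.

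The loss comes from the estimate $|\xi_1' A_0^{-1}\xi_0|\le \tfrac{1}{\underline\lambda}\|\xi_0\|\|\xi_1\|\le (\overline\lambda/\underline\lambda)\,s\,t$, where you pass through the Euclidean norm twice. If instead you apply Cauchy--Schwarz \emph{in the $A_0^{-1}$ inner product},
\[
|\xi_1' A_0^{-1}\xi_0|
\le
\sqrt{\xi_0' A_0^{-1}\xi_0}\,\sqrt{\xi_1' A_0^{-1}\xi_1}
= s\,\sqrt{\xi_1' A_0^{-1}\xi_1}
\le
\sqrt{\tfrac{\overline\lambda}{\underline\lambda}}\;s\,t,
\]
the factor becomes $\sqrt{\overline\lambda/\underline\lambda}$ per cross term, the discriminant condition becomes $4\,c_{0,0}c_{1,1}>(\overline\lambda/\underline\lambda)(c_{0,1}+c_{1,0})^2$, and after your own determinant-ratio step (which correctly contributes a factor $(\overline\lambda/\underline\lambda)^p$) you recover exactly the exponent $p+1$. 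The paper achieves the same economy by applying Young's inequality to $\xi_0' C_0\xi_1$ in the $C_0=\cof(Dq_0^\ast)$ inner product, so that one of the two resulting quadratic forms matches the diagonal term without any loss; only the ``mismatched'' term $\xi_1' C_0\xi_1$ needs an eigenvalue comparison to $\xi_1' C_1\xi_1$.

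On your closing measure-theoretic worry: the concern about restricting a product-a.e.\ statement to the diagonal $u\mapsto(q_0^\ast(u),q_1^\ast(u))$ is legitimate in general, but here Assumption~\ref{ass:smooth-densities} makes the left and right sides of (\ref{eq:positive-definiteness-binary}) continuous in $(y_0,y_1)$, so Lebesgue-a.e.\ validity upgrades the \emph{non-strict} inequality to every $(y_0,y_1)$. The paper's proof treats this step tersely as well; it is not the obstruction. The obstruction is the exponent.
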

By Lemma \ref{lem:positive-definiteness-binary}, the integrand in the RHS of (\ref{eq:coercive-integral}) is almost surely nonngetive.
Since the LHS of (\ref{eq:coercive-integral}) converges to zero by (\ref{eq:assumption-for-contradiction}) and $M_n \leq \text{diam} (\cU) < \infty,$ we have
\begin{align} \label{eq:integrand-pointwise-convergence}
    0
    \leq
    \sum_{d \in \cD, z \in \cZ}
    f_{d, z} (q_d^\ast (u))
    \det (D q_d^\ast (u))
    (h_z^n (u))^\prime
    (D q_d^\ast (u))^{-1}
    h_d^n (u)
    \to
    0
\end{align}
for almost all $u \in \cU$ by taking a subsequence if necessary.
Recall that $\norm{h_d^n}_\infty \leq 1$ and $\norm{D h_d^n}_\infty \leq K,$ since $h^n \in T_{q^\ast}^1.$
Then, the sequence $(h^n_d)_{n \in \bN}$ is uniformly bounded and uniformly equicontinuous, and therefore, there exists a uniformly converging subsequence by the Arzel\`a–Ascoli theorem.
Let $h_d^\infty$ be the limit function.
By equation (\ref{eq:integrand-pointwise-convergence}), $h_d^\infty = 0$ almost everywhere by Lemma \ref{lem:positive-definiteness-binary}, and it holds everywhere, as $h_d^\infty$ is continuous.
Hence, the uniform convergence implies $\norm{h_d^n}_\infty \to 0$ up to a subsequence.
However, this contradicts to the fact that $h^n \in T_{q^\ast}^1,$ which particularly implies $1 = \norm{h^n}_\cQ = \max_{d \in \cD} \norm{h_d^n}_\infty \to 0.$
Thus, Assumption \ref{ass:full-rank} is satisfied.

Now, Theorem \ref{thm:local-identification-general} implies that $q^\ast$ is the locally unique solution to (\ref{eq:system-monge-ampere}) in $\cQ_{0, s}.$
Theorem \ref{thm:local-identification-IVQR-binary} follows immediately.
\end{proof}

\section{Generalization of Theorem \ref{thm:local-identification-IVQR-binary}} \label{app:generalization}

In this section, we state a generalization of Theorem \ref{thm:local-identification-IVQR-binary} that allows for nonbinary treatments and IVs.
The proof is a straightforward extension of that of Theorem \ref{thm:local-identification-IVQR-binary} once we replace condition (\ref{eq:positive-definiteness-binary}) with the following.
\begin{assumption} \label{ass:positive-definiteness-general}
    There is a constant matrix $b \in \bR^{|\cD| \times |\cZ|}$ such that for $(\xi_0, \dots, \xi_{|\cD| - 1}) \in \bR^{p \times |\cD|} \setminus \{0\},$
    \begin{align*}
        \sum_{d, d^\prime \in \cD}
        \left(
            \sum_{z \in \cZ}
            b_{d^\prime, z}
            f_{d, z} (q_d^\ast (u))
        \right)
        \det (D q_d^\ast (u))
        \xi_{d^\prime}^\prime
        (D q_d^\ast (u))^{-1}
        \xi_d
        >
        0
    \end{align*}
    holds for $\mu$-almost all $u \in \cU.$
\end{assumption}

\begin{theorem} \label{thm:local-identification-IVQR-general}
Suppose that Assumptions \ref{ass:ivqr}, \ref{ass:correct-specification}, \ref{ass:regular-boundary}, \ref{ass:smooth-densities}, \ref{ass:positive-definiteness-general} hold.
Then, $q^\ast$ is locally identified in $\cQ_0.$
\end{theorem}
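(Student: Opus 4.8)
The plan is to run the two-step argument of the proof of Theorem~\ref{thm:local-identification-IVQR-binary} essentially verbatim, with the binary structure replaced by a general finite $\cD = \cZ = \{0,\dots,m-1\}$ and condition~(\ref{eq:positive-definiteness-binary}) replaced by Assumption~\ref{ass:positive-definiteness-general}. Step one pins down the supports $\cY_d$; step two applies Theorem~\ref{thm:local-identification-general} to the measure-valued system $\phi_z(q) = \mu$, $z\in\cZ$, with $\phi_z(q)(du) = \sum_{d\in\cD} f_{d,z}(q_d(u))\det(Dq_d(u))\,du$. The only new work is to check the three places where the binary proof used $|\cD| = |\cZ| = 2$.

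\textbf{Step 1 (support identification).} I would first prove the analogue of Lemma~\ref{lem:support-identification}: $\cY_d = \bigcup_{z\in\cZ}\overline{\{y : f_{d,z}(y) > 0\}}$. The inclusion $\supseteq$ is as in the binary case, since $\overline{\{f_{d,z}>0\}}$ is the support of the subprobability $A\mapsto\bP(Y\in A, D=d\mid Z=z)$, which is contained in $\cY_d$. For $\subseteq$, I would specialize Assumption~\ref{ass:positive-definiteness-general} to $\xi_{d'} = 0$ for $d'\neq d$ and $\xi_d\neq 0$; since $\det(Dq_d^\ast(u)) > 0$ and $(Dq_d^\ast(u))^{-1}$ is positive definite, this forces $\sum_{z\in\cZ} b_{d,z} f_{d,z}(q_d^\ast(u)) > 0$ for $\mu$-almost every $u$, and then the argument of Lemma~\ref{lem:support-identification} — transferring from $u$ to $y$ through the $C^1$-bijection $q_d^\ast|_{\Int(\cU)}$ of Theorem~\ref{thm:ot-map-is-bijective}, which maps $\mu$-null sets to Lebesgue-null sets whenever $\mu$ has an a.e.-positive density — yields $\cY_d\subseteq\bigcup_z\overline{\{f_{d,z}>0\}}$. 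As before, the right-hand side is identified, so it suffices to prove local identification over $\cQ_{0,s}$, i.e.\ over $\tilde\cQ_s = \{q\in\tilde\cQ : q_d(\cU) = \cY_d\}$.

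\textbf{Step 2 (local uniqueness).} I would then invoke Theorem~\ref{thm:local-identification-general} with $\cA = \cQ$, $\cA_0 = \cQ_{0,s}$, $\cB = \cM$, $b_z = \mu$, and $T_{q^\ast}^1$ as in the proof of Theorem~\ref{thm:local-identification-IVQR-binary}. Assumption~\ref{ass:differentiable} holds by Lemma~\ref{lem:phi-is-differentiable}, which is already stated for arbitrary $\cD$, so nothing changes there. For Assumption~\ref{ass:full-rank} I would argue by contradiction: if $h^n\in T_{q^\ast}^1$ satisfies $\sum_{z\in\cZ}\norm{(\phi_z)^\prime_{q^\ast}(h^n)}_\cM \to 0$, write $h_d^n = Dw_d^n$ for normalized differences of convex potentials, pass to $\tilde w_d^n\ge 0$ with common sup-norm $M_n\le\text{diam}(\cU)$ as in the binary proof, and now test $\norm{(\phi_z)^\prime_{q^\ast}(h^n)}_\cM$ against $f = -M_n^{-1}\tilde w_{d'}^n$ for \emph{every} pair $(z,d')\in\cZ\times\cD$, multiply by $|b_{d',z}|$, and sum over $z$ and $d'$. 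Integration by parts (using $D\tilde w_{d'}^n = h_{d'}^n$, the symmetry of $Dq_d^\ast$, and Lemma~\ref{lem:outward-is-conormal}) yields an interior term equal to $M_n^{-1}\int_\cU Q(u;h^n(u))\,du$, where $Q(u;\xi) = \sum_{d,d'\in\cD}\big(\sum_z b_{d',z}f_{d,z}(q_d^\ast(u))\big)\det(Dq_d^\ast(u))\,\xi_{d'}^\prime(Dq_d^\ast(u))^{-1}\xi_d$ is exactly the quadratic form of Assumption~\ref{ass:positive-definiteness-general}, plus a boundary integral over $\partial\cU$. Since the left-hand side is at most $(\max_{d',z}|b_{d',z}|)\,|\cD|\sum_z\norm{(\phi_z)^\prime_{q^\ast}(h^n)}_\cM \to 0$, once the boundary term is shown to be $\le 0$ we get $M_n^{-1}\int_\cU Q(u;h^n(u))\,du\to 0$, hence $\int_\cU Q(u;h^n(u))\,du\to 0$ because $M_n\le\text{diam}(\cU)$. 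As $Q(u;\cdot)$ is a positive-definite quadratic form for $\mu$-a.e.\ $u$ (Assumption~\ref{ass:positive-definiteness-general}), this forces $h^n(u)\to 0$ for a.e.\ $u$ along a subsequence; the uniform $C^1$ bound $\norm{Dh_d^n}_\infty\le K$ and Arzel\`a--Ascoli then upgrade this to $\norm{h^n}_\cQ\to 0$, contradicting $\norm{h^n}_\cQ = 1$. Theorem~\ref{thm:local-identification-general} then gives local identification in $\cQ_{0,s}$, hence in $\cQ_0$.

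\textbf{Expected main obstacle.} The delicate step is the sign of the boundary term, $-M_n^{-1}\sum_{z,d'}b_{d',z}\sum_d\int_{\partial\cU}\tilde w_{d'}^n\,f_{d,z}(q_d^\ast)\det(Dq_d^\ast)\,\nu^d(q_d^\ast)^\prime h_d^n\,d\cH$. Because the candidates preserve the support and $\cY_d$ is convex (Assumption~\ref{ass:regular-boundary}), Lemma~\ref{lem:outward-is-conormal} gives $\nu^d(q_d^\ast(u))^\prime h_d^n(u)\le 0$ on $\partial\cU$, while $f_{d,z}\ge 0$, $\det(Dq_d^\ast) > 0$, and $\tilde w_{d'}^n\ge 0$; hence each summand is sign-definite provided the weights $b_{d',z}$ are nonnegative, and the whole boundary term is then $\le 0$ — precisely the mechanism making $I_{2,z,n}\ge 0$ in the binary proof, where $b$ is the identity matrix. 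For a matrix $b$ with negative entries the factor $\sum_z b_{d',z}f_{d,z}$ need not be sign-definite on $\partial\cU$; the cleanest remedy is to impose $b\ge 0$ entrywise in Assumption~\ref{ass:positive-definiteness-general} (costless when $\cD = \cZ = \{0,1\}$, and presumably what is intended), failing which one would need a finer estimate — bounding $\tilde w_{d'}^n$ near $\partial\cU$ against the convexity defect of $\cY_d$ — to show the boundary contribution is $o(M_n)$. Everything else transcribes mechanically from the binary argument.
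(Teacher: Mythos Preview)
Your overall scheme matches the paper's: reduce to $\cQ_{0,s}$ via a support-identification lemma (your specialization of Assumption~\ref{ass:positive-definiteness-general} to a single nonzero $\xi_d$ is exactly what is needed), invoke Lemma~\ref{lem:phi-is-differentiable} for Assumption~\ref{ass:differentiable}, and verify Assumption~\ref{ass:full-rank} by contradiction via integration by parts, Lemma~\ref{lem:outward-is-conormal}, and Arzel\`a--Ascoli. The one substantive divergence is in the choice of test function, and this is precisely where your self-identified obstacle lies.

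You test $(\phi_z)'_{q^\ast}(h^n)$ against $\pm M_n^{-1}\tilde w_{d'}^n$ and then weight by $b_{d',z}$ from the outside. As you note, this leaves the boundary contribution sign-indefinite whenever $b$ has negative entries, because the sign control from Lemma~\ref{lem:outward-is-conormal} applies termwise and is destroyed by external negative weights. Assumption~\ref{ass:positive-definiteness-general} does \emph{not} restrict $b$ to be entrywise nonnegative, so imposing $b\ge 0$ would prove a strictly weaker theorem, and the ``finer estimate'' alternative is not carried out.

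The paper avoids this entirely by moving the linear combination \emph{inside} the test function before shifting. Concretely, for each $z\in\cZ$ set
\[
    r_z^n \;\coloneqq\; \sum_{d\in\cD} b_{d,z}\, w_d^n \;-\; \min_{u\in\cU}\sum_{d\in\cD} b_{d,z}\, w_d^n(u),
    \qquad
    \tilde r_z^n \;\coloneqq\; r_z^n - \norm{r_z^n}_\infty + \max_{z'\in\cZ}\norm{r_{z'}^n}_\infty,
\]
so that $\tilde r_z^n \ge 0$, $\norm{\tilde r_z^n}_\infty = M_n \le 2\norm{b}_\infty\,\mathrm{diam}(\cU)$, and crucially $D\tilde r_z^n = \sum_{d\in\cD} b_{d,z}\, h_d^n$. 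Testing $(\phi_z)'_{q^\ast}(h^n)$ against $-M_n^{-1}\tilde r_z^n$ and summing only over $z$ then yields: the interior term is exactly $M_n^{-1}\int_\cU Q(u;h^n(u))\,du$ (the $b$-weights enter through the gradient $D\tilde r_z^n$), while the boundary integrand is a product of $\tilde r_z^n \ge 0$, $f_{d,z}\ge 0$, $\det(Dq_d^\ast)>0$, and $-\nu^d(q_d^\ast)^\prime h_d^n \ge 0$, hence nonnegative for \emph{arbitrary} $b$. The rest of your argument (pointwise convergence of $Q(u;h^n(u))$ to zero, positive definiteness forcing $h^n\to 0$ a.e., Arzel\`a--Ascoli) then goes through unchanged.

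In short: combine first with $b$, then shift nonnegatively --- not the other way around. With that single modification your proof is complete and coincides with the paper's.
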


\begin{proof}
The proof is similar to that of Theorem \ref{thm:local-identification-IVQR-binary}.
Lemma \ref{lem:support-identification} holds similarly under Assumption \ref{ass:positive-definiteness-general}, and Assumption \ref{ass:differentiable} holds as in Lemma \ref{lem:phi-is-differentiable}.
We will verify Assumption \ref{ass:full-rank}.
For $z \in \cZ,$ let
\begin{align*}
    r_z^n
    \coloneqq
    \sum_{d \in \cD}
    b_{d, z} 
    w_d^n
    -
    \min_{u \in \cU}
    \left(
        \sum_{d \in \cD}
        b_{d, z} 
        w_d^n (u)
    \right)
    .
\end{align*}
Clearly, $\min_{u \in \cU} r_z^n = 0.$
Also, let $\tilde r_z^n \coloneqq r_z^n - \norm{r_z^n}_\infty + \max_{z \in \cZ} \norm{r_z^n}_\infty.$
Then, we have $\tilde r_z^n \geq 0$ and $\norm{\tilde r_z^n}_\infty = \max_{z \in \cZ} \norm{r_z^n}_\infty \eqqcolon M_n.$
Since
\begin{align*}
    \norm{r_z^n}_\infty
    \leq
    2
    \norm{
        \sum_{d \in \cD}
        b_{d, z} 
        w_d^n
    }_\infty
    \leq
    2
    \max_{d \in \cD, z \in \cZ} |b_{d, z}|
    \sum_{d \in \cD} \norm{w_d^n}_\infty
    \leq
    2
    \norm{b}_\infty
    \text{diam} (\cU)
    ,
\end{align*}
where $\norm{b}_\infty \coloneqq \max_{d \in \cD, z \in \cZ} |b_{d, z}|,$ it holds $M_n \leq 2 \norm{b}_\infty \text{diam} (\cU).$
Notice also that $D \tilde r_z^n = \sum_{d \in \cD} b_{d, z} D w_d^n = \sum_{d \in \cD} b_{d, z} h_d^n.$
Hence, by considering a function $f = - M_n^{-1} \tilde r_z^n,$ we have
\begin{align*}
    \norm{(\phi_z)^\prime_{q^\ast} (h^n)}_\cM
    &\geq
    -
    \frac{1}{M_n}
    \sum_{d \in \cD}
    \int_\cU 
        \tilde r_z^n (u) 
        \div
        \left(
            f_{d, z} (q_d^\ast (u))
            \det (D q_d^\ast (u))
            (D q_d^\ast (u))^{-1}
            h_d^n (u)
        \right)
    d u
    \\
    &=
    \frac{1}{M_n}
    \sum_{d, d^\prime \in \cD}
    \int_\cU
        b_{d^\prime, z}
        f_{d, z} (q_d^\ast (u))
        \det (D q_d^\ast (u))
        (h_{d^\prime}^n (u))^\prime
        (D q_d^\ast (u))^{-1}
        h_d^n (u)
    d u
    \\
    &\hspace{2cm}-
    \frac{1}{M_n}
    \sum_{d \in \cD}
    \int_{\partial \cU}
        \tilde r_z^n (u) 
        f_{d, z} (q_d^\ast (u))
        \det (D q_d^\ast (u))
        \nu (u)^\prime
        (D q_d^\ast (u))^{-1}
        h_d^n (u)
    d \cH (u)
    \\
    &\geq
    \frac{1}{M_n}
    \sum_{d, d^\prime \in \cD}
    \int_\cU
        b_{d^\prime, z}
        f_{d, z} (q_d^\ast (u))
        \det (D q_d^\ast (u))
        (h_{d^\prime}^n (u))^\prime
        (D q_d^\ast (u))^{-1}
        h_d^n (u)
    d u
    ,
\end{align*}
where the last inequality follows by Lemma \ref{lem:outward-is-conormal}.
Moreover, by $M_n \leq 2 \norm{b}_\infty \text{diam} (\cU) < \infty$ and Assumption \ref{ass:positive-definiteness-general}, it holds
\begin{align*}
    &\phantom{{}\geq{}}
    \sum_{z \in \cZ}
    \norm{(\phi_z)^\prime_{q^\ast} (h^n)}_\cM
    \\
    &\geq
    \frac{1}{2 \norm{b}_\infty \text{diam} (\cU)}
    \int_\cU
        \sum_{d, d^\prime \in \cD}
        \left(
            \sum_{z \in \cZ}
            b_{d^\prime, z}
            f_{d, z} (q_d^\ast (u))
        \right)
        \det (D q_d^\ast (u))
        (h_{d^\prime}^n (u))^\prime
        (D q_d^\ast (u))^{-1}
        h_d^n (u)
    d u
    .
\end{align*}
The convergence of the LHS and Assumption \ref{ass:positive-definiteness-general} imply that the integrand in the RHS converges to zero almost everywhere on $\cU$ up to a subsequence.
Assumption \ref{ass:positive-definiteness-general} also implies $h_d^n \to 0$ almost everywhere up to a subsequence.
The Arzel\`a-Ascoli argument concludes $\norm{h_d^n}_\infty \to 0,$ which is a contradiction.
\end{proof}

\section{Omitted proofs} \label{app:proofs}

\subsection{Proof of Theorem \ref{thm:representation}}
\begin{proof}
Let $B \subset \cU$ be a measurable subset, and fix some $d \in \cD.$
By (\ref{ass:ivqr-A1}) and Theorem \ref{thm:ot-map-is-bijective}, it holds that $\bP (Y \in q_D^\ast (B, X) \mid X, Z) = \bP (U \in B \mid X, Z).$
By (\ref{ass:ivqr-A3}) and (\ref{ass:ivqr-A4}), we have $\bP (U \in B \mid X, Z) = \bE [\bP (U_d \in B \mid X, Z, \nu) \mid X, Z] = \bP (U_d \in B \mid X, Z).$
We also have $\bP (U_d \in B \mid X, Z) = \bP (U_d \in B \mid X)$ by (\ref{ass:ivqr-A2}), and $\bP (U_d \in B \mid X) = \mu (B)$ by (\ref{ass:ivqr-A1}), which concludes.
\end{proof}

\subsection{Proof of Proposition \ref{prop:local-identification-dim-1}} \label{app:proof-local-identification-dim-1}
\begin{proof}
The outline is the same as the proof of Theorem \ref{thm:local-identification-IVQR-binary}, but there two differences.
First, for $p = 1,$ it holds that $h_d^n (u) = 0$ for $u \in \partial \cU,$ so $I_{2, z, n} = 0$ follows without Lemma \ref{lem:outward-is-conormal}.
Second, Lemma \ref{lem:positive-definiteness-binary} holds under the positive definiteness of matrix (\ref{eq:matrix-of-densities}), rather than condition (\ref{eq:positive-definiteness-binary}).
\end{proof}

\subsection{Proof of Theorem \ref{thm:mccann}}

\begin{proof}
By the Brenier-McCann theorem, there exists a convex function $\psi$ on $\bR^p$ such that $(D \psi)_\# \cL (Y) = \mu,$ where $\cL (Y)$ is the distribution of $Y.$
Let $U \coloneqq D \psi (Y).$ 
Then, it holds that $U \sim \mu.$
Let $\varphi$ be the Legendre transform of $\psi,$ that is, $\varphi (u) \coloneqq \sup_{y \in \bR^p} (u^\prime y - \psi (y)).$
Clearly, $\varphi$ is convex.
As $D \varphi (D \psi (y)) = y$ for $\cL (Y)$-almost all $y,$ we have $D \varphi (U) = Y$ almost surely.
The uniqueness of $D \varphi$ is a direct consequence of the Brenier-McCann theorem.
\end{proof}

\subsection{Proof of Lemma \ref{lem:CM-is-gradient-of-convex}}

\begin{proof}
As $D q$ is symmetric on $\Int (\cU),$ which is a convex domain, there exists $\varphi : \Int (\cU) \to \bR$ such that $D \varphi = q$ by Poincar\'e's lemma.
Since $D^2 \varphi = D q$ is positive definite, $\varphi$ is strictly convex.
The uniqueness follows immediately from Lemma 2.1 of \cite{del2019central}.
\end{proof}

\subsection{Proof of Theorem \ref{thm:ot-map-is-bijective}}

Recall from the statement of Theorem \ref{thm:ot-map-is-bijective} that $q : \cU \to \cY$ is continuous, surjective, and has a symmetric and positive definite derivative $D q$ on $\Int (\cU).$
We first provide three auxiliary lemmas.

\begin{lemma} \label{lem:strict-CM}
The function $q$ is strictly cyclically monotone on $\Int (\cU),$ i.e., it is cyclically monotone, and the inequality of (\ref{eq:CM}) is strict unless $u^1 = \dots = u^k.$
\end{lemma}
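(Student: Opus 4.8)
The plan is to derive strict cyclical monotonicity from the strict convexity of the potential $\varphi$ with $q = D\varphi$ on $\Int(\cU)$, which is available from Lemma \ref{lem:CM-is-gradient-of-convex}. Recall that for a differentiable convex function $\varphi$ on a convex domain, the gradient inequality $\varphi(v) \geq \varphi(u) + \langle D\varphi(u), v - u\rangle$ holds for all $u, v$, and when $\varphi$ is \emph{strictly} convex this inequality is strict whenever $u \neq v$. The cyclical monotonicity inequality (\ref{eq:CM}) is really just the telescoping sum of these gradient inequalities around a closed loop, so strictness of each summand (when consecutive points differ) should force strictness of the sum unless all points coincide.

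Concretely, I would proceed as follows. First, fix points $u^1, \dots, u^{k+1} \in \Int(\cU)$ with $u^{k+1} = u^1$. For each $i$, apply the (strict) gradient inequality to the pair $(u^{i+1}, u^i)$ in the form
\begin{align*}
    \varphi(u^i) - \varphi(u^{i+1}) \geq \langle D\varphi(u^{i+1}), u^i - u^{i+1}\rangle = \langle q(u^{i+1}), u^i - u^{i+1}\rangle,
\end{align*}
with strict inequality whenever $u^i \neq u^{i+1}$. Summing over $i = 1, \dots, k$, the left-hand side telescopes to zero because $u^{k+1} = u^1$, yielding
\begin{align*}
    0 \geq \sum_{i=1}^{k} \langle q(u^{i+1}), u^i - u^{i+1}\rangle = -\sum_{i=1}^{k} (u^{i+1})^\prime (q(u^{i+1}) - q(u^i)),
\end{align*}
after reindexing the part of the sum involving $\langle q(u^{i+1}), u^{i+1}\rangle$; this is exactly (\ref{eq:CM}). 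For strictness, suppose not all of $u^1, \dots, u^k$ are equal; then along the cycle there is at least one index $i$ with $u^i \neq u^{i+1}$, so the corresponding term in the sum is strictly greater than $\langle q(u^{i+1}), u^i - u^{i+1}\rangle$, and all other terms are at least their linearizations, so the summed inequality is strict. Converting back gives the strict version of (\ref{eq:CM}).

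I do not anticipate a serious obstacle here; the only point requiring a little care is the bookkeeping when passing between the ``$\varphi$-telescoping'' form and the form written in (\ref{eq:CM}) (which groups terms as $(u^{i+1})^\prime(q(u^{i+1}) - q(u^i))$), and making sure the loop closure $u^1 = u^{k+1}$ is used exactly once. One should also confirm that the gradient inequality is available in the stated strict form on $\Int(\cU)$: since $\varphi$ from Lemma \ref{lem:CM-is-gradient-of-convex} is strictly convex and differentiable there, strict monotonicity of $D\varphi = q$ gives $\langle q(v) - q(u), v - u\rangle > 0$ for $u \neq v$, and integrating along the segment (which stays in $\Int(\cU)$ by convexity) upgrades this to the strict gradient inequality. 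With that in hand the lemma follows.
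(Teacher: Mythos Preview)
Your strategy is exactly the paper's: invoke Lemma \ref{lem:CM-is-gradient-of-convex} to get a strictly convex potential $\varphi$ with $q=D\varphi$, apply the gradient inequality along the cycle, and telescope. However, the algebraic identity you assert,
\[
\sum_{i=1}^{k} \langle q(u^{i+1}), u^i - u^{i+1}\rangle \;=\; -\sum_{i=1}^{k} (u^{i+1})^\prime\bigl(q(u^{i+1}) - q(u^i)\bigr),
\]
is false in general. The two sides differ by $\sum_i\bigl((u^i)^\prime q(u^{i+1}) - (u^{i+1})^\prime q(u^i)\bigr)$, which does not vanish for $k\geq 3$; a one-dimensional check with $u^1=0$, $u^2=1$, $u^3=2$ and generic $q$ already breaks it. So the step you yourself flagged as ``the only point requiring a little care'' is precisely where the argument slips.

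The fix is painless: linearize at $u^i$ rather than at $u^{i+1}$. Writing
\[
\varphi(u^{i+1}) - \varphi(u^i) \;\geq\; \langle q(u^i),\, u^{i+1}-u^i\rangle,
\]
with strict inequality whenever $u^i\neq u^{i+1}$, and summing over $i$ gives $0 \geq -\sum_i (u^i-u^{i+1})^\prime q(u^i)$, and this sum \emph{is} equal to the left-hand side of (\ref{eq:CM}) after the cyclic reindexing $\sum_i (u^i)^\prime q(u^i)=\sum_i (u^{i+1})^\prime q(u^{i+1})$. This is what the paper does. Alternatively, you can keep your version and observe that what you actually proved is (\ref{eq:CM}) for the reversed cycle $u^k,\dots,u^1$; since the lemma quantifies over all cycles, this also suffices, but you should say so rather than claim a direct equality.
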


\begin{proof}
Let $u^1, \dots, u^k \in \Int (\cU)$ be such that $u^i \neq u^j$ for some $i \neq j.$
By Lemma \ref{lem:CM-is-gradient-of-convex}, there is a differentiable strictly convex function $\varphi : \Int (\cU) \to \bR$ such that $D \varphi = q$ on $\Int (\cU).$
For $i = 1, \dots, k,$ the strict convexity of $\varphi$ implies
\begin{align*}
    \varphi (u^{i + 1})
    \geq
    \varphi (u^i)
    +
    (u^i - u^{i + 1})^\prime q (u^i)
    ,
\end{align*}
where the inequality holds strictly whenever $u^i \neq u^{i + 1}.$
By taking the sum over $i = 1, \dots, k,$ we have
\begin{align*}
    0
    >
    \sum_{i = 1}^k
    (u^i - u^{i + 1})^\prime q (u^i)
    ,
\end{align*}
which implies equation (\ref{eq:CM}) holds strictly.
\end{proof}

\begin{lemma} \label{lem:injective-on-interior}
The function $q$ is injective on $\Int (\cU).$
\end{lemma}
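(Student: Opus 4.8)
The plan is to read off injectivity directly from strict cyclic monotonicity (Lemma \ref{lem:strict-CM}) by specializing to two-point cycles. Suppose, for contradiction, that there are distinct points $u^1, u^2 \in \Int(\cU)$ with $q(u^1) = q(u^2)$. Apply Lemma \ref{lem:strict-CM} to the cycle $u^1 \to u^2 \to u^1$, i.e. take $k = 2$ and $u^3 = u^1$ in (\ref{eq:CM}). Since $u^1 \neq u^2$, the inequality is strict, so
\[
    (u^2)^\prime \bigl(q(u^2) - q(u^1)\bigr) + (u^1)^\prime \bigl(q(u^1) - q(u^2)\bigr) > 0,
\]
that is, $(u^2 - u^1)^\prime \bigl(q(u^2) - q(u^1)\bigr) > 0$. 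But the left-hand side vanishes when $q(u^1) = q(u^2)$, a contradiction. Hence $q$ is injective on $\Int(\cU)$.

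Equivalently, one may invoke Lemma \ref{lem:CM-is-gradient-of-convex} to write $q = D\varphi$ on $\Int(\cU)$ for a strictly convex $\varphi$, and use that the gradient of a strictly convex function is a strictly monotone—hence injective—map; this is the same computation repackaged, since strict monotonicity of $D\varphi$ is exactly the $k=2$ case of (\ref{eq:CM}).

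I expect no real obstacle here: the substantive work has been front-loaded into Lemma \ref{lem:strict-CM}. The only point needing a moment's care is that the ``degenerate'' two-point cycle $u^1 \to u^2 \to u^1$ is a legitimate instance of (\ref{eq:CM}) with $k = 2$, and that the strictness clause of Lemma \ref{lem:strict-CM} triggers precisely because $u^1 \neq u^2$. Note that neither the surjectivity of $q$ nor the convexity of $\cY$ enters this lemma; those hypotheses are reserved for the openness/surjectivity half of Theorem \ref{thm:ot-map-is-bijective}.
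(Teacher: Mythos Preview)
Your proposal is correct and essentially matches the paper's proof: the paper invokes Lemma \ref{lem:CM-is-gradient-of-convex} to obtain a strictly convex potential $\varphi$ with $q = D\varphi$ and then notes that strict convexity forces the subdifferentials (here, gradients) at distinct points to be disjoint---exactly your ``equivalently'' paragraph. Your primary route via the $k=2$ case of Lemma \ref{lem:strict-CM} is the same argument one step removed, since that lemma is itself proved from the strictly convex potential.
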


\begin{proof}
Let $\varphi$ be a convex function that satisfies the condition in Lemma \ref{lem:CM-is-gradient-of-convex} for $q.$
The strict convexity of $\varphi$ implies that $q$ is injective on $\Int (\cU)$ because its subdifferentials are disjoint.
\end{proof}

\begin{lemma} \label{lem:onto-interior}
For $u \in \Int (\cU),$ it holds that $q (u) \in \Int (\cY).$
\end{lemma}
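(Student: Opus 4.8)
The plan is to argue by contradiction using a supporting hyperplane of the convex set $\cY$. Suppose $q(u_0) \in \partial \cY$ for some $u_0 \in \Int(\cU)$. The set $\cY = q(\cU)$ is compact, being the continuous image of the compact set $\cU$, and convex by the hypothesis of Theorem \ref{thm:ot-map-is-bijective}, so the supporting hyperplane theorem provides a unit vector $v \in \bR^p$ with $v^\prime (y - q(u_0)) \leq 0$ for all $y \in \cY$. In particular, $v^\prime (q(u) - q(u_0)) \leq 0$ for every $u \in \cU$.

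Next I would use that $u_0$ is an interior point. Pick $\varepsilon > 0$ small enough that $u_0 + t v \in \Int(\cU)$ for all $t \in (-\varepsilon, \varepsilon)$, and set $g(t) \coloneqq v^\prime q(u_0 + t v)$ on this interval. The supporting-hyperplane inequality gives $g(t) \leq g(0)$ for all $t \in (-\varepsilon, \varepsilon)$, so $t = 0$ is an interior maximizer of $g$. Since $q$ is differentiable at $u_0$, the chain rule shows $g$ is differentiable at $0$ with $g^\prime(0) = v^\prime D q(u_0) v$, and the first-order condition forces $g^\prime(0) = 0$. However, positive definiteness of $D q(u_0)$ together with $\norm{v} = 1$ yields $v^\prime D q(u_0) v > 0$, a contradiction. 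Therefore $q(u_0) \notin \partial \cY$, i.e., $q(u_0) \in \Int(\cY)$.

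I do not anticipate a real obstacle. The only mild points of care are that $\cY$ is compact and convex, guaranteeing a supporting hyperplane at each of its boundary points, and that the segment $\{u_0 + tv : |t| < \varepsilon\}$ remains in $\cU$, which is immediate since $u_0 \in \Int(\cU)$. An alternative proof invokes the inverse function theorem: because $D q(u_0)$ is invertible, $q$ carries a neighborhood of $u_0$ onto a neighborhood of $q(u_0)$, whence $q(u_0) \in \Int(q(\cU)) = \Int(\cY)$; this variant, however, needs $q$ to be $C^1$ near $u_0$, while the supporting-hyperplane argument only uses differentiability at the single point $u_0$.
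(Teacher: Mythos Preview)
Your proof is correct and shares the same skeleton as the paper's: both argue by contradiction, take a supporting hyperplane of $\cY$ at the boundary point $q(u_0)$, and then exploit positive definiteness to produce a point of $\cY$ on the wrong side. The difference is in how the contradiction is extracted. The paper moves to a nearby point $v = u_0 + \varepsilon c$ and invokes the strict cyclic monotonicity of $q$ (Lemma~\ref{lem:strict-CM}, which in turn rests on Lemma~\ref{lem:CM-is-gradient-of-convex}) to get $(q(v)-q(u_0))^\prime c > 0$. You instead differentiate along the segment $t \mapsto u_0 + t v$ and use the first-order condition at the interior maximum $t=0$ together with $v^\prime Dq(u_0) v > 0$. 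Your route is slightly more self-contained, since it uses only the pointwise positive definiteness of $Dq$ and avoids the detour through the potential-function/strict-convexity machinery; the paper's route, on the other hand, packages the monotonicity property once in Lemma~\ref{lem:strict-CM} and reuses it. Your remark about the inverse-function-theorem variant and its extra $C^1$ requirement is also accurate.
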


\begin{proof}
The proof is by contradiction.
Suppose there is $u \in \Int (\cU)$ such that $q (u) \in \partial \cY.$
Since $\Int (\cY)$ is convex, there exists $c \in \bR^p$ such that $(y - q (u))^\prime c < 0$ for $y \in \Int (\cY)$ by the separating hyperplane theorem.
It follows that $v \coloneqq u + \varepsilon c \in \Int (\cU)$ for sufficiently small $\varepsilon > 0.$
By Lemma \ref{lem:strict-CM}, we have $(q (v) - q (u))^\prime c > 0.$
Take a sequence $y_n \in \Int (\cY)$ such that $y_n \to q (v).$
Then, $0 \geq \lim_{n \to \infty} (y_n - q (u))^\prime c = (q (v) - q (u))^\prime c > 0,$ a contradiction.
\end{proof}

\begin{proof} [Proof of Theorem \ref{thm:ot-map-is-bijective}]
Since $q$ is continuous on $\cU,$ so is $q |_{\Int (\cU)}.$
By Lemmas \ref{lem:injective-on-interior} and \ref{lem:onto-interior}, $q |_{\Int (\cU)}$ is injective and takes values on $\Int (\cY).$
It is clear that for $y \in \Int (\cY),$ there exists $u \in \cU$ such that $q (u) = y.$
If $u \in \partial \cU,$ then $y = q (u) \in \partial \cY$ by the continuity of $q,$ which is a contradiction.
Hence, $u \in \Int (\cU)$ holds, which implies that $q |_{\Int (\cU)} : \Int (\cU) \to \Int (\cY)$ is surjective.
\end{proof}

\subsection{Proof of Lemma \ref{lem:full-rank-version}}
\begin{proof}
The ``if'' part holds because for any $h \in T_{a^\ast}^1,$
\begin{align*}
    \sum_{z \in \cZ}
    \norm{(\phi_z)_{a^\ast}^\prime (h)}_\cB
    \geq
    \left(
        \sum_{z \in \cZ \setminus \{z_0\}}
        \norm{
            (\phi_z)_{a^\ast}^\prime (h)
        }_\cB
    \right)
    \vee
    \norm{(\phi_{z_0})_{a^\ast}^\prime (h)}_\cB 
    \geq
    \eta
    >
    0
    .
\end{align*}
To show the ``only if'' part, suppose
\begin{align*}
    \sum_{z \in \cZ \setminus \{z_0\}}
    \norm{
        (\phi_z)_{a^\ast}^\prime (h)
    }_\cB 
    < 
    \eta
    \coloneqq
    \frac{1}{2}
    \inf_{h \in T_{a^\ast}^1}
    \sum_{z \in \cZ}
    \norm{(\phi_z)_{a^\ast}^\prime (h)}_\cB
    ,
\end{align*}
where $\eta > 0$ by the hypothesis.
Then, we have
\begin{align*}
    \norm{(\phi_{z_0})_{a^\ast}^\prime (h)}_\cB
    &=
    \sum_{z \in \cZ}
    \norm{(\phi_z)_{a^\ast}^\prime (h)}_\cB
    -
    \sum_{z \in \cZ \setminus \{z_0\}}
    \norm{
        (\phi_z)_{a^\ast}^\prime (h)
    }_\cB 
    \\
    &>
    \inf_{h \in T_{a^\ast}^1}
    \sum_{z \in \cZ}
    \norm{(\phi_z)_{a^\ast}^\prime (h)}_\cB
    -
    \eta
    \\
    &=
    \eta
    .
\end{align*}
\end{proof}

\subsection{Proof of Lemma \ref{lem:support-identification}}
\begin{proof}
Let $y \in \bR^p$ such that $f_{d, z} (y) > 0$ for some $z \in \cZ.$
Let $\varepsilon > 0.$
Since $f_{d, z} (y) > 0,$ we have $\bP (Y_d \in B_y (\varepsilon), D = d \mid Z = z) > 0,$ where $B_y (\varepsilon)$ is the $\varepsilon$-ball around $y.$
Thus, we have $\bP (Y_d \in B_y (\varepsilon)) = \bP (Y_d \in B_y (\varepsilon) \mid Z = z) \geq \bP (Y_d \in B_y (\varepsilon), D = d \mid Z = z) > 0,$ where the first equality holds by Assumption (\ref{ass:ivqr-A2}).
This implies $y \in \cY_d,$ and ``$\supset$'' holds by taking the closure.

Let $y \in \Int (\cY_d),$ and suppose $y$ is not in the RHS.
Then, there is $\varepsilon > 0$ such that $f_{d, z} (y) = 0$ for all $y \in B_y (\varepsilon)$ and $z \in \cZ.$
However, this implies that condition (\ref{eq:positive-definiteness-binary}) fails on the ball, which has a positive measure.
Hence, $y$ lies in the RHS, and ``$\subset$'' holds by taking the closure.
\end{proof}

\subsection{Proof of Lemma \ref{lem:phi-is-differentiable}}
We first show an auxiliary lemma on the uniform differentiablity.
\begin{lemma} \label{lem:uniform-differentiability}
Let $V$ be a subset of an inner product space $(X, \ip{\cdot}{\cdot}).$
Suppose the function $f : V \to \bR$ is (Fr\'echet) differentiable on the interior of $V$ with a uniformly continuous derivative $D f,$ 
Then, $f$ is uniformly differentiable, i.e.,
\begin{align*}
    \lim_{\norm{h} \to 0}
    \sup_{\substack{
        x \in \Int (V)
        \\
        x + h \in V
    }}
    \left|
        \frac{f (x + h) - f (x)}{\norm{h}}
        -
        \ip{D f (x)}{\frac{h}{\norm{h}}}
    \right|
    =
    0
    .
\end{align*}
\end{lemma}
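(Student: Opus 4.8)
The plan is to promote the pointwise Fr\'echet differentiability of $f$ to a uniform statement by writing the first-order increment of $f$ along a line segment as an integral of $Df$, and then controlling the error with the modulus of uniform continuity of $Df$. First I would fix $\varepsilon>0$ and use uniform continuity of $Df$ to choose $\delta>0$ with $\norm{Df(u)-Df(v)}<\varepsilon$ whenever $u,v\in\Int(V)$ and $\norm{u-v}<\delta$. The goal is then to show that every increment $h$ with $\norm{h}<\delta$ makes the quantity under the supremum at most $\varepsilon$, uniformly over admissible base points $x$.

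For a fixed pair $x\in\Int(V)$, $x+h\in V$ with $0<\norm{h}<\delta$, I would run the argument along the segment $\gamma(t)=x+th$. The key structural observation is that $\gamma(t)\in\Int(V)$ for $t\in[0,1)$ while $\gamma(1)=x+h\in V$; this is where convexity of $V$ enters, via the standard fact that the segment from an interior point of a convex set to any point of the set lies in the set and has its relative interior in the interior. Hence $t\mapsto f(\gamma(t))$ is $C^1$ on $[0,1)$ with derivative $\ip{Df(\gamma(t))}{h}$, whose absolute value is bounded there by $(\norm{Df(x)}+\varepsilon)\norm{h}$, since $\norm{\gamma(t)-x}=t\norm{h}<\delta$; letting the upper endpoint of integration tend to $1$ and using continuity of $f$ on $V$ gives
\[
    f(x+h)-f(x)=\int_0^1\ip{Df(\gamma(t))}{h}\,dt .
\]
Subtracting $\ip{Df(x)}{h}=\int_0^1\ip{Df(x)}{h}\,dt$, applying Cauchy--Schwarz inside the integral, and using $\norm{Df(\gamma(t))-Df(x)}<\varepsilon$ for $t\in[0,1)$, one obtains $\left|f(x+h)-f(x)-\ip{Df(x)}{h}\right|\le\varepsilon\norm{h}$. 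After dividing by $\norm{h}$ this is precisely $\left|\tfrac{f(x+h)-f(x)}{\norm{h}}-\ip{Df(x)}{\tfrac{h}{\norm{h}}}\right|\le\varepsilon$, and the bound does not depend on $x$; taking the supremum over admissible $x$ and then letting $\varepsilon\downarrow0$ finishes the proof.

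The step I expect to be the main obstacle is the passage to the boundary of $V$: if $x+h\in\partial V$, I must know that the open segment stays in $\Int(V)$, so that $Df$ is defined along it, and that the fundamental-theorem-of-calculus identity survives the limit at $t=1$. Both are delivered by convexity of $V$ together with continuity of $f$ on $V$, and both hold in the application to Lemma \ref{lem:phi-is-differentiable}, where $V$ is a compact convex set (such as $\cU$ or $\cY_d$) and $f$ is $C^1$ up to the boundary. Some such hypothesis is genuinely needed: for a disconnected $V$ one can exhibit arbitrarily small increments $h$ whose segment $[x,x+h]$ exits $V$, and then the conclusion fails, so I would state the convexity assumption explicitly.
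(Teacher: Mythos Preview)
Your argument is essentially the paper's own proof: fix $\varepsilon$, pick $\delta$ from uniform continuity of $Df$, write $f(x+h)-f(x)=\int_0^1\ip{Df(x+th)}{h}\,dt$, subtract $\ip{Df(x)}{h}$, and bound the integrand by $\varepsilon\norm{h}$. Your additional care about convexity of $V$---needed so that the open segment $[x,x+h)$ stays in $\Int(V)$ and the integral representation is valid---is a point the paper's proof glosses over; the paper simply writes the integral without comment, relying implicitly on the fact that in the applications $V$ is compact and convex.
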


\begin{proof}
Let $ \varepsilon > 0.$
Since $D f$ is uniformly continuous, there is $\delta > 0$ such that
\begin{align*}
    \norm{x - y}
    < 
    \delta
    \Rightarrow
    \norm{D f (x) - D f (y)} 
    < 
    \varepsilon
    .
\end{align*}
Fix $x \in \Int (V),$ and choose $h$ such that $0 < |h| < \delta$ and $x + h \in V.$
Since
\begin{align*}
    f (x + h) - f (x)
    =
    \int_0^1
        \frac{d}{d t} f (x + t h)
    d t
    =
    \int_0^1
        \ip{D f (x + t h)}{h}
    d t
    ,
\end{align*}
we have
\begin{align*}
    \left|
        \frac{f (x + h) - f (x)}{\norm{h}}
        -
        \ip{D f (x)}{\frac{h}{\norm{h}}}
    \right|
    &=
    \left|
        \ip{
            \int_0^1
                D f (x + t h)
                -
                D f (x)
            d t
        }{
            \frac{h}{\norm{h}}
        }
    \right|
    \\
    &\leq
    \max_{t \in [0, 1]}
    \norm{D f (x + t h) - D f (x)}
    \\
    &<
    \varepsilon
    .
\end{align*}
\end{proof}

\begin{remark}
In general, the integral of $D f$ should be understood in the sense of Bochner, but this complication is not important for the proof of Lemma \ref{lem:phi-is-differentiable}.
\end{remark}

For matrices $A, B \in \bR^{s \times t},$ let 
\begin{align*}
    \ip{A}{B} 
    \coloneqq
    \sum_{i = 1}^s
    \sum_{j = 1}^t
    A_{i, j}
    B_{i, j}
    .
\end{align*}
For an invertible matrix $C,$ define its cofactor matrix as
\begin{align*}
    \cof (C)
    \coloneqq
    \det (C)
    C^{-1}
    .
\end{align*}
We rewrite the formula in Lemma \ref{lem:phi-is-differentiable}.
\begin{lemma} \label{lem:piola-identity}
For $d \in \cD,$ $z \in \cZ,$ and $h \in T_{q^\ast}^1,$ it holds
\begin{align*}
    \div \left(
        f_{d, z} (q_d^\ast (u))
        \det (D q_d^\ast (u))
        (D q_d^\ast (u))^{-1}
        h_d (u)
    \right)
    &=
    \det (D q_d^\ast (u))
    (D f_{d, z} (q_d^\ast (u)))^\prime h_d (u)
    \\
    &\hspace{1.3cm}+
    f_{d, z} (q_d^\ast (u))
    \ip{\cof (D q_d^\ast (u))}{D h_d (u)}
    .
\end{align*}
\end{lemma}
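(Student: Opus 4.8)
The plan is to prove the identity by a direct pointwise computation on $\Int(\cU)$, combining the chain and product rules with two structural facts: the algebraic identity $D q_d^\ast \cdot \cof(D q_d^\ast) = \det(D q_d^\ast)\, I$ and the Piola identity that the cofactor matrix of a $C^2$ Jacobian is column-wise divergence free. To lighten notation write $q = q_d^\ast$, $f = f_{d,z}$, $h = h_d$, and let $V(u) \coloneqq f(q(u))\,\cof(D q(u))\,h(u)$ be the vector field inside the divergence on the left-hand side. By Assumption \ref{ass:correct-specification}, $q \in C^2(\cU;\bR^p)$ with $D q$ symmetric positive definite; in particular $\det D q$ is bounded away from zero, $\cof(D q) = \det(D q)(D q)^{-1} \in C^1$, and $\cof(D q)$ is symmetric. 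By Assumption \ref{ass:smooth-densities} $f \in C^1$, and $h \in C^2$, so $V \in C^1$ and $\div V$ is continuous; since both sides of the claim are continuous in $u$, it suffices to prove the identity on $\Int(\cU)$, where $q(u) \in \Int(\cY_d)$ by Theorem \ref{thm:ot-map-is-bijective} and the chain rule for $u \mapsto f(q(u))$ is unambiguous.

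First I would split $\div V = \sum_i \partial_{u^i} V^i$ by the product rule into the piece in which the derivative falls on $f\circ q$ and the piece in which it falls on $\cof(D q)\,h$. For the first piece, the chain rule gives $\partial_{u^i}[f(q(u))] = \sum_k (\partial_{y^k} f)(q(u))\,(D q(u))_{ki}$; pairing this with $[\cof(D q)h]^i$, summing over $i$, and using $D q\cdot\cof(D q) = \det(D q)\, I$ collapses the inner sum, leaving $\det(D q(u))\sum_k (\partial_{y^k}f)(q(u))\,h^k(u) = \det(D q(u))\,(D f(q(u)))^\prime h(u)$, which is the first term on the right-hand side of the lemma. (This step uses no symmetry.)

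For the second piece I would expand $[\cof(D q)h]^i = \sum_j \cof(D q)_{ij}\,h^j$ and differentiate, obtaining $\sum_j \big(\sum_i \partial_{u^i}\cof(D q)_{ij}\big) h^j + \sum_{i,j}\cof(D q)_{ij}\,\partial_{u^i}h^j$. The first double sum vanishes, term by term in $j$, by the Piola identity. In the second double sum I relabel $i \leftrightarrow j$ and invoke the symmetry of $\cof(D q)$ to rewrite it as $\sum_{i,j}\cof(D q)_{ij}\,\partial_{u^j}h^i = \ip{\cof(D q)}{D h}$. Multiplying by $f(q(u))$ gives the second term $f(q(u))\ip{\cof(D q(u))}{D h(u)}$, and adding the two pieces yields the stated formula.

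I do not expect a genuine obstacle here; the argument is essentially bookkeeping with index conventions, and the only non-mechanical ingredient is the Piola identity. I would either cite it as classical or, to keep the appendix self-contained, give its short proof: for each fixed $j$ the vanishing of $\sum_i \partial_{u^i}\cof(D q)_{ij}$ follows from $d\big(d q^1 \wedge \cdots \wedge \widehat{d q^j} \wedge \cdots \wedge d q^p\big)=0$ (each $d q^k$ is closed), or, equivalently and more elementarily, by differentiating the signed $(p-1)\times(p-1)$ minor that defines $\cof(D q)_{ij}$ and using $\partial_{u^a}\partial_{u^b}q^c = \partial_{u^b}\partial_{u^a}q^c$.
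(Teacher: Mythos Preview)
Your proposal is correct and follows essentially the same route as the paper's own proof: both rewrite the vector field via $\cof(Dq_d^\ast)=\det(Dq_d^\ast)(Dq_d^\ast)^{-1}$, split the divergence by the product rule into the $S_1$ and $S_2$ pieces, reduce $S_1$ using the chain rule together with $Dq_d^\ast\cdot\cof(Dq_d^\ast)=\det(Dq_d^\ast)\,I$, and dispose of the $\partial_{u^i}\cof(Dq_d^\ast)_{ij}$ terms in $S_2$ via Piola's identity. The only cosmetic differences are that you spell out the symmetry of $\cof(Dq_d^\ast)$ when matching indices to the Frobenius pairing and offer a self-contained sketch of Piola, whereas the paper simply cites \cite{evans2010partial}.
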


\begin{proof}
For simplicity of notation, we omit arguments $u$ of functions.
\begin{align*}
    &\phantom{{}={}}
    \div \left(
        f_{d, z} (q_d^\ast)
        \det (D q_d^\ast)
        (D q_d^\ast)^{-1}
        h_d
    \right)
    \\
    &=
    \div \left(
        f_{d, z} (q_d^\ast)
        \cof (D q_d^\ast)
        h_d
    \right)
    \\
    &=
    \sum_{i, j = 1}^p
    \left(
        \frac{\partial}{\partial u_i}
        f_{d, z} (q_d^\ast)
    \right)
    \cof (D q_d^\ast)_{i, j}
    h_{d, j}
    +
    \sum_{i, j = 1}^p
    f_{d, z} (q_d^\ast)
    \frac{\partial}{\partial u_i}
    \left(
        \cof (D q_d^\ast)_{i, j}
        h_{d, j}
    \right)
    \\
    &\eqqcolon
    S_1
    +
    S_2
    .
\end{align*}
By the definition of the cofactor matrix, we have
\begin{align*}
    S_1
    &=
    \sum_{i, j, k = 1}^p
    \left(
        \frac{\partial}{\partial y_k} f_{d, z}
        (q_d^\ast)
    \right)
    \left(
         \frac{\partial}{\partial u_i}
         q_{d, k}^\ast
    \right)
    \cof (D q_d^\ast)_{i, j}
    h_{d, j}
    \\
    &=
    \sum_{j = 1}^p
    \left(
        \frac{\partial}{\partial y_j} f_{d, z}
        (q_d^\ast)
    \right)
    \det (D q_d^\ast)
    h_{d, j}
    \\
    &=
    \det (D q_d^\ast)
    (D f_{d, z} (q_d^\ast))^\prime h_d
    .
\end{align*}
For the second term, it holds
\begin{align*}
    S_2
    &=
    \sum_{i, j = 1}^p
    f_{d, z} (q_d^\ast)
    \left(
         \frac{\partial}{\partial u_i}
         \cof (D q_d^\ast)_{i, j}
    \right)
    h_{d, j}
    +
    \sum_{i, j = 1}^p
    f_{d, z} (q_d^\ast)
    \cof (D q_d^\ast)_{i, j}
    \left(
         \frac{\partial}{\partial u_i}
         h_{d, j}
    \right)
    \\
    &=
    f_{d, z} (q_d^\ast)
    \ip{\cof (D q_d^\ast)}{D h_d}
    ,
\end{align*}
where the last equality follows from Piola's identity.
See, for example, the lemma in page 440 of \cite{evans2010partial}.
\end{proof}

\begin{proof} [Proof of Lemma \ref{lem:phi-is-differentiable}]
Recall that
\begin{align*}
    \phi_z (q) (d u)
    =
    \sum_{d \in \cD}
    f_{d, z} (q_d (u))
    \det (D q_d (u))
    d u
    ,
\end{align*}
and by Lemma \ref{lem:piola-identity} that
\begin{align*}
    (\phi_z)_{q^\ast}^\prime (h) (du)
    =
    \sum_{d \in \cD}
    \left(
        \det (D q_d^\ast (u))
        (D f_{d, z} (q_d^\ast (u)))^\prime h_d (u)
        +
        f_{d, z} (q_d^\ast (u))
        \ip{\cof (D q_d^\ast (u))}{D h_d (u)}
    \right)
    d u
    .
\end{align*}
We will show
\begin{align*}
    \lim_{\varepsilon \downarrow 0}
    \sup_{
        \substack{
            h \in T_{q^\ast}^1
            \\
            q^\ast + \varepsilon h \in \cQ_{0, s}
        }
    }
    \norm{
        \frac{\phi_z (q^\ast + \varepsilon h) - \phi_z (q^\ast)}{\varepsilon}
        -
        (\phi_z)_{q^\ast}^\prime (h)
    }_\cM
    =
    0
    .
\end{align*}
Let $\varepsilon, \delta > 0.$
The decomposition $\phi_z (q^\ast + \varepsilon h) - \phi_z (q^\ast) = T_1 + T_2 + T_3$ holds, where
\begin{align*}
    T_1 (d u)
    &\coloneqq
    \sum_{d \in \cD}
    \left(
        f_{d, z} (q_d^\ast (u) + \varepsilon h_d (u))
        -
        f_{d, z} (q_d^\ast (u))
    \right)
    \det (D q_d^\ast (u))
    d u
    \\
    T_2 (d u)
    &\coloneqq
    \sum_{d \in \cD}
    f_{d, z} (q_d^\ast (u))
    \left(
        \det  (D q_d^\ast (u) + \varepsilon D h_d (u))
        -
        \det (D q_d^\ast (u))
    \right)
    d u
    \\
    T_3 (d u)
    &\coloneqq
    \sum_{d \in \cD}
    \left(
        f_{d, z} (q_d^\ast (u) + \varepsilon h_d (u))
        -
        f_{d, z} (q_d^\ast (u))
    \right)
    \left(
        \det  (D q_d^\ast (u) + \varepsilon D h_d (u))
        -
        \det (D q_d^\ast (u))
    \right)
    d u
    .
\end{align*}

By Lemma \ref{lem:uniform-differentiability}, we have
\begin{align*}
    \sup_{u \in \cU}
    \sup_{
        \substack{
            h \in T_{q^\ast}^1
            \\
            q^\ast + \varepsilon h \in \cQ_{0, s}
        }
    }
    \left|
        \frac{
            f_{d, z} (q_d^\ast (u) + \varepsilon h_d (u))
            -
            f_{d, z} (q_d^\ast (u))
        }{\varepsilon}
        -
        (D f_{d, z} (q_d^\ast (u)))^\prime
        h_d (u)
    \right|
    &\leq
    \delta
\end{align*}
for small $\varepsilon > 0,$ because $(q_d^\ast + \varepsilon h_d) (\cU) = \cY_d$ by the definition of $\cQ_{0, s},$ and because $f_{d, z}$ is continuously differentiable on $\cY_d$ by Assumption \ref{ass:smooth-densities}.
Thus, it holds that
\begin{align} \label{eq:proof-differentiability-1}
    \sup_{
        \substack{
            h \in T_{q^\ast}^1
            \\
            q^\ast + \varepsilon h \in \cQ_{0, s}
        }
    }
    \norm{
        \frac{T_1 (d u)}{\varepsilon}
        -
        \sum_{d \in \cD}
        \det (D q_d^\ast (u))
        (D f_{d, z} (q_d^\ast (u)))^\prime h_d (u)
        d u
    }_\cM
    \leq
    \delta
    \left(\int_\cU d u\right)
    \sum_{d \in \cD}
    \norm{\det (D q_d^\ast)}_\infty
\end{align}
for small $\varepsilon > 0.$

Similarly, by Lemma \ref{lem:uniform-differentiability} and Jacobi's formula, we have
\begin{align*}
    \sup_{u \in \cU}
    \sup_{
        \substack{
            h \in T_{q^\ast}^1
            \\
            q^\ast + \varepsilon h \in \cQ_{0, s}
        }
    }
    \left|
        \frac{
            \det (D q_d^\ast (u) + \varepsilon D h_d (u))
            -
            \det (D q_d^\ast (u))
        }{\varepsilon}
        -
        \ip{\cof (D q_d^\ast (u))}{D h_d (u)}
    \right|
    &\leq
    \delta
\end{align*}
for small $\varepsilon > 0,$ because the eigenvalues of $D q_d^\ast (u) + \varepsilon D h_d (u)$ lie between $\underline \lambda$ and $\overline \lambda,$ and because $q_d^\ast$ is of $C^2$ on $\cU.$
This gives a bound for $T_2$ as follows:
\begin{align} \label{eq:proof-differentiability-2}
    \sup_{
        \substack{
            h \in T_{q^\ast}^1
            \\
            q^\ast + \varepsilon h \in \cQ_{0, s}
        }
    }
    \norm{
        \frac{T_2 (d u)}{\varepsilon}
        -
        \sum_{d \in \cD}
        f_{d, z} (q_d^\ast (u))
        \ip{\cof (D q_d^\ast (u))}{D h_d (u)}
        d u
    }_\cM
    \leq
    \delta 
    K
    \left(\int_\cU d u\right)
    \sum_{d \in \cD}
    \norm{f_{d, z} (q_d^\ast)}_\infty
\end{align}
for small $\varepsilon > 0,$ as $\norm{D h_d}_\infty \leq K$ by the definition of $\cQ_{0, s}.$

Finally, we have
\begin{align} \label{eq:proof-differentiability-3}
    \norm{\frac{T_3}{\varepsilon}}_\cM
    \nonumber
    &\leq
    \frac{1}{\varepsilon}
    \sum_{d \in \cD}
        \int_\cU
            \left|
            f_{d, z} (q_d^\ast (u) + \varepsilon h_d (u))
            -
            f_{d, z} (q_d^\ast (u))
        \right|
        \left|
            \det  (D q_d^\ast (u) + \varepsilon D h_d (u))
            -
            \det (D q_d^\ast (u))
        \right|
    d u
    \nonumber
    \\
    &\leq
    \varepsilon
    K
    \left(\int_\cU d u\right)
    \sum_{d \in \cD}
    \norm{D f_{d, z} (q_d^\ast)}_\infty
    \norm{D \det (D q_d^\ast)}_\infty
    .
\end{align}

The equations (\ref{eq:proof-differentiability-1}), (\ref{eq:proof-differentiability-2}), and (\ref{eq:proof-differentiability-3}) imply
\begin{align*}
    \lim_{\varepsilon \downarrow 0}
    \sup_{
        \substack{
            h \in T_{q^\ast}^1
            \\
            q^\ast + \varepsilon h \in \cQ_{0, s}
        }
    }
    \norm{
        \frac{\phi_z (q^\ast + \varepsilon h) - \phi_z (q^\ast)}{\varepsilon}
        -
        (\phi_z)_{q^\ast}^\prime (h)
    }_\cM
    =
    0
    .
\end{align*}
\end{proof}

\subsection{Proof of Lemma \ref{lem:outward-is-conormal}}

\begin{proof}
The proof of this lemma is inspired by \cite{delanoe1991classical} and \cite{urbas1997second}.
Let $y \in \partial \cY_d$ be a $C^2$ point.
There exists a neighborhood $W \subset \bR^p$ of $y$ such that $\partial \cY_d \cap W$ is $C^2.$
Then, by Assumption \ref{ass:regular-boundary} and a version of Theorem 5.6 of \cite{delfour1994shape}, there exists a continuously differentiable function $\rho : W \to \bR$ such that $\Int (\cY_d) \cap W = \{y \in W \mid \rho (y) < 0\},$ $\partial \cY_d \cap W = \{y \in W \mid \rho (y) = 0\},$ and $\norm{D \rho (y)} \neq 0$ for $y \in \partial \cD_y \cap W.$
Notice that $D \rho (y)$ is an outward normal vector of $\partial \cY_d$ at $y.$
Let $H \coloneqq \rho \circ q_d^\ast.$ 
Also, there is $u \in \partial \cU$ such that $q_d^\ast (u) = y.$
To see this, recall that $q_d^\ast (\cU) = \cY_d,$ as $\cU$ is compact.
Thus, there is $u \in \cU$ such that $q_d^\ast (u) = y.$
As $y \in \partial \cY_d,$ we have $u \in \partial \cU$ by Lemma \ref{lem:onto-interior}.

By the chain rule, we have
\begin{align*}
    D H (u)
    =
    D q_d^\ast (u)
    D \rho (q_d^\ast (u))
    .
\end{align*}
For a basis $\{t^1, \dots, t^{p -1 }\} \in \bR^p$ of the tangent space of $\partial \cU$ at $u,$ we have a decomposition
\begin{align*}
    D H (u)
    =
    \frac{\partial H}{\partial \nu} (u)
    \nu (u)
    +
    \sum_{j = 1}^{p - 1}
    \frac{\partial H}{\partial t^j} (u)
    t^j
    =
    \frac{\partial H}{\partial \nu} (u)
    \nu (u)
    ,
\end{align*}
where the second equality follows because $(\partial H / \partial \tau) (u) = 0$ for any tangential vector $\tau$ on $\partial \cU$ at $u.$
Thus, it holds that
\begin{align*}
    \nu^d (q_d^\ast (u))
    =
    (D q_d^\ast (u))^{-1}
    \nu (u)
    =
    \left(\frac{\partial H}{\partial \nu} (u)\right)^{-1}
    D \rho (q_d^\ast (u))
    ,
\end{align*}
which implies that $\nu^d (y)$ is normal to $\partial \cY_d$ at $y.$
Also, it is outward as $(\partial H / \partial \nu) (u) > 0.$
\end{proof}

\subsection{Proof of Lemma \ref{lem:positive-definiteness-binary}}

\begin{proof}
We fix $u \in \cU$ such that $(y_0, y_1) = (q_0^\ast (u), q_1^\ast (u))$ satisfies condition (\ref{eq:positive-definiteness-binary}).
Notice that the set of such $u$'s has $\mu$-measure one, as condition (\ref{eq:positive-definiteness-binary}) holds almost surely.
We omit arguments $u$ of functions if it does not make a confusion.
Let $C_d \coloneqq \cof (D q_d^\ast) = \det (D q_d^\ast) (D q_d^\ast)^{-1}.$
Then, it holds that for $\xi_0, \xi_1 \in \bR^p,$
\begin{gather*}
    \sum_{d \in \cD, z \in \cZ}
    f_{d, z} (q_d^\ast)
    \det (D q_d^\ast)
    \xi_z^\prime
    (D q_d^\ast)^{-1}
    \xi_d
    \\
    =
    f_{0, 0} (q_0^\ast)
    \xi_0^\prime C_0 \xi_0
    +
    f_{0, 1} (q_0^\ast)
    \xi_0^\prime C_0 \xi_1
    +
    f_{1, 0} (q_1^\ast)
    \xi_1^\prime C_1 \xi_0
    +
    f_{1, 1} (q_1^\ast)
    \xi_1^\prime C_1 \xi_1
    .
\end{gather*}
We consider the second term of the RHS.
Since $C_0$ is a symmetric and positive definite matrix, we have the eigen decomposition $C_0 = P^\prime \Lambda P,$ where $P$ is an orthogonal matrix and $\Lambda = \mathrm{diag} (\lambda_1, \dots, \lambda_p)$ is the diagonal matrix consisting of the eigenvalues.
Let $\tilde \xi_d \coloneqq P \xi_d.$
For any $\alpha > 0,$ we have
\begin{align*}
    \xi_0^\prime C_0 \xi_1
    =
    \sum_{i = 1}^p
    \lambda_i
    \tilde \xi_{0, i}
    \tilde \xi_{1, i}
    \leq
    \frac{1}{2}
    \sum_{i = 1}^p
    \lambda_i
    \left(
        \alpha^2
        \tilde \xi_{0, i}^2
        +
        \alpha^{-2}
        \tilde \xi_{1, i}^2
    \right)
    =
    \frac{1}{2}
    \alpha^2
    \xi_0^\prime C_0 \xi_0
    +
    \frac{1}{2}
    \alpha^{-2}
    \xi_1^\prime C_0 \xi_1
    ,
\end{align*}
where the inequality follows by Young's inequality.
In the same way, we also have, for any $\beta > 0,$
\begin{align*}
    \xi_1^\prime C_1 \xi_0
    \leq
    \frac{1}{2}
    \beta^2
    \xi_1^\prime C_1 \xi_1
    +
    \frac{1}{2}
    \beta^{-2}
    \xi_0^\prime C_1 \xi_0
    .
\end{align*}
Thus, it holds that
\begin{align*}
    &\phantom{{}\geq{}}
    \sum_{d \in \cD, z \in \cZ}
    f_{d, z} (q_d^\ast)
    \det (D q_d^\ast)
    \xi_z^\prime
    (D q_d^\ast)^{-1}
    \xi_d
    \\
    &\geq
    \xi_0^\prime
    \left(
        \left(
            f_{0, 0} (q_0^\ast)
            -
            \frac{1}{2}
            f_{0, 1} (q_0^\ast)
            \alpha^2
        \right)
        C_0
        -
        \frac{1}{2}
        \beta^{-2}
        f_{1, 0} (q_1^\ast)
        C_1
    \right)
    \xi_0
    \\
    &\hphantom{\geq \xi_1^\prime}
    +
    \xi_1^\prime
    \left(
        \left(
            f_{1, 1} (q_1^\ast)
            -
            \frac{1}{2}
            f_{1, 0} (q_1^\ast)
            \beta^2
        \right)
        C_1
        -
        \frac{1}{2}
        \alpha^{-2}
        f_{0, 1} (q_0^\ast)
        C_0
    \right)
    \xi_1
    \\
    &\geq
    \left(
        \left(
            f_{0, 0} (q_0^\ast)
            -
            \frac{1}{2}
            f_{0, 1} (q_0^\ast)
            \alpha^2
        \right)
        \lambda_{\text{min}} (C_0)
        -
        \frac{1}{2}
        \beta^{-2}
        f_{1, 0} (q_1^\ast)
        \lambda_{\text{max}} (C_1)
    \right)
    \norm{\xi_0}^2
    \\
    &\hphantom{f_{0, 0} (q_0^\ast)}
    +
    \left(
        \left(
            f_{1, 1} (q_1^\ast)
            -
            \frac{1}{2}
            f_{1, 0} (q_1^\ast)
            \beta^2
        \right)
        \lambda_{\text{min}} (C_1)
        -
        \frac{1}{2}
        \alpha^{-2}
        f_{0, 1} (q_0^\ast)
        \lambda_{\text{max}} (C_0)
    \right)
    \norm{\xi_1}^2
    ,
\end{align*}
in which by setting
\begin{align*}
    \begin{cases}
        \alpha^2
        =
        \frac{2 f_{0, 0} (q_0^\ast)}{f_{0, 1} (q_0^\ast) + f_{1, 0} (q_1^\ast)}
        , 
        \beta^2
        =
        \frac{2 f_{1, 1} (q_1^\ast)}{f_{0, 1} (q_0^\ast) + f_{1, 0} (q_1^\ast)}
        &
        \ \text{if} \ 
        f_{0, 1} (q_0^\ast) \neq 0
        \ \text{and} \ 
        f_{1, 0} (q_1^\ast) \neq 0
        \\
        \alpha^2
        \in
        \left(
            \frac{1}{2} 
            \left(\frac{\overline \lambda}{\underline \lambda}\right)^{p + 1}
            ,
            \frac{2 f_{0, 0} (q_0^\ast)}{f_{0, 1} (q_0^\ast)}
        \right)
        ,
        \beta^2
        =
        1
        &
        \ \text{if} \ 
        f_{0, 1} (q_0^\ast) \neq 0
        \ \text{and} \ 
        f_{1, 0} (q_1^\ast) = 0
        \\
        \alpha^2 
        =
        1
        ,
        \beta^2
        \in
        \left(
            \frac{1}{2} 
            \left(\frac{\overline \lambda}{\underline \lambda}\right)^{p + 1}
            ,
            \frac{2 f_{1, 1} (q_1^\ast)}{f_{1, 0} (q_1^\ast)}
        \right)
        &
        \ \text{if} \ 
        f_{0, 1} (q_0^\ast) = 0
        \ \text{and} \ 
        f_{1, 0} (q_1^\ast) \neq 0
        \\
        \alpha^2
        =
        1,
        \beta^2
        =
        1
        &
        \ \text{if} \ 
        f_{0, 1} (q_0^\ast) = 0
        \ \text{and} \ 
        f_{1, 0} (q_1^\ast) = 0
    \end{cases}
    ,
\end{align*}
we obtain that the RHS is strictly positive unless $\xi_0 = \xi_1 = 0$ by condition (\ref{eq:positive-definiteness-binary}).
\end{proof}

\printbibliography

@article{del2019central,
  title={Central Limit Theorem for empirical transportation cost in general dimension},
  author={{del Barrio}, Eustasio and Loubes, Jean-Michel},
  journal={Annals of Probability},
  volume={2},
  year={2019}
}

@article{hallin2021distribution,
  title={Distribution and quantile functions, ranks and signs in dimension d: A measure transportation approach},
  author={Hallin, Marc and del Barrio, Eustasio and Cuesta-Albertos, Juan and Matr{\'a}n, Carlos},
  journal={The Annals of Statistics},
  volume={49},
  number={2},
  pages={1139--1165},
  year={2021}
}

@article{chernozhukov2005iv,
  title={An IV model of quantile treatment effects},
  author={Chernozhukov, Victor and Hansen, Christian},
  journal={Econometrica},
  volume={73},
  number={1},
  pages={245--261},
  year={2005},
  publisher={Wiley Online Library}
}

@article{chernozhukov2017monge,
  title={Monge-Kantorovich Depth, Quantiles, Ranks, and Signs},
  author={Chernozhukov, Victor and Galichon, Alfred and Hallin, Marc and Henry, Marc},
  journal={The Annals of Statistics},
  volume={45},
  number={1},
  pages={223--256},
  year={2017}
}

@article{carlier2016vector,
  title={Vector Quantile Regression: An Optimal Transport Approach},
  author={Carlier, Guillaume and Chernozhukov, Victor and Galichon, Alfred},
  journal={Annals of Statistics},
  volume={44},
  number={3},
  pages={1165--1192},
  year={2016}
}

@book{villani2009optimal,
  title={Optimal transport: old and new},
  author={Villani, C{\'e}dric},
  volume={338},
  year={2009},
  publisher={Springer}
}

@book{villani2003topics,
  title={Topics in Optimal Transportation},
  author={Villani, C{\'e}dric},
  number={58},
  year={2003},
  publisher={American Mathematical Soc.}
}

@article{imbens2009identification,
  title={Identification and estimation of triangular simultaneous equations models without additivity},
  author={Imbens, Guido and Newey, Whitney K},
  journal={Econometrica},
  volume={77},
  number={5},
  pages={1481--1512},
  year={2009},
  publisher={Wiley Online Library}
}

@article{gunsilius2023condition,
  title={A condition for the identification of multivariate models with binary instruments},
  author={Gunsilius, Florian F},
  journal={Journal of Econometrics},
  volume={235},
  number={1},
  pages={220--238},
  year={2023},
  publisher={Elsevier}
}

@article{torgovitsky2015identification,
  title={Identification of nonseparable models using instruments with small support},
  author={Torgovitsky, Alexander},
  journal={Econometrica},
  volume={83},
  number={3},
  pages={1185--1197},
  year={2015},
  publisher={Wiley Online Library}
}

@article{chen2014local,
  title={Local identification of nonparametric and semiparametric models},
  author={Chen, Xiaohong and Chernozhukov, Victor and Lee, Sokbae and Newey, Whitney K},
  journal={Econometrica},
  volume={82},
  number={2},
  pages={785--809},
  year={2014},
  publisher={Wiley Online Library}
}

@article{chernozhukov2006instrumental,
  title={Instrumental quantile regression inference for structural and treatment effect models},
  author={Chernozhukov, Victor and Hansen, Christian},
  journal={Journal of Econometrics},
  volume={132},
  number={2},
  pages={491--525},
  year={2006},
  publisher={Elsevier}
}

@article{chernozhukov2007instrumental,
  title={Instrumental variable estimation of nonseparable models},
  author={Chernozhukov, Victor and Imbens, Guido W and Newey, Whitney K},
  journal={Journal of Econometrics},
  volume={139},
  number={1},
  pages={4--14},
  year={2007},
  publisher={Elsevier}
}

@techreport{blundell2017individual,
  title={Individual counterfactuals with multidimensional unobserved heterogeneity},
  author={Blundell, Richard and Kristensen, Dennis and Matzkin, Rosa},
  year={2017},
  institution={cemmap working paper}
}

@article{abadie2002instrumental,
  title={Instrumental variables estimates of the effect of subsidized training on the quantiles of trainee earnings},
  author={Abadie, Alberto and Angrist, Joshua and Imbens, Guido},
  journal={Econometrica},
  volume={70},
  number={1},
  pages={91--117},
  year={2002},
  publisher={Wiley Online Library}
}

@article{chesher2003identification,
  title={Identification in nonseparable models},
  author={Chesher, Andrew},
  journal={Econometrica},
  volume={71},
  number={5},
  pages={1405--1441},
  year={2003},
  publisher={Wiley Online Library}
}

@article{urbas1997second,
  title={On the second boundary value problem for equations of Monge-Amp{\`e}re type.},
  author={Urbas, John},
  year={1997},
  publisher={Walter de Gruyter, Berlin/New York Berlin, New York}
}

@book{evans2010partial,
  title={Partial Differential Equations},
  author={Evans, Lawrence C},
  volume={19},
  year={2010},
  publisher={American Mathematical Soc.}
}

@article{matzkin2008identification,
  title={Identification in nonparametric simultaneous equations models},
  author={Matzkin, Rosa L},
  journal={Econometrica},
  volume={76},
  number={5},
  pages={945--978},
  year={2008},
  publisher={Wiley Online Library}
}

@article{caffarelli1992regularity,
  title={The regularity of mappings with a convex potential},
  author={Caffarelli, Luis A},
  journal={Journal of the American Mathematical Society},
  volume={5},
  number={1},
  pages={99--104},
  year={1992}
}

@book{figalli2017monge,
  title={The Monge-Amp{\`e}re Equation and Its Applications},
  author={Figalli, Alessio},
  series={Zurich lectures in advanced mathematics},
  year={2017},
  publisher={European Mathematical Society}
}

@article{chernozhukov2013quantile,
  title={Quantile models with endogeneity},
  author={Chernozhukov, Victor and Hansen, Christian},
  journal={Annual Review of Economics},
  volume={5},
  number={1},
  pages={57--81},
  year={2013},
  publisher={Annual Reviews}
}

@inproceedings{delanoe1991classical,
  title={Classical solvability in dimension two of the second boundary-value problem associated with the Monge-Ampere operator},
  author={Delano{\"e}, Philippe},
  booktitle={Annales de l'Institut Henri Poincar{\'e} C, Analyse non lin{\'e}aire},
  volume={8},
  number={5},
  pages={443--457},
  year={1991},
  organization={Elsevier}
}

@article{mccann1995existence,
  title={Existence and uniqueness of monotone measure-preserving maps},
  author={McCann, Robert J},
  journal={Duke Math. J.},
  volume={80},
  number={1},
  pages={309--323},
  year={1995}
}

@article{brenier1991polar,
  title={Polar factorization and monotone rearrangement of vector-valued functions},
  author={Brenier, Yann},
  journal={Communications on pure and applied mathematics},
  volume={44},
  number={4},
  pages={375--417},
  year={1991},
  publisher={Wiley Online Library}
}

@article{delfour1994shape,
  title={Shape analysis via oriented distance functions},
  author={Delfour, Michel C and Zol{\'e}sio, Jean-Paul},
  journal={Journal of functional analysis},
  volume={123},
  number={1},
  pages={129--201},
  year={1994},
  publisher={Elsevier}
}

@article{ghosal2022multivariate,
  title={Multivariate ranks and quantiles using optimal transport: Consistency, rates and nonparametric testing},
  author={Ghosal, Promit and Sen, Bodhisattva},
  journal={The Annals of Statistics},
  volume={50},
  number={2},
  pages={1012--1037},
  year={2022},
  publisher={Institute of Mathematical Statistics}
}

@book{shorack2000probability,
  title={Probability for statisticians},
  author={Shorack, Galen R},
  volume={951},
  year={2000},
  publisher={Springer}
}

@article{rockafellar1966characterization,
  title={Characterization of the subdifferentials of convex functions},
  author={Ralph Tyrell Rockafellar},
  journal={Pacific Journal of Mathematics},
  volume={17},
  number={3},
  pages={497--510},
  year={1966},
  publisher={Mathematical Sciences Publishers}
}

@article{centorrino2024iterative,
  title={Iterative Estimation of Nonparametric Regressions with Continuous Endogenous Variables and Discrete Instruments}, 
  author={Samuele Centorrino and Frédérique Fève and Jean-Pierre Florens},
  journal={arXiv preprint arXiv:1905.07812},
  year={2024}
}

@article{imbens1994identification,
 author = {Imbens, Guido and Angrist, Joshua},
 journal = {Econometrica},
 number = {2},
 pages = {467--475},
 title = {Identification and Estimation of Local Average Treatment Effects},
 volume = {62},
 year = {1994}
}

@article{chernozhukov2021identification,
  title={Identification of hedonic equilibrium and nonseparable simultaneous equations},
  author={Chernozhukov, Victor and Galichon, Alfred and Henry, Marc and Pass, Brendan},
  journal={Journal of Political Economy},
  volume={129},
  number={3},
  pages={842--870},
  year={2021},
  publisher={The University of Chicago Press Chicago, IL}
}

@book{galichon2018optimal,
  title={Optimal transport methods in economics},
  author={Galichon, Alfred},
  year={2018},
  publisher={Princeton University Press}
}

@article{shi2018estimating,
  title={Estimating semi-parametric panel multinomial choice models using cyclic monotonicity},
  author={Shi, Xiaoxia and Shum, Matthew and Song, Wei},
  journal={Econometrica},
  volume={86},
  number={2},
  pages={737--761},
  year={2018},
  publisher={Wiley Online Library}
}

@article{cordero2019regularity,
  title = {Regularity of monotone transport maps between unbounded domains},
  journal = {Discrete and Continuous Dynamical Systems},
  volume = {39},
  number = {12},
  pages = {7101-7112},
  year = {2019},
  author = {Dario Cordero-Erausquin and Alessio Figalli}
}

@article{ekeland2012comonotonic,
  title={Comonotonic measures of multivariate risks},
  author={Ekeland, Ivar and Galichon, Alfred and Henry, Marc},
  journal={Mathematical Finance: An International Journal of Mathematics, Statistics and Financial Economics},
  volume={22},
  number={1},
  pages={109--132},
  year={2012},
  publisher={Wiley Online Library}
}

@article{chernozhukov2004effects,
  title={The effects of 401 (k) participation on the wealth distribution: an instrumental quantile regression analysis},
  author={Chernozhukov, Victor and Hansen, Christian},
  journal={Review of Economics and statistics},
  volume={86},
  number={3},
  pages={735--751},
  year={2004},
  publisher={MIT Press 238 Main St., Suite 500, Cambridge, MA 02142-1046, USA journals~…}
}

@article{autor2017effect,
  title={The effect of work first job placements on the distribution of earnings: An instrumental variable quantile regression approach},
  author={Autor, David H and Houseman, Susan N and Kerr, Sari Pekkala},
  journal={Journal of Labor Economics},
  volume={35},
  number={1},
  pages={149--190},
  year={2017},
  publisher={University of Chicago Press Chicago, IL}
}

@article{mcfadden1981econometric,
  title={Econometric models of probabilistic choice},
  author={McFadden, Daniel},
  journal={Structural analysis of discrete data with econometric applications},
  volume={198272},
  year={1981},
  publisher={MIT Press}
}

@article{fosgerau2020discrete,
  title={Discrete choice and rational inattention: A general equivalence result},
  author={Fosgerau, Mogens and Melo, Emerson and De Palma, Andre and Shum, Matthew},
  journal={International economic review},
  volume={61},
  number={4},
  pages={1569--1589},
  year={2020},
  publisher={Wiley Online Library}
}

\end{document}